\newcommand{\beqn}{\begin{eqnarray}}
 \newcommand{\eeqn}{\end{eqnarray}}
 \newcommand{\be}{\begin{equation}}
 \newcommand{\ee}{\end{equation}}
 \newcommand{\ba}{\begin{array}}
 \newcommand{\ea}{\end{array}}
\newcommand{\pa}{\partial}
 \newcommand{\re}{\ref}
 \newcommand{\ds}{\displaystyle}
 \newcommand{\la}{\label}
\newcommand{\rIm}{{\rm Im\5}}
 \newcommand{\rRe}{{\rm Re\5}}
\newcommand{\fr}{\frac}
\newcommand{\ov}{\overline}
\newcommand{\supp}{{\rm supp~}}
\newcommand{\bB}{{\bf B}}
\newcommand{\bC}{{\bf C}}
\newcommand{\bP}{{\bf P}}
\newcommand{\bR}{{\bf R}}
\newcommand{\bF}{{\bf F}}
\newcommand{\cO}{{\cal O}}
\newcommand{\cS}{{\cal S}}
\newcommand{\ve}{\varepsilon}
\newcommand{\de}{\delta}
\newcommand{\al}{\alpha}
\newcommand{\om}{\omega}
\newcommand{\rw}{{\rm w}}
\newcommand{\rv}{{\rm v}}
\newcommand{\Om}{\Omega}
\newcommand{\na}{\nabla}
\newcommand{\lam}{\lambda}
\newcommand{\Lam}{\Lambda}
\newcommand{\5}{{\hspace{0.5mm}}}
\newcommand\C{{\mathbb C}}
\newcommand\R{{\mathbb R}}
\newcommand\M{{\mathbb M}}
\renewcommand{\theequation}{\thesection.\arabic{equation}}
\renewcommand{\thesection}{\arabic{section}}
\renewcommand{\thesubsection}{\arabic{section}.\arabic{subsection}}
\newtheorem{theorem}{Theorem}[section]
\renewcommand{\thetheorem}{\arabic{section}.\arabic{theorem}}
\newtheorem{lemma}[theorem]{Lemma}
\newtheorem{remark}[theorem]{Remark}
\newtheorem{cor}[theorem]{Corollary}
\newtheorem{pro}[theorem]{Proposition}
\newcommand{\bp}{\begin{proof}}
\newcommand{\ep}{\end{proof}}
\newcommand{\bc}{\begin{cor}}
\newcommand{\ec}{\end{cor}}
\newcommand{\br}{\begin{remark} }
\newcommand{\er}{\end{remark}}
\newcommand{\const}{\mathop{\rm const}\nolimits}
\begin{document}

\begin{titlepage}

\bigskip\bigskip\bigskip

\begin{center}
{\Large\bf
On asymptotic stability of solitons
\bigskip\\
for nonlinear Schr\"odinger equation}
\vspace{1cm}

{\large A.~I.~Komech}
\footnote{
Supported partly by Alexander von Humboldt Research Award,
by FWF and RFBR grants.
}$^{\!,\5 2}$\\
{\it Fakult\"at f\"ur Mathematik, Universit\"at Wien\\
and Institute for the Information Transmission  Problems  RAS}\\
 e-mail:~alexander.komech@univie.ac.at
\medskip\\
{\large E.~A.~Kopylova}
\footnote{Supported partly by FWF, DFG and
RFBR grants.}\\
{\it Institute for the Information Transmission  Problems  RAS\\
B.Karetnyi per.19.,Moscow 101447, Russia}\\
  e-mail:~ek@vpti.vladimir.ru
\medskip\\
{\large D.~Stuart}
\footnote{Partially supported by EPSRC grant PP/D507366/1}\\
{\it Centre for Mathematical Sciences,\\
Wilberforce Road, Cambridge, CB3 OWA}\\
 e-mail:~D.M.A.Stuart@damtp.cam.ac.uk
\end{center}

\date{}

\markboth{A.Komech, E.Kopylova}
  {On Asymptotic Stability of Solitary Waves
\bigskip\\
for Schr\"odinger Equation coupled
to nonlinear oscillator, II}

\vspace{0.5cm}
\begin{abstract}

 The long-time asymptotics is analyzed for 
finite energy solutions of the 1D Schr\"odinger equation coupled to a 
nonlinear oscillator; mathematically the system under study is a nonlinear 
Schr\"odinger equation, whose nonlinear term has spatial dependence of a 
Dirac delta function. The coupled system is invariant with respect to the 
phase rotation group $U(1)$. This article, which extends the results of a 
previous one, provides a proof of asymptotic stability of solitary wave 
solutions in the case that the linearization contains a single discrete 
oscillatory mode satisfying a non-degeneracy assumption of the type known 
as the Fermi Golden Rule. This latter condition is proved to hold 
generically. 
\end{abstract}

\end{titlepage}
\setcounter{equation}{0}
\section{Introduction and statement of results}
\subsection{Introduction}
\label{intr}
In this article we continue the study, initiated in \cite{BKKS}, of large 
time asymptotics for a model $U(1)$-invariant nonlinear Schr\"odinger 
equation 
\be\la{S}
  i\dot\psi(x,t)=
  -\psi''(x,t)-\de(x)F(\psi(0,t)),\quad x\in\R,
\ee 
Here $\psi(x,t)$ is a
continuous complex-valued wave function and $F$ is a continuous
function. Our main focus is on the role that certain solitary
waves (also referred to as nonlinear bound states, or solitons)
play in the description of the solution for large times. These
solitary waves are solutions of the form $e^{i\om t}\psi_\om(x)$,
where $\psi_\om$ solves a nonlinear eigenvalue problem
\eqref{NEP}. In \cite{BKKS} the asymptotic stability of these
solitary waves was proved under a condition on the nonlinearity
which ensures that the linearization about the solitary wave
consists entirely of continuous spectrum, except for the two
dimensional generalized null space which is always present due to
the $U(1)$ symmetry of the equation. In this article this result
is extended to the case that the spectrum of the linearization
includes an additional discrete component, which satisfies a
non-degeneracy condition related to the Fermi Golden Rule. In
order to explain these results fully we will introduce our
conditions on the nonlinearity $F$, in the remainder of this
section. In the following section we will discuss the basic
properties of the solitary waves. We will then be able to state
our main theorem precisely as theorem \ref{main}. For a more
lengthy discussion of our motivation, and of previous results in
the literature (\cite{BS,PW92,PW94,PW,SW1,SW2}) we refer the
reader to the introduction of \cite{BKKS}.

Let us emphasize that we do not hypothesize any spectral properties of the
linearized equation, but derive explicitly everything that is required. 
This is posibble on account of the simplicity of our
model which allows an exact analysis all spectral properties.
Thus we are able to give a complete proof, without any suppositions, of the 
soliton asymptotics (\ref{sol-as'}), and also of the 
generic validity of the Fermi Golden rule (\ref{FGR}).

It is convenient to rewrite \eqref{S} in real form:
we identify a complex number $\psi=\psi_1+i\psi_2$ with the real two-dimensional
vector $(\psi_1,\psi_2)\in\R^2$ and rewrite (\re{S}) in the vectorial form
\be\la{SV}
  j\dot\psi(x,t)= -\psi''(x,t)-\de(x)\bF(\psi(0,t)),\quad
  j=\left(\ba{rr}
   0  &  -1\\
   1  &   0
   \ea\right),
\ee
where $\bF(\psi)\in\R^2$  is the real vector version of $F(\psi)\in\C$.
We assume that the oscillator force $\bf F$ admits a real-valued potential
\be\la{bF}
  {\bf F}(\psi)=-\na U(\psi),\quad\psi\in\R^2,\quad U\in C^2(\R^2).
\ee
Then (\re{SV}) is formally a Hamiltonian system with  Hamiltonian
\be\la{H}
 {\cal H}(\psi)=\fr 12\int  |\psi'|^2 dx+U(\psi(0))
\ee
We assume that the potential $U(\psi)$ satisfies the
inequality \be\la{U}
   U(z)\ge A-B|z|^2 \quad {\rm with\; some}\quad A\in\R,\quad B>0.
\ee
We also assume that $U(\psi)=u(|\psi|^2)$ with $u\in C^2(\R)$.
Therefore, by (\re{bF}),
\be\la{I}
   F(\psi)= a(|\psi|^2)\psi,\quad\psi\in\C\5,\quad a\in C^1(\R),
\ee
where $a(|\psi|^2)$ is real. Then
$F(e^{i\theta}\psi)=e^{i\theta} F(\psi)$, $\theta\in[0,2\pi]$,
and $e^{i\theta}\psi(x,t)$ is a solution to (\re{S})  if $\psi(x,t)$ is.
Therefore, equation (\ref{S}) is $U(1)$-invariant and the N\"other theorem implies
the conservation of the following {\em charge}:
$$
  {\cal Q}(\psi)=\int |\psi|^2 dx=\const.
$$

Under these conditions the existence of global solutions to the Cauchy problem
for \eqref{S} was proved in \cite{KoK06}. Let  $C_b(\R,X)$ be
the space of bounded continuous functions $\R\to X$ into a Banach
space $X$.
\begin{theorem}[\cite{KoK06}]\label{locex}
Under conditions (\re{bF}), (\re{U}) and  (\re{I}), the following statements hold.\\
i)\;\;\;  For any $\psi_0(x)\in H^1(\R)$ there exists a unique solution
   $\psi(t)=\psi(\,\cdot\,,t)\in C_b(\R,H^1(\R))$
   to the equation (\ref{S}) with initial condition
$\psi(x,0)=\psi_0(x)$.\\
ii)\;\;  The charge ${\cal Q}(\psi(t))$ and
Hamiltonian ${\cal H}(\psi(t))$
are constant along the solution.\\
iii)\;  There exists $\Lambda(\psi_0)>0$ such that the following a priori bound holds:
   \be\label{a-priori-h1}
     \sup\limits\sb{t\in\R}\Vert{\psi(t)}\Vert\sb{H^1(\R)}\le \Lam(\psi_0)<\infty.
   \ee
\end{theorem}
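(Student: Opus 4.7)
The argument proceeds in three main steps: local existence by reduction to a scalar Volterra equation for the trace $z(t):=\psi(0,t)$; conservation of ${\cal H}$ and ${\cal Q}$ via a mollification argument; and an \emph{a priori} $H^1$ bound combining these conservation laws with the one-dimensional Sobolev trace inequality.

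For local existence I would apply Duhamel's principle with respect to the free Schr\"odinger group $W(t):=e^{it\pa_x^2}$, obtaining
\be
\psi(\cdot,t)=W(t)\psi_0+i\int_0^t W(t-s)\bigl[\de(\cdot)\bF(\psi(0,s))\bigr]\,ds.
\ee
The explicit kernel $(W(\tau)\de)(x)=(4\pi i\tau)^{-1/2}e^{ix^2/(4\tau)}$, evaluated at $x=0$, becomes the integrable Abel kernel $\propto\tau^{-1/2}$, so $z(t)$ satisfies a closed scalar Volterra equation of the second kind. Since $\bF\in C^1$ and $(W(t)\psi_0)(0)$ is continuous in $t$ via $H^1(\R)\hookrightarrow C_b(\R)$, a contraction argument on $C([0,T])$ for sufficiently small $T$ produces a unique local solution $z$. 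Substituting $z$ back into the Duhamel formula then reconstructs $\psi(\cdot,t)\in C_b([0,T],H^1(\R))$: the free flow preserves $H^1$, and time-integration of the Abel kernel against the continuous function $\bF(z(\cdot))$ yields an inhomogeneous contribution that lies in $H^1$ for each $t$ and is continuous in $t$. Uniqueness in $H^1$ follows because the trace of any $H^1$-solution satisfies the same Volterra equation.

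Conservation of ${\cal Q}$ and ${\cal H}$ is then proved by mollifying $\de$ to $\de_\ve(x):=\ve^{-1}\rho(x/\ve)$ with $\rho\in C^\infty_c$, solving the regularized semilinear NLS (which admits classical solutions for smoother data), verifying conservation by direct integration by parts using $\bF=-\na U$, and passing $\ve\to 0$ using continuous dependence on the nonlinearity (a by-product of the contraction argument) to transfer the identities to the $H^1$ limit.

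Finally, for (\ref{a-priori-h1}), conservation of ${\cal Q}$ fixes $\|\psi(t)\|_{L^2}=\|\psi_0\|_{L^2}$, and the sharp one-dimensional trace estimate $|\psi(0,t)|^2\le\|\psi(t)\|_{L^2}\|\psi'(t)\|_{L^2}$, combined with (\ref{U}) and conservation of ${\cal H}$, gives
\be
\tfrac12\|\psi'(t)\|_{L^2}^2 \le {\cal H}(\psi_0)-A+B\|\psi_0\|_{L^2}\|\psi'(t)\|_{L^2},
\ee
after which Young's inequality absorbs the linear term and yields a uniform $H^1$ bound depending only on $\psi_0$. Global existence then follows by standard continuation. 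The chief technical obstacle is the first step: because $\de\notin L^2(\R)$ one cannot iterate Picard for $\psi$ itself in $H^1$, and the essential structural fact exploited throughout is that both the nonlinearity and the potential $U$ depend on $\psi$ only through the scalar trace $\psi(0,t)$, which Sobolev interpolation keeps under $H^1$ control.
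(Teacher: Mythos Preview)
The paper itself does not prove this theorem: it is quoted from \cite{KoK06} and used as a black box, so there is no argument in the text to compare against. Your outline is the standard strategy for point-interaction NLS and is essentially correct.

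The one step that is more delicate than you indicate is the assertion that the Duhamel contribution $\int_0^t W(t-s)\bigl[\de(\cdot)F(z(s))\bigr]\,ds$ lies in $H^1(\R)$ for each $t$. Since $W(\tau)\de\notin L^2(\R)$ for any $\tau>0$, this does not follow from the Abel-kernel bound $\|W(\tau)\de\|_{L^\infty}=O(\tau^{-1/2})$ alone; a time integral of functions that are individually not in $L^2$ need not be in $L^2$. It is nevertheless true, and is most cleanly checked on the Fourier side: the transform of the Duhamel term at frequency $\xi$ equals $\int_0^t e^{-i\xi^2(t-s)}F(z(s))\,ds$, and after the substitution $\eta=\xi^2$ the $\dot H^1$ norm is controlled by the $H^{1/4}(\R)$ norm of $s\mapsto F(z(s))\mathbf{1}_{[0,t]}(s)$. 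The latter is finite because a bounded, compactly supported function with at worst jump discontinuities at the endpoints belongs to $H^s(\R)$ for every $s<\tfrac12$. Continuity in $t$ with values in $H^1$ then follows by the same estimate applied to differences. The remaining steps --- mollification for the conservation laws, and the combination of (\ref{U}) with the trace inequality $|\psi(0)|^2\le\|\psi\|_{L^2}\|\psi'\|_{L^2}$ to obtain (\ref{a-priori-h1}) --- are routine.
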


In \S\ref{swsec} we describe all
nonzero solitary waves, and then formulate the main theorem
in \S\ref{mainstat}.
\subsection{Solitary waves}
\label{swsec}

Equation (\re{S}) admits finite energy  solutions of type $\psi_\om(x)e^{i\om t}$, called
{\it solitary waves} or {\it nonlinear eigenfunctions}.
The frequency  $\om$ and the amplitude $\psi_\om(x)$ solve the following
{\it nonlinear eigenvalue problem}:
\be\la{NEP}
  -\om\psi_\om(x)= -\psi_\om''(x)-\de(x)F(\psi_\om(0)),\quad x\in\R.
\ee
It is straightforward to check (see \cite{BKKS}) that the set of all
nonzero solitary waves consists of functions
$\psi_\om(x)e^{i\theta}$, $\psi_\om(x)=C(\om)e^{-\sqrt\om |x|}$, $C>0,\,\om>0$, where
$\sqrt\om =a(C^2)/2>0,$
and  $\theta\in [0,2\pi]$ is arbitrary.
This condition means that $C$ is restricted to lie in a set which, in the case of
polynomial $F$, is a finite union of one-dimensional intervals.
Notice that $C=0$ corresponds to the zero function $\psi(x)=0$ which
is always a solitary  wave  as $F(0)=0$, and for $\om\le 0$ only the zero solitary wave exists.
The real form of the solitary wave is $e^{j\theta}\Psi_\om$ where $\Psi_\om=(\psi_{\omega}(x),0)$.
We will also need the following lemma from \cite{BKKS}:
\begin{lemma}\la{int-dif}
For $C> 0$, $a>0$ we have
$$
  \pa_\om\int|\psi_\om(x)|^2 dx>0\quad\hbox{if}\;\; 0<a'<a/C^2.\la{syc2}
$$
\end{lemma}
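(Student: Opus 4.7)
My plan is a direct computation, treating $C$ as an implicit function of $\omega$ through the constraint $2\sqrt{\omega}=a(C^2)$ that defines solitary waves.

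First, I would evaluate the charge integral explicitly. Since $\psi_\omega(x)=C(\omega)e^{-\sqrt{\omega}|x|}$, one computes
\[
  Q(\omega):=\int_{\R}|\psi_\omega(x)|^2\,dx
  =C^2\cdot 2\int_0^\infty e^{-2\sqrt{\omega}\,x}\,dx
  =\frac{C^2}{\sqrt{\omega}}.
\]
So the question reduces to showing $\partial_\omega\bigl(C^2/\sqrt{\omega}\bigr)>0$.

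Next, I would differentiate the defining relation $2\sqrt{\omega}=a(C^2)$ with respect to $\omega$ to express $\partial_\omega(C^2)$ in terms of $\omega$, $a$, and $a'=a'(C^2)$. This yields
\[
  \frac{1}{\sqrt{\omega}}=a'(C^2)\,\partial_\omega(C^2)
  \quad\Longrightarrow\quad
  \partial_\omega(C^2)=\frac{1}{a'\sqrt{\omega}}.
\]
Inserting into the derivative of $Q$ gives
\[
  \partial_\omega Q
  =\frac{\partial_\omega(C^2)}{\sqrt{\omega}}-\frac{C^2}{2\omega^{3/2}}
  =\frac{1}{a'\,\omega}-\frac{C^2}{2\omega^{3/2}}
  =\frac{1}{\omega}\left(\frac{1}{a'}-\frac{C^2}{2\sqrt{\omega}}\right).
\]

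Finally, substituting $2\sqrt{\omega}=a$ in the last term turns the bracket into $\tfrac{1}{a'}-\tfrac{C^2}{a}$, which is positive precisely when $a'<a/C^2$; combined with the hypothesis $a'>0$ this proves $\partial_\omega Q>0$. There is no real obstacle here — the only thing to watch is that the hypothesis $a'>0$ is needed both to invert $a'$ when solving for $\partial_\omega(C^2)$ and to ensure the inequality $1/a'>C^2/a$ is equivalent to $a'<a/C^2$; otherwise the computation is entirely mechanical.
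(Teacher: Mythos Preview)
Your computation is correct. Note that the present paper does not actually prove this lemma: it is quoted from \cite{BKKS}, so there is no ``paper's own proof'' to compare against here. Your direct evaluation of $Q(\omega)=C^2/\sqrt{\omega}$ followed by implicit differentiation of the constraint $2\sqrt{\omega}=a(C^2)$ is the natural argument and is exactly what one would expect the cited reference to contain.
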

Linearization at the solitary wave $e^{j\theta}\Psi_\om$ leads to the operator
\be\la{B}
{\bf B}=-\ds\fr{d^2}{dx^2}+\om-\de(x)[a(C^2)+2a'(C^2)C^2P_1]=
\left(
   \ba{cc}
   {\bf D}_1  &       0    \\
        0     &   {\bf D}_2
   \ea
   \right),
\ee
where $P_1$ is the projector in $\R^2$ acting as
$(\chi_1,\chi_2)\mapsto (\chi_1, 0)$, and (see \cite{BKKS})
$$
   {\bf D}_1=-\ds\fr{d^2}{dx^2}+\om-\de(x)[a+2a'C^2],\quad
   {\bf D}_2=-\ds\fr{d^2}{dx^2}+\om-\de(x)a.
$$
Let $\bC=j^{-1}\bB$. The spectrum and eigenfunctions of $\bC$ are
computed explicitly in  \cite{BKKS}, and the results we need are
summarized in the \S\ref{subspace} and the appendices. Briefly,
\begin{itemize}
\item
the continuous spectrum coincides with
${\mathcal C}_+\cup{\mathcal C}_-$ where
${\mathcal C}_+=[i\om, i\infty)$, and
${\mathcal C}_-=(-i\infty,-i\om]$;
\item
the discrete spectrum always contains zero
on account of the circular symmetry of the problem, and there is a
corresponding generalized eigenspace of dimension at least two.
If $a'>a/C^2$ there is a positive eigenvalue corresponding to linear
instability of the solitary wave, while for $a'<a/C^2$ the discrete
spectrum consists either only of zero, or contains in addition two
pure imaginary eigenvalues.
\end{itemize}
\subsection{Statement of  main theorem}\la{mainstat}
Previously, in \cite{BKKS}, we considered the case when
$a'\in (-\infty,0)\cup(0,a/\sqrt 2C^2)$.
In which case the operator $\bC$ has no discrete spectrum except zero.
In the present paper we will consider the case when
\be\la{Co}
  a'\in\big(a/\sqrt 2 C^2,~a\sqrt 2(1+\sqrt 3)/4C^2\big).
\ee
In this case, there are, in addition to zero,
2 simple pure imaginary eigenvalues $\pm i\mu$,
which satisfy the property $2\mu>\om$.
If assumption (\ref{Co}) is true for a fixed value $\om_0$, it also true
for values of $\om$ in a small interval centered at $\om_0$.
Let $u(x,\om)=(u_1,u_2)$ be the eigenvector of $\bC$ associated
to $i\mu$. We choose the function $u_1(x)$ to be real,
in which case $u_2(x)$ is purely imaginary.
Then  $u^*:=(u_1,-u_2),$ is the
eigenvector  associated to $-i\mu$ (see appendix A).
We consider the initial value $\psi(x,0)=\psi_0(x)$
to be of the form
\be\la{id-form}
  \psi_0(x)=e^{j\theta_0}
\Big[\Psi_{\om_0}(x)+z_0u(x,\om_0)+\ov z_0u^*(x,\om_0)+f_0(x)\Big],
\ee
where  $f_0$ belongs to the eigenspace associated
to the continuous spectrum of $\bC(\om_0)$. We assume that
$z_0$ and $f_0$ are
sufficiently small, and also assume a non-degeneracy condition
which we now explain.
Let $\langle\cdot,\cdot\rangle$ denote the Hermitian scalar product
in $L^2$ of $\C^2$-valued
functions, and $(u,v)=u_1v_1+u_2v_2$ for $u,v\in\C^2$.
Let $E_2[f,f]$ be the quadratic terms coming from the Taylor
expansion of the nonlinearity:
\be\la{e2d}
  E_2[f,f]=\delta(x)[a'(C^2)(f,f)\Psi_{\om_0}+2a''(C^2)(\Psi_{\om_0},f)^2\Psi_{\om_0}
  +2a'(C^2)(\Psi_{\om_0},f)f],\;f\in\C^2.
\ee
The non-degeneracy condition has the form
\be\la{FGR}
  \langle E_2[u(\om_0),u(\om_0)],\tau_+(2i\mu_0)\rangle\not =0,
\ee
where $\tau_+(2i\mu_0)$ is the eigenfunction associated to $2i\mu_0=2i\mu(\om_0)$.
This condition should be thought of as a nonlinear version of the
Fermi Golden Rule of quantum mechanics \cite{MS}.
In appendix E we express (\ref{FGR}) in terms of $C$ and $a(C^2)$,
and hence show that 
{\it the Fermi Golden Rule holds generically for polynomial nonlinearity.}\\
Let us introduce the weighted Banach space $L^p_{\beta}$ with the finite norm
$$
  \Vert f\Vert_{L^p_{\beta}}=\Vert (1+|x|)^{\beta} f(x)\Vert_{L^p}
$$
Our main  theorem is the following:
\begin{theorem}\label{main}
   Let conditions (\re{bF}), (\re{U})  and  (\re{I}) hold,
   $\beta> 2$ and $\psi(x,t)\in C(\R,H^1)$ be the solution to the equation (\ref{SV})
   with initial value $\psi_0(x)=\psi(x,0)\in H^1\cap L^1_\beta$ of the form (\ref{id-form})
   which is close to a solitary wave $e^{j\theta_0}\Psi_{\om_0}$:
   \be\la{close}
    |z(0)|\le\ve^{1/2},\quad \Vert f_0\Vert_{L^1_{\beta}}\le c\ve^{3/2}.
   \ee
   Assume further that the spectral condition (\ref{Co}) and the non-degeneracy
   condition (\ref{FGR}) hold for the solitary wave with $C=C(\om_0)=C_0$. Then
   for $\ve$ sufficiently small the solution admits the following scattering
   asymptotics in $C_b(\R)\cap L^2(\R)$:
   \be\la{sol-as}
    \psi(x,t)=e^{j\varphi_{\pm}(t)}\Psi_{\om_{\pm}}(x)
   +e^{j^{-1}Lt}\Phi_{\pm}+ O(t^{-\nu}),
   \quad t\to\pm\infty,
   \ee
  with any $0<\nu<1/4$, where  $L=-\frac{\partial^2}{\partial x^2}$,
  $\Phi_{\pm}\in C_b(\R)\cap L^2(\R)$ are the corresponding asymptotic scattering states and
  $\varphi_{\pm}(t)=\om_{\pm}t+p_{\pm}\log(1+k_{\pm}t)+\varkappa_{\pm}$,
  for some constants $\om_{\pm}$, $p_{\pm}$, $k_{\pm}$, $\varkappa_{\pm}$.
\end{theorem}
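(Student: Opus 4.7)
The plan is to adapt the symplectic modulation framework of \cite{BKKS} by incorporating the discrete oscillatory mode $\pm i\mu$ and extracting its damping through a normal-form reduction in the spirit of Buslaev--Perelman and Soffer--Weinstein. First I would write the solution, for each $t$ in a suitable lifespan, in the form
\[
  \psi(x,t)=e^{j\Theta(t)}\bigl[\Psi_{\om(t)}(x)+z(t)u(x,\om(t))+\ov{z(t)}u^*(x,\om(t))+f(x,t)\bigr],
\]
and choose the modulation parameters $(\om(t),\Theta(t),z(t),\ov z(t))$ by imposing four symplectic orthogonality conditions that force $f(\cdot,t)$ to lie in the continuous spectral subspace of $\bC(\om(t))$. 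Differentiating these constraints in time produces an ODE system $\dot\om,\dot\Theta-\om,\dot z-i\mu z$ driven by the nonlinear remainder $N[f,z,\om]$ built from the Taylor expansion of $\bF$ at $\Psi_{\om(t)}$, and a linear Schr\"odinger-type equation $\dot f=\bC(\om(t))f+(\text{projected nonlinearity})$. The modulation analysis should give the bounds $|\dot\om|+|\dot\Theta-\om|=\cO(|z|^2+\|f\|^2)$ and $\dot z-i\mu z=\cO(|z|^2+|z|\|f\|+\|f\|^2)$.

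The heart of the argument is the Poincar\'e normal-form change of variables $f=h+\Phi_2[z,\om]$, where $\Phi_2=z^2\tau_+(2i\mu)_{\rm reg}+\ov z{}^2\tau_-(-2i\mu)_{\rm reg}+|z|^2(\cdots)$ is chosen so that the dangerous quadratic source $z^2 E_2[u,u]$ in the $f$-equation is absorbed. Because $2\mu>\om$, the symbol $2i\mu$ lies in the continuous spectrum of $\bC$ and the resolvent must be taken as a limiting absorption value; substituting the normal form back into the ODE for $z$ then produces, at cubic order,
\[
  \dot z= i\mu(\om) z -\Gamma\, z|z|^2+\text{remainder},\qquad 2\rRe\Gamma=c\,|\langle E_2[u,u],\tau_+(2i\mu)\rangle|^2>0,
\]
where positivity is exactly the Fermi Golden Rule condition \eqref{FGR}. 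From this normal form one reads off the dissipative decay $|z(t)|^2\le C\ve/(1+\ve^2 t)$, which means $|z(t)|\lesssim (1+t)^{-1/2}$.

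Next I would treat the dispersive component $h$. Freezing $\om=\om_\infty$ to be determined, one has $\dot h=\bC(\om_\infty)h+R(t)$ where $R$ collects cubic nonlinearities, modulation errors, and $\om$-drift corrections. Using the weighted $L^p$ time-decay estimates for $e^{\bC(\om)t}P_c$ on the continuous subspace (established in \cite{BKKS} for this explicit model) — of the form $\|e^{\bC t}P_c f\|_{L^\infty_{-\beta}}\lesssim (1+t)^{-1/2}\|f\|_{L^1_\beta}$ — and Duhamel, one obtains $\|h(t)\|_{L^\infty_{-\beta}}\lesssim \ve^{3/2}(1+t)^{-1/2}$. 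The whole analysis is then closed by a bootstrap on the majorant
\[
  M(T)=\sup_{0\le t\le T}\Bigl[(1+t)^{1/2}|z(t)|+(1+t)^{1/2}\|h(t)\|_{L^\infty_{-\beta}}\Bigr],
\]
showing $M(T)\le C\ve^{1/2}$ uniformly. Integrating the modulation ODEs against these decay rates gives $\om(t)\to\om_\pm$, together with the logarithmic phase $\Theta(t)=\om_\pm t+p_\pm\log(1+k_\pm t)+\varkappa_\pm$ arising from the $|z(t)|^2\sim 1/(1+t)$ correction to $\dot\Theta$; and the usual wave-operator argument for $h$ produces the free asymptotic profile $\Phi_\pm$ so that $h(t)-e^{j^{-1}Lt}\Phi_\pm\to 0$ in $C_b\cap L^2$, yielding \eqref{sol-as} with any $\nu<1/4$.

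The main obstacle is the resonant quadratic coupling: because $2\mu>\om$ puts the second harmonic of the internal mode into the continuous spectrum, the operator $\bC(\om)-2i\mu$ is not invertible on the continuous subspace and one must use a limiting-absorption resolvent. Proving that this resolvent exists, is bounded in the appropriate weighted spaces, and that the resulting $\Phi_2$ is smooth in $\om$ with controllable $\om$-derivatives — and verifying that the effective damping coefficient in the normal form really equals $|\langle E_2[u,u],\tau_+(2i\mu)\rangle|^2$ up to a nonzero prefactor — is the technical core; once these analytic facts are in hand, together with the explicit continuous-spectrum estimates available in this model, the bootstrap and scattering steps proceed along standard lines.
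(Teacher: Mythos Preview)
Your overall architecture matches the paper's: symplectic modulation, a Poincar\'e normal form on the dispersive component to expose the resonant $|z|^2z$ damping via the limiting-absorption resolvent at $2i\mu$, a Riccati-type bound for $|z|^2$, and a majorant/bootstrap to close everything, followed by a wave-operator construction. The identification of the Fermi Golden Rule coefficient with $|\langle E_2[u,u],\tau_+(2i\mu)\rangle|^2$ is exactly what the paper proves in Lemma~\ref{pv}.

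There is, however, a genuine quantitative gap that would prevent your bootstrap from closing. You invoke a linearized decay
\[
  \|e^{\bC t}P_c f\|_{L^\infty_{-\beta}}\lesssim (1+t)^{-1/2}\|f\|_{L^1_\beta},
\]
but the estimate actually available in this model (Proposition~\ref{TD}, carried over from \cite{BKKS}) is the much stronger $(1+t)^{-3/2}$ rate from $\mathcal M_\beta$ to $L^\infty_{-\beta}$, and the new companion bound (Proposition~\ref{l1}) for $e^{\bC t}(\bC\mp 2i\mu-0)^{-1}\bP^c$ is likewise $(1+t)^{-3/2}$. This is not cosmetic: with only $(1+t)^{-1/2}$, your majorant $M(T)=\sup_t\bigl[(1+t)^{1/2}|z|+(1+t)^{1/2}\|h\|_{L^\infty_{-\beta}}\bigr]$ fails. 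After the normal form the source in the $h_1$-equation still contains $|z|\,\|f\|_{L^\infty_{-\beta}}\sim M^2(1+t)^{-1}$ (from $E_2[\rw,f]$ and modulation corrections), and Duhamel against a $(1+t-s)^{-1/2}$ kernel produces a $\log(1+t)$ loss. Worse, in the Riccati equation $\dot y=-2\,\mathrm{Re}\,\Gamma\,y^2+Y_R$ the remainder inherits a term of size $|z|^2\|h_1\|\sim (1+t)^{-3/2}$, which eventually dominates $y^2\sim (1+t)^{-2}$, so the decay $|z|\sim t^{-1/2}$ cannot be propagated. (Relatedly, your Riccati scaling $|z|^2\le C\ve/(1+\ve^2 t)$ should read $\ve/(1+\ve t)$.)

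The paper's fix is precisely to exploit the $(1+t)^{-3/2}$ rate and organize the majorants around the scale $\rho=\ve/(1+\ve t)$: one tracks $|\om-\om_T|\lesssim\rho$, $|z|\lesssim\rho^{1/2}$, and $\|h_1\|_{L^\infty_{-\beta}}\lesssim\rho^{3/2}$ (see \eqref{M0}--\eqref{M2}). With this hierarchy the source for $h_1$ is $\cO(\rho^{3/2})$, the convolution lemma $\int_0^t(1+t-s)^{-3/2}(1+\ve s)^{-3/2}\,ds\lesssim(1+\ve t)^{-3/2}$ closes the $h_1$-bound, and the Riccati remainder satisfies $|Y_R|=\cO\bigl(\rho^{5/2}/\sqrt{\ve t}\bigr)=o(y^2)$ (Lemma~\ref{YR-est}). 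Once you replace your $(1+t)^{-1/2}$ by $(1+t)^{-3/2}$ and redesign the majorant accordingly, the rest of your outline --- the freezing at $\om_T$, the normal-form correctors, the logarithmic phase, and the scattering construction --- goes through and coincides with the paper's argument.
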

The asymptotics (\ref{sol-as}) can be rewritten in terms of the
original complex notation as:
\be\la{sol-as'}
 \psi(x,t)=e^{i\varphi_{\pm}(t)}\psi_{\om_{\pm}}(x)
+W(t)\phi_{\pm}+ O(t^{-\nu}),
  \quad t\to\pm\infty,
\ee
where $W(t)$ is the dynamical group of the free Schr\"odinger equation,
and $\phi_{\pm}=\Phi_\pm^1+i\Phi_\pm^2$
($\Phi_\pm^k$, $k=1,2$, being the components of the
vector-function $\Phi_{\pm}$).
\setcounter{equation}{0}
\section{Linearization}
In this section we summarize the spectral properties of the
operator $\mathbf{C}$ and then give some estimates for the
linearized evolution.
The proof of these properties can be found in \cite{BKKS}, with the
exception of proposition \ref{l1} which is proved in appendix C.
\subsection{Spectral properties}
\label{subspace}
The linearized equation reads
\be\la{lin4}
   \dot\chi(x,t)={\bf C}\chi(x,t),~~~~~{\bf C}:=j^{-1}\bB=
   \left(
   \ba{rr}
    0          &    {\bf D}_2\\
   -{\bf D}_1  &        0
   \ea
   \right).
\ee
Theorem \ref{locex} generalizes to the equation (\ref{lin4}):
the equation admits unique solution $\chi(x,t)\in C_b(\R,H^1)$
for every initial function $\chi(x,0)=\chi_0\in H^1$.
Denote by $e^{{\bf C}t}$ the dynamical group of equation (\ref{lin4})
acting in the space $H^1$; for $T>0$ there exists $c_T>0$
such that
\be\la{t-small}
  \Vert e^{{\bf C}t}\chi_0\Vert_{H^1}\le c_T\Vert\chi_0\Vert_{H^1},\qquad |t|\leq T.
\ee
The resolvent $\bR(\lam):=(\bC-\lam)^{-1}$ is an  integral operator
with matrix valued integral kernel
\be\la{nR}
  {\bf R}(\lam,x,y)=\Gamma(\lam,x,y)+P(\lam,x,y),
\ee
where
\beqn\nonumber
   \Gamma(\lam,x,y)&=&\left( \ba{cc}
   \ds\frac 1{4k_+}  &  -\ds\frac 1{4k_-}\\
   \ds\frac i{4k_+}  &  \ds\frac i{4k_-}
   \ea\right)\left( \ba{cc}
   e^{ik_+|x-y|}-e^{ik_+(|x|+|y|)}  &  -i(e^{ik_+|x-y|}-e^{ik_+(|x|+|y|)})\\\\
   e^{ik_-|x-y|}-e^{ik_-(|x|+|y|)}  &   i(e^{ik_-|x-y|}-e^{ik_-(|x|+|y|)})
   \ea\right)\\
\nonumber
   P(\lam,x,y)&=&\frac 1{2D} \left( \ba{cc}
   e^{ik_+|x|}    &   e^{ik_-|x|}\\
   ie^{ik_+|x|}   &  -ie^{ik_-|x|}
   \ea\right)
   \left( \ba{cc}
   i\al-2k_-       &    i\beta\\
   -i\beta         &   -i\al+2k_+
   \ea\right)
   \left( \ba{cc}
   e^{ik_+|y|}      &  -ie^{ik_+|y|}      \\
   e^{ik_-|y|}       &  ie^{ik_-|y|}
   \ea\right).
\eeqn
The notation is as follows: as explained 
already in \S\ref{swsec}, the continuous spectrum consists of ${\cal 
C}_+\cup{\cal C}_-$, and correspondingly $k_\pm(\lam)=\sqrt{-\om\mp i\lam}$ 
is (respectively) the square root defined with a cut in the complex $\lam$ 
plane so that $k_\pm(\lambda)$ is analytic on ${\C}\setminus {\cal C}_\pm$ 
and ${\rm Im}\5 k_\pm(\lambda)>0 $ for $\lam\in\C\setminus {\cal C}_\pm$. 
The constants $\al$, $\beta$ and $D=D(\lam)$ are given by the formulas
$$
 \al=a+a'C^2,\;\beta=a'C^2,\;D=2i\al(k_++k_-)-4k_+k_-+\al^2-\beta^2. 
$$ 
In addition to this continuous spectrum, there is discrete spectrum, which
appears in this formalism as the set of
poles of $\mathbf{R}(\lambda)$; these poles in turn
correspond to the roots of the determinant $D(\lam)$. From the analysis
in \cite{BKKS} we know that
if $a'\in (a/{\sqrt 2 C^2},a/C^2)$,
then the determinant has the following roots:
$\lam =0$ with  multiplicity $2$ and two pure imaginary roots
\be\la{gaga}
\lam=\pm i\mu=\pm i\frac{\beta}2\sqrt{a^2-\beta^2},\quad\mu<\om.
\ee
with  corresponding eigenfunctions $u$ and $u^*$ which are displayed in 
\eqref{u-def} in the appendix. Note that the spectral condition (\ref{Co}) 
is more restrictive: it implies in addition that $2\mu>\om$, as can be 
verified by a simple computation.

The generalised null space $X^0$ of the non-self-adjoint operator ${\bf C}$ is two dimensional,
is spanned by $j\Psi_\om,\partial_\om\Psi_\om$, and
$$
{\bf C} j\Psi_\om=0,\qquad {\bf C} \partial_\om\Psi_\om=j\Psi_\om.
$$
The symplectic form $\Omega$ for the vectors $\psi$ and $\eta$ is defined by
\be\la{symp}
   \Om(\psi,\eta)=\langle\psi,j\eta\rangle
\ee
By Lemma \ref{int-dif}
\be\label{nondeg}
   \Omega(j\Psi_\om,\partial_\om\Psi_\om)
   =-\frac{1}{2}\partial_\om\int |\psi_\om|^2dx\not=0.
\ee
Hence, the symplectic form $\Omega$ is nondegenerate on $X^0$,
i.e. $X^0$ is a symplectic subspace. There exists a symplectic projection operator
${\bf P}^0$ from $L^2({\R})$ onto $X^0$ represented by the formula
\be\label{defsp}
  {\bf P}^0\psi=\frac 1{\langle\Psi_\om,\partial_\om\Psi_\om\rangle}
  [\langle\psi,j\partial_\om\Psi_\om\rangle j\Psi_\om
  +\langle \psi,\Psi_\om\rangle\partial_\om\Psi_\om].
\ee
\begin{remark}
  Since $j\Psi_\om, \partial_\om\Psi_\om$ lie in $H^1({\R})$ the operator $\bP^0$
  extends uniquely to define a continuous linear map $H^{-1}({\R})\to X^0$.
  In particular this operator can be applied to the Dirac measure $\delta(x)$.
\end{remark}
Denote by $X^1$ the eigensubspace corresponding to the two pure imaginary 
eigenvalues, and by $\bP^1$ a symplectic projection operator from $L^2({\R})$ onto $X^1$.
It may be represented by the formula
\be\label{defspd}
  \bP^1\psi=\frac{\langle\psi,ju\rangle}{\langle u,ju\rangle}u
  +\frac{\langle\psi,ju^*\rangle}{\langle u^*,ju^*\rangle}u^*
\ee
since
$\langle u,ju^*\rangle=0$, and
 $\langle u,ju\rangle=\ov{\langle u^*,ju^*\rangle}\not=0$ by (\ref{uu}).
Finally, $\bP^c=1-\bP^0-\bP^1$ is the symplectic projector onto
the continuous spectral subspace.
\subsection{Estimates for  linearized evolution}
\label{rpsec}
We now recall from \cite{BKKS} some estimates on the  group  $e^{{\bf C}t}$
which will be needed in \S\ref{maj}.
First we recall a bound for the action of $e^{\bC t}$ on the Dirac distribution
$\de=\de(x)$ for small $t$.  Lemma 8.1 from \cite{BKKS} gives the following
small $t$ behaviour:
\be\la{delta1}
  \Vert e^{{\bf C}t}\delta\Vert_{L^{\infty}}={\cO}(t^{-1/2}),~~~~~~~t\to 0.
\ee
Second we list the large time dispersive estimates. To do this
let us introduce,
for $\beta\ge 2$, a Banach space ${\cal M}_\beta$, which is the subset
of distributions which are  linear combinations of $L^1_\beta$ functions and
multiples of the Dirac distribution at the origin with the norm:
\be\la{cM}
  \Vert \psi+C\delta(x)\Vert_{{\cal M}_\beta}:=\Vert\psi\Vert_{L^1_\beta}+|C|.
\ee
\begin{pro}\la{TD}(see \cite{BKKS})
Assume that the spectral condition (\ref{Co}) holds. Then for
$h\in{\cal M}_{\beta}$ with $\beta\ge2$  the following bounds hold:
\be\la{b1}
    \Vert e^{{\bf C}t}{\bf P}^ch\Vert_{L^{\infty}_{-\beta}}
    +\Vert e^{{\bf C}t}{\bf\Pi}^{\pm}h\Vert_{L^{\infty}_{-\beta}}
  +\Vert e^{{\bf C}t}{\bf C}^{-1}{\bf\Pi}^{\pm}h\Vert_{L^{\infty}_{-\beta}}
    \le c (1+t)^{-3/2}\Vert h\Vert_{{\cal M}_{\beta}},
\ee
where ${\bf\Pi}^{+}$ (resp. ${\bf\Pi}^{-}$) is
the spectral projection operator
onto the spectral subspace associated to ${\cal C}_+$ (resp. ${\cal C}_-$),
the positive (resp. negative) part of the
continuous spectrum.
\end{pro}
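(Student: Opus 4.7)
The plan is to derive \re{b1} from Stone's formula and the explicit resolvent kernel \re{nR} by oscillatory-integral analysis. For short times $t\le 1$ the bound is immediate from the small-time estimate \re{delta1} together with boundedness of the spectral projectors $\bP^c$, ${\bf\Pi}^\pm$ on $\mathcal{M}_\beta$, so the task reduces to the regime $t\ge 1$.

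Since $\bC$ has purely imaginary continuous spectrum $\mathcal{C}_+\cup\mathcal{C}_-$, Stone's formula gives
\begin{equation*}
  e^{\bC t}{\bf\Pi}^\pm h=\frac{1}{2\pi i}\int_{\mathcal{C}_\pm}e^{\lambda t}\bigl[\bR(\lambda+0)-\bR(\lambda-0)\bigr]h\,d\lambda.
\end{equation*}
Parametrizing $\mathcal{C}_+$ by $\lambda=i(\omega+k^2)$, $k\in[0,\infty)$, one has $k_-(\lambda)=i\sqrt{2\omega+k^2}$ analytic across the cut, while $k_+(\lambda\pm 0)=\pm k$ produces the only jump. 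Inserting \re{nR} and collecting the jump expresses the operator as an oscillatory integral with kernel
\begin{equation*}
   K_t(x,y)=\int_0^\infty e^{i(\omega+k^2)t}\,k\,a(k,x,y)\,dk,
\end{equation*}
where the prefactor $k$ comes from $d\lambda=2ik\,dk$ and $a(k,x,y)$ is a finite sum of bounded phases $e^{\pm ik|x|}, e^{\pm ik|y|}, e^{\pm ik|x-y|}, e^{-\sqrt{2\omega+k^2}|\cdot|}$ weighted by rational functions of $k$.

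The crucial analytic step is to show that $a(k,x,y)$ is $C^2$ in $k$ uniformly on $[0,\infty)$. At the threshold $k=0$ this is equivalent to the absence of a zero-energy resonance, i.e.\ $D(i\omega)\neq 0$; under \re{Co} the enumeration of roots of $D$ on the imaginary axis ($\lambda=0$ and $\lambda=\pm i\mu$ with $\mu<\omega$) already excludes $\lambda=\pm i\omega$, so this holds. For large $k$ one uses $|D(\lambda)|\sim 2k^2$ to control derivative growth. With $a$ thus $C^2$-regular, the bound $(1+t)^{-3/2}$ follows by standard van der Corput applied to $\int_0^\infty e^{ik^2t}k\,a\,dk$: the phase is stationary only at $k=0$, where the amplitude vanishes linearly, and two integrations by parts against the identity $e^{ik^2t}=(2ikt)^{-1}\partial_k e^{ik^2 t}$ produce the factor $t^{-3/2}$. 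Differentiating $e^{\pm ik|x|}$ and $e^{\pm ik|y|}$ twice generates factors $|x|^2, |y|^2$, absorbed exactly by the weights $L^1_\beta\to L^\infty_{-\beta}$ when $\beta\ge 2$; the $\delta$-component of the $\mathcal{M}_\beta$-norm is handled by evaluating $K_t(x,0)$. For $e^{\bC t}\bC^{-1}{\bf\Pi}^\pm h$ the same argument applies with an additional factor $\lambda^{-1}=(i(\omega+k^2))^{-1}$, which is smooth on the contour since $\omega>0$ and therefore does not degrade the rate.

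The main obstacle I expect is the algebraic verification that the individually singular terms in \re{nR} (notably those containing $k_+$ in denominators in the $\Gamma$ part) combine to yield a genuinely smooth amplitude at $k=0$; once this cancellation is exhibited, the oscillatory-integral estimates are routine. The spectral hypothesis \re{Co} enters precisely to guarantee the threshold non-resonance $D(i\omega)\neq 0$ and to ensure that no further poles of $\bR(\lambda)$ obstruct the Stone's-formula contour.
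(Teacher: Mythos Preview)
The paper does not prove this proposition: it is quoted from \cite{BKKS} with no argument given here, so there is no in-paper proof to compare against. Your strategy---Stone's formula plus oscillatory-integral analysis of the explicit kernel \re{nR}, with threshold regularity ensured by $D(i\omega)\neq 0$---is the standard route and is consistent with how the companion paper \cite{BKKS} proceeds.

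One technical caution: the scheme ``two integrations by parts against $e^{ik^2t}=(2ikt)^{-1}\partial_k e^{ik^2t}$'' does not by itself give $t^{-3/2}$. Each application introduces a factor $k^{-1}$; the Jacobian factor $k$ cancels exactly one of these, so a naive second application leaves a genuine $k^{-1}$ singularity at $k=0$. The usual cure is to observe that, once the jump $\bR(\lambda+0)-\bR(\lambda-0)$ is formed, the integrand extends to a smooth \emph{even} function of $k$ on $\R$, so the half-line integral becomes a full-line one with no boundary contribution and the second integration by parts is honest. Your closing remark about the cancellation of the singular $1/k_+$ terms in $\Gamma$ is precisely this point; it should be stated as the mechanism enabling the second integration by parts rather than as a separate ``obstacle''.
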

We shall also need the following bound, which is new.
\begin{pro}\label{l1}
Assume that the spectral condition (\ref{Co}) holds. Then
for $h\in{\cal M}_{\beta}$ with $\beta\ge 2$  the following bounds hold:
  \be\la{b5}
    \Vert e^{{\bf C}t}({\bf C}\mp 2i\mu-0)^{-1}{\bf P}^ch\Vert_{L^{\infty}_{-\beta}}
    +\Vert e^{{\bf C}t}({\bf C}\mp 2i\mu-0)^{-1}{\bf\Pi}^{\pm}h\Vert_{L^{\infty}_{-\beta}}
    \le c(1+t)^{-3/2}\Vert h\Vert_{{\cal M}_{\beta}}.
  \ee
\end{pro}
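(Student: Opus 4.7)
The plan is to adapt the spectral/contour-integral proof of Proposition \ref{TD} from \cite{BKKS} to accommodate the extra factor $(\bC \mp 2i\mu - 0)^{-1}$. The crucial structural observation is that by the spectral condition \eqref{Co} one has $2\mu > \omega$, so $\pm 2i\mu$ lies in the \emph{interior} of the continuous spectrum $\mathcal{C}_\pm$, and hence $(\bC \mp 2i\mu - 0)^{-1}$ is not an ordinary resolvent but a limiting-absorption boundary value of $\bR(\lambda)$, with the ``$-0$'' fixing the side of approach in the complex $\lambda$-plane. The starting point is the Laplace-type representation; after deforming the contour onto the cuts $\mathcal{C}_\pm$ one obtains
$$
e^{\bC t}(\bC - 2i\mu - 0)^{-1}\bP^c h = -\frac{1}{2\pi i}\int_{\mathcal{C}_+\cup\mathcal{C}_-}\frac{e^{\lambda t}\,[\bR(\lambda+0)-\bR(\lambda-0)]\,h}{\lambda - 2i\mu - 0}\,d\lambda.
$$
Parameterizing $\lambda = i\sigma$ with $\sigma \geq \omega$ on $\mathcal{C}_+$ reduces this to an oscillatory integral of the form
$$
\int_\omega^\infty \frac{e^{i\sigma t}}{\sigma - 2\mu - 0}\,\mathcal{J}_+(\sigma; x, y)\,h(y)\,dy\,d\sigma
$$
(plus an analogous piece over $\mathcal{C}_-$), where $\mathcal{J}_+(\sigma; x, y)$ is the jump of the kernel $\bR(\lambda, x, y)$ across $\mathcal{C}_+$, explicitly computable from \eqref{nR} via the oscillatory factors $e^{ik_+(\lambda)(|x|+|y|)}$ with $k_+|_{\mathcal{C}_+} = \sqrt{\sigma - \omega}$.

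Next, the singular factor $(\sigma - 2\mu - 0)^{-1}$ is split by the Sokhotski--Plemelj formula into $\mathrm{PV}\,(\sigma - 2\mu)^{-1} + i\pi\,\delta(\sigma - 2\mu)$. The $\delta$-term produces a bounded but non-decaying piece proportional to $e^{2i\mu t}\,\mathcal{J}_+(2\mu; x, y)\,h(y)$, which is cancelled by the compensating half-residue obtained when the original Laplace contour passes on the designated side of $\lambda = 2i\mu$ --- equivalently, the cancellation is built into the definition of $(\bC - 2i\mu - 0)^{-1}\bP^c$ as a bounded limiting-absorption map $\mathcal{M}_\beta \to L^\infty_{-\beta}$. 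The surviving principal-value integral is then estimated by integration by parts in $\sigma$, each pass providing a $t^{-1}$ gain from differentiating $e^{i\sigma t}$; the standard threshold substitution $\sigma - \omega = \tau^2$ removes the $1/k_+$ singularity at $\sigma = \omega$ and ultimately produces the sharp rate $(1+t)^{-3/2}$, exactly as in the proof of Proposition \ref{TD}. The remaining combinations in \eqref{b5} (involving ${\bf\Pi}^{\pm}$ or the sign-flipped version $\bC + 2i\mu$) are handled in the same manner via the symmetric cut $\mathcal{C}_-$.

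The main obstacle is controlling the oscillatory integral uniformly near both $\sigma$-singularities simultaneously: the threshold $\sigma = \omega$ inherited from Proposition \ref{TD}, and the interior spectral point $\sigma = 2\mu$ which is the new feature. Fortunately $2\mu > \omega$ separates the two, so near $\sigma = 2\mu$ the threshold factor $1/k_+$ is bounded and the PV-regularized density $\mathcal{J}_+(\sigma; x, y)/(\sigma - 2\mu)$ is smooth (since $\mathcal{J}_+$ is real-analytic in $\sigma > \omega$ off the threshold), while near $\sigma = \omega$ the factor $1/(\sigma - 2\mu) = 1/(\tau^2 + \omega - 2\mu)$ remains smooth in $\tau$. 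Uniformity of the $L^\infty_{-\beta}$ estimate in $x$ is maintained by the weight $(1+|x|)^{-\beta}$ absorbing the bounded oscillatory factors $e^{ik_+(|x|+|y|)}$ arising in $\mathcal{J}_+$, while the $y$-integration is dominated by $\|h\|_{\mathcal{M}_\beta}$.
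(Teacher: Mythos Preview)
Your overall architecture (contour representation, separate the threshold $\sigma=\omega$ from the interior point $\sigma=2\mu$, use the weight to absorb oscillations) is right, but the heart of the argument --- how the singular factor $(\sigma-2\mu)^{-1}$ is tamed --- does not go through as written.

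You assert that ``the PV-regularized density $\mathcal{J}_+(\sigma;x,y)/(\sigma-2\mu)$ is smooth (since $\mathcal{J}_+$ is real-analytic in $\sigma>\omega$ off the threshold)''. This is false: analyticity of $\mathcal{J}_+$ does not make $\mathcal{J}_+(\sigma)/(\sigma-2\mu)$ smooth unless $\mathcal{J}_+(2\mu)=0$, and the jump of the resolvent at $2i\mu$ is \emph{not} zero (indeed, the Fermi Golden Rule condition \eqref{FGR} is precisely a nonvanishing statement built from it). Consequently your ``integration by parts in $\sigma$, each pass providing a $t^{-1}$ gain'' cannot be applied to the principal-value integral: differentiating $1/(\sigma-2\mu)$ produces $-1/(\sigma-2\mu)^2$, which is not even PV-integrable. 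Your cancellation mechanism is also not right: there is no pole of $\bR(\lambda)$ at $2i\mu$, hence no ``half-residue'' picked up by the deformed Laplace contour. (A cancellation between the non-decaying part of the PV integral and the $\delta$-piece does in fact occur for the correct $-0$ sign, but this is not what you wrote, and establishing it rigorously still leaves you with the singular remainder to estimate.)

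The paper resolves this differently and cleanly. It writes
\[
e^{\bC t}(\bC-2i\mu-0)^{-1}
=-\frac{1}{2\pi i}\int_{-i\infty}^{i\infty} e^{\lambda t}\,R(\lambda+0)\,d\lambda\;\cdot\;R(2i\mu+0)
\]
and then applies the \emph{resolvent (Hilbert) identity}
$R(\lambda_1)R(\lambda_2)=(\lambda_1-\lambda_2)^{-1}[R(\lambda_1)-R(\lambda_2)]$
with $\lambda_1=\lambda+0$, $\lambda_2=2i\mu+0$. This converts the integrand into the \emph{difference quotient}
\[
\frac{R(\lambda+0)-R(2i\mu+0)}{\lambda-2i\mu},
\]
whose numerator vanishes at $\lambda=2i\mu$ (both boundary values are taken from the same side), so the quotient is genuinely smooth there. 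A short abstract lemma (Lemma~C.1, of Jensen--Kato type: if $L(\nu)$ has two bounded $\nu$-derivatives in operator norm then $\int \zeta(\nu)e^{i\nu t}\frac{L(\nu)-L(\nu_0)}{\nu-\nu_0}\,d\nu=O(t^{-3/2})$) handles the piece near $2i\mu$ by straightforward integration by parts. The remaining pieces --- away from $2i\mu$ on the cuts, and on the gap $(-i\omega,i\omega)$ where the integrand is analytic after projecting by $\bP^c$ --- are then treated exactly as in Proposition~\ref{TD}; the choice of sign in $-0$ enters only in the ``away'' piece on the cut, ensuring the $R(2i\mu+0)$ term contributes the correct boundary value. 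The resolvent identity is the step your argument is missing.
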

We prove this proposition in appendix C.
\setcounter{equation}{0}
\section{Modulation equations }
\label{modsec}
Here we present the modulation equations which allow a
construction of solutions $\psi(x,t)$ of the equation (\re{S})
close at each time $t$ to a soliton i.e. to one of the functions
$Ce^{i\theta-\sqrt{\om}|x|}$
in the set $\cS$ described in \S\ref{swsec} with time varying
(``modulating'') parameters $(\omega,\theta)=(\omega(t),\theta(t))$.

We look for a solution to (\ref{SV}) in the form
\be\la{sol}
  \psi(x,t)=
  e^{j\theta(t)}\Phi(x,t),\quad \Phi(x,t)=\Psi_{\om(t)}(x)+\chi(x,t).
\ee
Since this is a solution of (\ref{SV}) as long as
$\chi\equiv 0$ with $\dot\theta=\om$ and $\dot \omega=0$
it is natural to look for solutions  in which $\chi$ is small and
$\theta(t)=\ds\int_0^t\om(s)ds+\gamma(t)$ with $\gamma$ treated
perturbatively.

We look for $\chi=\rw(x,t)+f(x,t)$ where $\rw=zu+{\overline z}u^*\in X^1$
and $f\in X^c$.
Now we give a system of coupled equations
for $\omega(t)$, $\gamma(t)$, $z(t)$ and $f(x,t)$.
\begin{lemma}\label{BS}(cf.\cite[Proposition 2.2]{BS})
Given a solution of \eqref{SV} in the form \eqref{sol} with $f\in X^c$
as just described, the functions
$\omega(t)$, $\gamma(t)$, $z(t)$ and $f(x,t)$ satisfy the system
  \be\la{omega}
    \dot\omega=\frac{\langle {\bf P}^0{\bf Q}[\chi],\Phi\rangle}
    {\langle\pa_\om\Psi_{\om}-\pa_\om {\bf P}^0\chi,\Phi\rangle},
  \ee
  \be\la{gamma}
    \dot\gamma=\frac{\langle {\bf Q}[\chi], j(\pa_\om\Psi_{\om}-\pa_\om {\bf P}^0\chi)\rangle}
    {\langle\pa_\om \Psi_\om-\pa_\om {\bf P}^0\chi,\Phi\rangle},
  \ee
  \be\la{z}
    \langle u,ju\rangle(\dot z-i\mu z)=\langle {\bf Q}[\chi],ju\rangle-
    \langle\pa_\om \rw-\pa_\om{\bf P}^1f,ju\rangle\dot\om-
    \langle\chi,u\rangle\dot\gamma
  \ee
  \be\la{f}
    \dot f={\bf C}f+\dot\om\pa_\om{\bf P}^c\chi-\dot\gamma{\bf P}^c(j\chi)
    +{\bf P}^c{\bf Q}[\chi],
  \ee
where
${\bf Q}[\chi]=-\delta(x) j^{-1}\bigl({\bf F}(\Psi_\om+\chi)
-{\bf F}(\Psi_\om)-{\bf F}'(\Psi_\om)\chi\bigr)$ represents the nonlinear
part of the interaction.
\end{lemma}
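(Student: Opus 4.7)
The plan is to follow the standard Buslaev--Sulem modulation derivation, adapted to the presence of the discrete-mode subspace $X^1$. First I would insert the ansatz $\psi = e^{j\theta(t)}(\Psi_{\omega(t)} + \chi)$ with $\dot\theta = \omega + \dot\gamma$ into (\ref{SV}), multiply by $e^{-j\theta}$, and exploit the $U(1)$-equivariance $\mathbf{F}(e^{j\theta}v) = e^{j\theta}\mathbf{F}(v)$. Using the real form $-\Psi_\omega'' + \omega\Psi_\omega = \delta(x)\mathbf{F}(\Psi_\omega(0))$ of the soliton equation (\ref{NEP}), Taylor-expanding $\mathbf{F}(\Psi_\omega+\chi)$ about $\Psi_\omega$ to first order, and recognising the linearisation $\mathbf{B}\chi = -\chi'' + \omega\chi - \delta\mathbf{F}'(\Psi_\omega)\chi$, one arrives, after multiplication by $j^{-1}$, at the master equation
$$
\dot\chi = \mathbf{C}\chi + \mathbf{Q}[\chi] - \dot\omega\,\partial_\omega\Psi_\omega - \dot\gamma\, j\Phi,
$$
which will be the source of all four equations in the statement.

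Next I would interpret the ansatz $\chi = zu + \bar z u^* + f$ with $f\in X^c$ as the three standing constraints $\mathbf{P}^0\chi = 0$, $\mathbf{P}^1\chi = zu+\bar z u^*$, $\mathbf{P}^c\chi = f$, in which the projectors are evaluated at the instantaneous $\omega(t)$. Since each $\mathbf{P}^j$ is smooth in $\omega$, differentiating these constraints in $t$ yields $\mathbf{P}^j\dot\chi + \dot\omega(\partial_\omega\mathbf{P}^j)\chi = $ (time derivative of the right-hand side); these $(\partial_\omega\mathbf{P}^j)\chi$ corrections are precisely the source of the $\partial_\omega\mathbf{P}^0\chi$ and $\partial_\omega\mathbf{P}^1 f$ terms appearing in (\ref{omega})--(\ref{z}). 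Applying $\mathbf{P}^c$ to the master equation and using $[\mathbf{P}^c,\mathbf{C}]=0$, $\mathbf{P}^c\,\partial_\omega\Psi_\omega = 0$ (because $\partial_\omega\Psi_\omega\in X^0$) and $\mathbf{P}^c(j\Phi)=\mathbf{P}^c(j\chi)$ (because $j\Psi_\omega\in X^0$) delivers (\ref{f}) immediately.

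For (\ref{omega}) and (\ref{gamma}) I would apply $\mathbf{P}^0$ to the master equation; combined with the differentiated constraint $\mathbf{P}^0\dot\chi = -\dot\omega(\partial_\omega\mathbf{P}^0)\chi$, this produces the two-dimensional identity in $X^0$
$$
\dot\omega\bigl[\partial_\omega\Psi_\omega - (\partial_\omega\mathbf{P}^0)\chi\bigr] + \dot\gamma\bigl[j\Psi_\omega + \mathbf{P}^0(j\chi)\bigr] = \mathbf{P}^0\mathbf{Q}[\chi].
$$
Taking the Hermitian pairing with $\Phi$ and with $j(\partial_\omega\Psi_\omega - (\partial_\omega\mathbf{P}^0)\chi)$ respectively should extract the two stated ratios once the cross-pairings vanish --- this follows from $\langle j\Psi_\omega,\Psi_\omega\rangle = 0$ together with the defining symplectic orthogonality built into (\ref{defsp}). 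The denominator is a smooth perturbation of $\langle\partial_\omega\Psi_\omega,\Psi_\omega\rangle = \tfrac12\partial_\omega\|\psi_\omega\|^2$, which is nonzero by Lemma \ref{int-dif}, so the formulas are well defined for $\chi$ small. For (\ref{z}) I would apply $\mathbf{P}^1$ to the master equation, use $\mathbf{C}u = i\mu u$ and $\mathbf{C}u^* = -i\mu u^*$, rewrite $\mathbf{P}^1\dot\chi$ through the differentiated constraint, and then pair with $ju$: the $u^*$-contribution drops because $\langle u^*,ju\rangle = 0$, and $\langle u,ju\rangle \ne 0$ lets one solve for the combination $\dot z - i\mu z$ in the stated form.

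The hard part will be disciplined bookkeeping of the $\omega$-dependent projectors: the derivatives $(\partial_\omega\mathbf{P}^j)\chi$ produce exactly the subtle corrections $\partial_\omega\mathbf{P}^0\chi$ and $\partial_\omega\mathbf{P}^1 f$ in (\ref{omega})--(\ref{z}), and once these time derivatives are tracked consistently, the rest reduces to linear algebra in the splitting $L^2 = X^0\oplus X^1\oplus X^c$ using the symplectic pairings collected in Section 2.
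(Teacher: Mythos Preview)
Your proposal is correct and follows precisely the standard Buslaev--Sulem derivation that the paper invokes; the paper itself supplies no proof of this lemma, only the citation to \cite[Proposition 2.2]{BS}. One small refinement worth making explicit when you write it up: the vanishing of the cross-pairings you allude to (e.g.\ $\langle {\bf P}^0(j\Phi),\Phi\rangle=0$ and $\langle\eta,j\eta\rangle=0$ for $\eta=\partial_\omega\Psi_\omega-(\partial_\omega{\bf P}^0)\chi$) rests not only on the symplectic structure of ${\bf P}^0$ but on the fact that $\Phi$, $\chi$, $\eta$ are all real $\R^2$-valued functions, which forces the relevant $\langle a,ja\rangle$-type expressions to be purely imaginary and hence zero.
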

\setcounter{equation}{0}
\section{Frozen spectral decomposition }
\label{f-dec}
The linear part of the evolution equation (\ref{f}) for $f$ is non-autonomous,
due to the dependence of the operator ${\bf C}$ on $\omega(t)$. In order to
make use of the dispersive properties obtained in \S\ref{rpsec},
we  introduce (following \cite{BS})  a small
modification of (\ref{f}), which leads to an autonomous  equation.
We fix an  interval $[0,T]$ and decompose $f(t)\in X_{t}^c$ into the sum
\be\la{gh}
f=g+h, \quad g\in X^d_{T}, \quad h\in X^c_{T}.
\ee
Here $X_{T}^d={\bf P}_{T}^dX$ is the spectral
space associated to the discrete spectrum at time $T$ and $X_{T}^c={\bf P}_{T}^cX$
is the spectral space associated to the continuous spectrum at time $T$,
${\bf P}_{T}^c={\bf P}^c(\omega(T))$ and ${\bf P}_{T}^d=I-{\bf P}_{T}^c$.
In the following, we denote $\omega_{T}=\omega(T)$ and ${\bf C}_{T}={\bf C}(\omega_{T})$.
We will obtain estimates uniform in $T$, and later consider
the limit $T\to+\infty$.
\\
We  introduce a shorthand for the bounds we are about to prove:
${\cal R}(A,\!B,\!...)$ (resp. ${\cal R}(\om,A,\!...)$) is a general notation for
a positive function which remain bounded as $A,B,\dots$ approach zero
(resp. if $\om$ is close to $\om_0$ and $A,\dots$ approach zero).
Denote ${\cal R}_1(\om)\!=\!{\cal R}(\Vert\om-\om_0\Vert_{C[0,T]})$.
\begin{lemma}\label{g-est}
  The function $g$ is estimated in terms of $h$ as follows:
   \be\la{g}
     \Vert g\Vert_{L^{\infty}_{-\beta}}={\cal R}_1(\omega)|\omega-\omega_{T}|
     \Vert h\Vert_{L^{\infty}_{-\beta}}
  \ee
\end{lemma}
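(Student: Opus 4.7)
The plan is to exploit the defining property of $f$: by construction $f(t)\in X^c_t$, i.e.\ $\mathbf{P}^d(\omega(t))f(t)=0$, where $\mathbf{P}^d(\omega)=\mathbf{P}^0(\omega)+\mathbf{P}^1(\omega)$ depends on $\omega$ only through the generators $\Psi_\omega,\partial_\omega\Psi_\omega,u(\cdot,\omega),u^*(\cdot,\omega)$. Hence I can rewrite the decomposition identity $g=\mathbf{P}^d_T f$ as
\[
  g = \bigl[\mathbf{P}^d(\omega_T)-\mathbf{P}^d(\omega(t))\bigr]f \equiv \Delta(\omega(t),\omega_T)\,f,
\]
which already exhibits the smallness factor $|\omega(t)-\omega_T|$ heuristically, since $\Delta$ vanishes when $\omega(t)=\omega_T$.

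The second step is to quantify this smallness in the weighted norm. Using the explicit formulas \eqref{defsp}, \eqref{defspd} for $\mathbf{P}^0,\mathbf{P}^1$, one sees that $\mathbf{P}^d(\omega)$ is a rank-four operator whose defining vectors and their $\omega$-derivatives decay exponentially in $x$ (together with all their derivatives in $\omega$), by the explicit formula $\Psi_\omega(x)=(C(\omega)e^{-\sqrt\omega|x|},0)$ and the analogous exponential expressions for $u,u^*$ recalled in the appendix. In particular, for $\psi\in L^\infty_{-\beta}$ the pairings $\langle\psi,\cdot\rangle$ against any such exponentially decaying vector are bounded by $\|\psi\|_{L^\infty_{-\beta}}$ times an $\omega$-smooth constant. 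Writing
\[
  \Delta(\omega(t),\omega_T) = \int_{\omega_T}^{\omega(t)} \partial_\omega \mathbf{P}^d(s)\,ds
\]
and using that $\omega\mapsto\mathbf{P}^d(\omega)$ is $C^1$ as a map into $\mathcal{L}(L^\infty_{-\beta})$ for $\omega$ in a neighborhood of $\omega_0$, I obtain the operator bound
\[
  \|\Delta(\omega(t),\omega_T)\|_{\mathcal{L}(L^\infty_{-\beta})} \le \mathcal{R}_1(\omega)\,|\omega(t)-\omega_T|.
\]

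The third step closes the estimate on $g$ alone. Substituting $f=g+h$ into $g=\Delta f$ gives $(I-\Delta)g = \Delta h$; for $\omega$ sufficiently close to $\omega_0$ on $[0,T]$ the operator norm of $\Delta$ is $<1$, so $(I-\Delta)^{-1}$ exists by Neumann series and is uniformly bounded on $L^\infty_{-\beta}$. Therefore
\[
  \|g\|_{L^\infty_{-\beta}} \le \|(I-\Delta)^{-1}\|\,\|\Delta\|\,\|h\|_{L^\infty_{-\beta}} \le \mathcal{R}_1(\omega)\,|\omega(t)-\omega_T|\,\|h\|_{L^\infty_{-\beta}},
\]
which is precisely \eqref{g}. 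The only slightly delicate point is verifying the $\omega$-smoothness of $\mathbf{P}^d$ as an operator into the \emph{weighted} space $L^\infty_{-\beta}$, but since the generators and their $\omega$-derivatives lie in every exponentially weighted space, and the scalar denominators $\langle\Psi_\omega,\partial_\omega\Psi_\omega\rangle$ and $\langle u,ju\rangle$ are nonvanishing near $\omega_0$ by \eqref{nondeg} and \eqref{uu}, this reduces to a routine differentiation of the explicit formulas.
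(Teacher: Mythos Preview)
Your proof is correct and follows essentially the same approach as the paper: both use $\mathbf{P}^d f=0$, $\mathbf{P}^d_T g=g$, $\mathbf{P}^d_T h=0$ to arrive at the identity $(I-\Delta)g=\Delta h$ with $\Delta=\mathbf{P}^d_T-\mathbf{P}^d$, and then bound $\Delta$ via the mean value theorem applied to $\omega\mapsto\mathbf{P}^d(\omega)$. The paper's version is terser, leaving the Neumann inversion and the $C^1$ regularity of $\mathbf{P}^d$ in $L^\infty_{-\beta}$ implicit, whereas you spell these out.
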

\bp
  Using the identities
  ${\bf P}^d(g+h)=0$, ${\bf P}_{T}^dg=g$ and ${\bf P}_{T}^dh=0$, we get
  $$
    g+({\bf P}^d-{\bf P}_{T}^d)g+({\bf P}^d-{\bf P}_{T}^d)h=0,
  $$
  and ${\bf P}^d-{\bf P}_{T}^d$ is a``small'' finite dimensional operator:
  $$
    |{\bf P}^d-{\bf P}^d_T|=|{\bf P}^d(\omega_t)-{\bf P}^d(\omega_T)|
    \le {\rm max}_ {\omega^{*}\in(\omega, \omega_{T})}
    |\pa_\om{\bf P}^d(\omega^{*})||\omega-\omega_{T}|.
  $$
\ep

Applying the projection ${\bf P}_{T}^c$ to  (\ref{f}), we obtain
\be\la{h}
  \dot h={\bf C}_{T}h+{\bf P}_{T}^c[({\bf C}-{\bf C}_{T})f
  +{\bf P}^c{\bf Q}[\chi]+\dot\om\pa_\om{\bf P}^c\chi-\dot\gamma{\bf P}^c(j\chi)].
\ee

\setcounter{equation}{0}
\section{Taylor expansion of dynamics}
\label{leadterms}
The preceding sections have provided a change of variables
$\psi\mapsto (\omega,\gamma,z,h)$ under which \eqref{SV}
is mapped into the system comprising \eqref{omega}-\eqref{z} and \eqref{h}.
Since we are interested in proving that for large
times $z,h$ are small it is necessary to expand the inhomogeneous terms
in these equations in terms of $z,h$. This is carried out in this section.
\subsection{Preliminaries}
\label{prelim}
This section is devoted to some useful preliminary estimates.
We start with a bound for the denominator
$\langle\pa_\om \Psi-\pa_\om {\bf P}^0\chi,\Phi\rangle$, where $\Psi=\Psi_\om$,
that appears in the equation of motion (\ref{omega})-(\ref{gamma}).
We have, with $\Delta=  \langle\pa_\om \Psi,\Psi\rangle$,
\be\la{den1}
  \langle\pa_\om \Psi\!-\!\pa_\om {\bf P}^0\chi,\Phi\rangle\!=\!
  \langle\pa_\om \Psi,\Psi\rangle\Bigl(1\!+\!\frac{\langle\pa_\om \Psi,\chi\rangle
  \!-\!\langle\pa_\om {\bf P}^0\chi,\Phi\rangle}{\langle\pa_\om \Psi,\Psi\rangle}\Bigr)
  \!=\!\Delta \Bigl(1\!+\!\frac{\langle\pa_\om \Psi,\chi\rangle
  \!-\!\langle\pa_\om {\bf P}^0\chi,\Phi\rangle}{\Delta}\Bigr)
\ee
with
\be\la{den2}
  \frac{\langle\pa_\om \Psi,\chi\rangle-\langle\pa_\om {\bf P}^0\chi,\Phi\rangle}
  {\Delta}
  ={\cal R}(\om)\Bigl(|z|+\Vert f\Vert_{L^{\infty}_{-\beta}}
  +\Vert f\Vert_{L^{\infty}_{-\beta}}^2\Bigr).
\ee
We also need to expand the nonlinear term ${\bf F}(\psi)=a(|\psi|^2)\psi$
near the solitary wave since the inhomogeneous terms all involve $E[\chi]$,
the nonlinear part of $\delta(x) F$, defined using
the Taylor expansion of $\delta(x){\bf F}\psi$ near $\Psi$:
\be\la{den4}
  \delta(x){\bf F}(\psi)=
  \delta(x)\Bigl(a(C^2)\Psi+a(C^2)\chi+2a'(C^2)(\chi,\Psi)\Psi\Bigr)+E[\chi].
\ee
Thus $E[\chi]$ contains all the higher order terms which are
at least quadratic in $\chi$, as $\chi\to 0$.
We expand $E[\chi]$ in the form
\be\la{E}
E[\chi]=E_2+E_3+E_R,
\ee
where $E_j$ is of order $j$ in $\chi$ and $E_R$ is the remainder.
It is easy to check that
$$
  E_2[\chi,\chi]=\delta(x)\Bigl[a'(C^2)|\chi|^2\Psi+2a''(C^2)(\Psi,\chi)^2\Psi
  +2a'(C^2)(\Psi,\chi)\chi\Bigr],
$$
$$
  \!\!\!\! E_3[\chi,\chi,\chi]\!=\!\delta(x)\Bigl[a'(C^2)|\chi|^2\chi
  +2a''(C^2)(\Psi,\chi)^2\chi
  +2a''(C^2)(\Psi,\chi)|\chi|^2\Psi+\frac 43 a'''(C^2)(\Psi,\chi)^3\Psi\Bigr].
$$
For $E_R$ we have
\be\la{ER}
   E_R={\cal R}(\om,|z|,|f(0)|)(|z|^4+|f(0)|^4).
\ee
It also useful to define $E_2[\chi_1,\chi_2]$,
(resp. $E_3[\chi_1,\chi_2,\chi_3]$)
as a symmetric bilinear (resp. trilinear) form
$$
  E_2[\chi_1,\chi_2]=\delta(x)\Bigl[a'(C^2)(\chi_1,\chi_2)\Psi
  +2a''(C^2)(\Psi,\chi_1)(\Psi,\chi_2)\Psi
  +a'(C^2)\Bigl((\Psi,\chi_2)\chi_1+(\Psi,\chi_1)\chi_2\Bigr)\Bigr],
$$
\beqn\nonumber
E_3[\chi_1,\chi_2,\chi_3]&=&\delta(x)\Bigl[
\fr 16a'(|\Psi|^2)\sum (\chi_i,\chi_j)\chi_k
+\fr 13 a''(|\Psi|^2)\sum (\Psi,\chi_i)(\Psi,\chi_j)\chi_k\\
\nonumber
&+&\fr 13 a''(|\Psi|^2)\sum (\Psi,\chi_i)(\chi_j,\chi_k)\Psi+
\fr 43  a'''(|\Psi|^2)(\Psi,\chi_1)(\Psi,\chi_2)(\Psi,\chi_3)\Psi\Bigr].
\eeqn
Here summation is taken over all permutations of integers $1,2,3$.
Notice also that
\be\la{XYZ}
  \langle E_2[X,Y],Z\rangle=\langle X, E_2[Y^*,Z]\rangle
\ee
where $X$, $Y$, $Z$, are complex valued vector functions and
$Y^*=(\overline Y_1,\overline Y_2)$.

In the remaining part of the paper we shall prove the following
asymptotics:
\be\la{exas}
  \Vert f(t)\Vert_{L^{\infty}_{-\beta}}\sim t^{-1}, \quad
  z(t)\sim t^{-1/2},\quad \Vert\rw(t)\Vert_{H^1}\sim t^{-1/2}, \quad t\to\infty.
\ee
\br\la{order}
  To justufy these asymptotics, we will separate leading terms and remainders
  in right hand side of equations (\ref{omega})-(\ref{z}), (\ref{h}).
  Namely, we shall expand  the expressions  for
  $\dot \om$,  $\dot\gamma$ and  $\dot z$ up to and including
terms of the order ${\cal O}(t^{-3/2})$,
  and for $\dot h$ up to  ${\cal O}(t^{-1})$ keeping in mind the asymptotics (\re{exas}).
  This choice is necessary for applicaton of the method of majorants.
\er

\subsection{Equation for $\om$}
\label{lt-2}
Using the equality ${\bf Q}[\chi]=jE[\chi]$, and the fact that
$j({\bf P}^0)^*={\bf P}^0j$ (where $^*$ means adjoint with respect to
the Hermitian inner product $\langle\,\cdot\, ,\,\cdot\,\rangle$),
we rewrite
$$
  \langle {\bf P}^0{\bf Q}[\chi],\Phi\rangle =\langle {\bf P}^0jE[\chi],\Phi\rangle
  =-\langle E[\chi],j({\bf P}^0)^*\Phi\rangle =-\langle E[\chi],{\bf P}^0j\Phi\rangle
$$
with $\chi=\rw+f$ and $\Phi=\Psi+\chi=\Psi+\rw+f$.
Then equation (\ref{omega}) for $\dot\om$ can be  expanded up to ${\cal O}(t^{-3/2})$,
assuming (\ref{exas}), as follows:
\beqn\la{omega1}
  \dot\omega\!\!\!&=&\!\!\!-\frac 1{\Delta}
  \Biggl[\langle E_2[\rw,\rw]+2E_2[\rw,f]+E_3[\rw,\rw,\rw],j\Psi\rangle
  +\langle E_2[\rw,\rw], {\bf P}^0j\rw\rangle\Biggr]\\
  \nonumber
  \!\!\!&+&\!\!\!\frac 1{\Delta^2}\langle E_2[\rw,\rw],j\Psi\rangle
  \Bigl(\langle\pa_\om \Psi,\rw\rangle-\langle\pa_\om {\bf P}^0\rw, \Psi\rangle\Bigr)
  +\Omega_R,
\eeqn
where
\be\la{Om}
  |\Omega_R| ={\cal R}(\om,|z|+\Vert f\Vert_{L^{\infty}_{-\beta}})(|z|^2
  +\Vert f\Vert_{L^{\infty}_{-\beta}})^{2}.
\ee
Substituting $\rw=zu+\ov zu^*$, we can write (\ref{omega1})
in the form
\beqn\la{omega2}
  \!\!\!\!\!\!\!\!\!\!\dot\om\!=\!\Omega_{20}z^2\!+\Omega_{11}z\ov z+\Omega_{02}\ov z^2\!
  +\Omega_{30}z^3\!+\Omega_{21}z^2\ov z+\Omega_{12}z\ov z^2\!
  +\Omega_{03}\ov z^3\!+\!z\langle f,\Omega'_{10}\rangle\!
  +\!\overline z\langle f,\Omega'_{01}\rangle\!+\!\Omega_R.
\eeqn

Let us now display explicitly some important terms of this expansion.
First we compute the quadratic terms in (\ref{omega1})
which are of order $t^{-1}$
according to (\ref{exas}): these  are obtained from
\be\la{imp}
  \langle E_2[\rw,\rw],j\Psi\rangle=z^2\langle E_2[u,u],j\Psi\rangle
  +\ov z^2\langle E_2[u^*,u^*],j\Psi\rangle
  +2z\ov z\langle E_2[u^*,u],j\Psi\rangle
\ee
Taking into account the definition of $E_2$, the identity $(\Psi,j\Psi)=0$,
the fact that $\Phi=(\phi,0)$, and $\rw=zu+\ov zu^*$, we obtain
\beqn\nonumber
  \langle E_2[\rw,\rw],j\Psi\rangle
  \!\!\!&=&\!\!\!\langle \delta(x)2a'(C^2)(\Psi,\rw)\rw,j\Psi\rangle
  = 2a'(C^2)(z+\ov z)(u(0),\Psi(0))(z-\ov z)(u(0),j\Psi(0))\\
  \nonumber
  \!\!\!&=&\!\!\!2(z^2-\overline z^2) a'(C^2)(u(0),\Psi(0))(u(0),j\Psi(0)),
\eeqn
where $(u,v)=u_1v_1+u_2v_2$ for $u,v\in\C^2$.
Therefore
\be\la{Omega20}
  \Omega_{20}=\overline\Omega_{02}=-\frac{\langle E_2[u,u],j\Psi\rangle} {\Delta}
  =-\frac{2}{\Delta} a'(C^2)(u(0),\Psi(0))(u(0),j\Psi(0))
\ee
is purely imaginary and
\be\la{Omega11}
  \Omega_{11}=-2\frac{\langle E_2[u,u^*],j\Psi\rangle}{\Delta}=0.
\ee
Using the property (\ref{XYZ}), we find that
\be\la{Omega10}
  \Omega'_{10}=\overline\Omega'_{01}=-2\frac{E_2[u^*,j\Psi]}{\Delta}.
\ee
\br\la{Om-sub}
  Since  $f\in X^c_t$ then
  $\langle f, \Omega'_{10}\rangle=\langle {\bf P}^cf, \Omega'_{10}\rangle
    =\langle f, j{\bf P}^cj^{-1}\Omega'_{10}\rangle$.
  Therefore we can substitute $\Omega'_{10}$
  in (\ref{omega2}) by their projection $j{\bf P}^cj^{-1}\Omega'_{10}$.
\er
\subsection{Equation for $\gamma$}
\label{lt-1}
Using again the equality ${\bf Q}=jE$ we get
$$
  \langle {\bf Q}[\chi],j(\pa_\om \Psi-\pa_\om {\bf P}^0\chi)\rangle
  =\langle E[\chi],\pa_\om \Psi-\pa_\om {\bf P}^0\chi\rangle
$$
Therefore (\ref{gamma}), (\ref{den1}), (\ref{den2}), (\ref{E}), (\ref{ER}) imply
\beqn\la{gamma1}
  \!\!\!\!\dot\gamma&=&\!\!\!\!\Delta^{-1}
  \Biggl[\langle E_2[\rw,\rw]+2E_2[\rw,f]+E_3[\rw,\rw,\rw],\pa_\om \Psi\rangle
  -\langle E_2[\rw,\rw],\pa_\om {\bf P}^0\rw\rangle\Biggr]\\
  \nonumber
  &-&\Delta^{-2}\langle E_2[\rw,\rw], \pa_\om \Psi\rangle
  \Bigl(\langle \pa_\om \Psi,\rw\rangle-\langle\pa_\om {\bf P}^0\rw, \Psi\rangle\Bigr)
  +\Gamma_R,
\eeqn
where
\be\la{Gam}
  |\Gamma_R|={\cal R}(\om,|z|+\Vert f\Vert_{L^{\infty}_{-\beta}})(|z|^2
  +\Vert f\Vert_{L^{\infty}_{-\beta}})^2
\ee
Equation (\ref{gamma1}) can thus be represented in the form
\be\la{gamma2}
  \dot\gamma=\Gamma_{20}z^2+\Gamma_{11}z\overline z+\Gamma_{02}\overline z^2
  +\Gamma_{30}z^3+\Gamma_{21}z^2\overline z+\Gamma_{12}z\overline z^2
  +\Gamma_{03}\overline z^3 +z\langle f,\Gamma'_{10}\rangle
  +\overline z\langle f,\Gamma'_{01}\rangle+\Gamma_R
\ee
where
\beqn\la{Gij}
  \!\!\!\!\!\!\!\!\!\Gamma_{20}
  \!\!&=&\!\!\frac{\langle E_2[u,u], \pa_\om \Psi\rangle}{\Delta},
  \quad\Gamma_{11}=2\frac{\langle E_2[u,u^*], \pa_\om \Psi\rangle}{\Delta},\quad
  \Gamma_{02}=\frac{\langle E_2[u^*,u^*], \pa_\om \Psi\rangle}{\Delta},\\
  \nonumber
  \!\!\!\!\!\!\!\!\!\Gamma'_{10}\!\!&=&\!\!2\frac{E_2[u^*, \pa_\om \Psi]}{\Delta},
  \quad\Gamma'_{01}=2\frac{E_2[u, \pa_\om \Psi]}{\Delta}.
\eeqn
\subsection{Equation for $z$}
\label{lt-3}
Denote $\varkappa=\langle u,ju\rangle$ and rewrite (\ref{z}) in the form:
\beqn\la{z1}
  \dot z-i\mu z&=&\frac{\langle E_2[\rw,\rw]+2E_2[\rw,f]+E_3[\rw,\rw,\rw],u\rangle}
  {\varkappa}\\
  \nonumber
  &+&\frac{\langle\pa_\om \rw,ju\rangle\langle E_2[\rw,\rw],j\Psi\rangle
  -\langle \rw,u\rangle\langle E_2[\rw,\rw], \pa_\om\Psi\rangle}{\varkappa\Delta}+Z_R,
\eeqn
where
\be\la{ZR}
  |Z_R|={\cal R}(\om,|z|+\Vert f\Vert_{L^{\infty}_{-\beta}})
  (|z|^2+\Vert f\Vert_{L^{\infty}_{-\beta}})^2.
\ee
Equation (\ref{z1}) can be represented in the form
\beqn\la{z2}
  \dot z&=&i\mu z+Z_{20}z^2+Z_{11}z\overline z+Z_{02}\overline z^2
  +Z_{30}z^3+Z_{21}z^2\overline z+Z_{12}z\overline z^2+Z_{03}\overline z^3\\
  \nonumber
  &+&z\langle f,Z'_{10}\rangle+\overline z\langle f,Z'_{01}\rangle+Z_R,
  \eeqn
where, using the calculations in the previous two sections,
we have in particular,
\beqn\nonumber
  Z_{20}&=&\frac{\langle E_2[u,u],u\rangle}{\varkappa},
  \quad Z_{11}=2\frac{\langle E_2[u,u^*],u\rangle}{\varkappa},
  \quad Z_{02}=\frac{\langle E_2[u^*\!,u^*],u\rangle}{\varkappa},\\
  \la{Zij}
  Z_{21}&=&\frac{\langle 3E_3[u^*\!,u,u],u\rangle}{\varkappa}\\
  \nonumber
  &+&\frac{\langle\pa_\om u^*\!,\!ju\rangle\langle E_2[u,u],j\Psi\rangle
  \!-\!\langle u^*\!,u\rangle\langle E_2[u,u], \pa_\om\Psi\rangle
  \!-\!\langle u,u\rangle\langle 2E_2[u^*,u], \pa_\om\Psi\rangle}
  {\varkappa\Delta},\\
  \nonumber
  Z'_{10}&=&2\frac{E_2[u^*,u]}{\ov{\varkappa}},
  \quad Z'_{01}=2\frac{E_2[u,u]}{\ov{\varkappa}}.
\eeqn

\subsection{Equation for $h$}
\label{lt-4}
We now turn to equation (\ref{h}) for $\dot h$ that we rewrite in the form
\be\la{h1}
  \dot h={\bf C}_{T}h+{\bf P}_{T}^c\Big[({\bf C}-{\bf C}_{T})f
  +{\bf P}^cjE_2[w,w]+\dot\gamma{\bf P}^cj^{-1}f+H_R\Big],
\ee
where the remainder $H_R$ is
\beqn\la{HR}
  H_R&=&{\bf P}^cj(E[\chi]-E_2[\rw,\rw])+\dot\om\pa_\om{\bf P}^c\chi
  +\dot\gamma{\bf P}^cj^{-1}\rw\\
  \nonumber
  &=&{\bf P}^cj(E[\chi]-E_2[\rw,\rw])-\dot\om\pa_\om{\bf P}^d\chi
  +\dot\gamma j^{-1}\rw -\dot\gamma {\bf P}^dj^{-1}\rw.
\eeqn
For  the  $H_R$ we have, recalling \eqref{cM}, the following estimate
\beqn\nonumber
  \Vert H_R\Vert_{{\cal M}_{\beta}}\!\!\!&=&\!\!\!
  {\cal R}(\om,|z|+\Vert f\Vert_{L^{\infty}_{-\beta}})
  (|z|^3+|z|\Vert f\Vert_{L^{\infty}_{-\beta}}+\Vert f\Vert_{L^{\infty}_{-\beta}}^2)
  +{\cal R}(\om)|\dot\om|(|z|+\Vert f\Vert_{L^{\infty}_{-\beta}})\\
  \la{HR-est}
  \!\!\!&+&\!\!\!{\cal R}(\om)|\dot\gamma||z|
  ={\cal R}(\om,|z|+\Vert f\Vert_{L^{\infty}_{-\beta}})
  (|z|^3+|z|\Vert f\Vert_{L^{\infty}_{-\beta}}+\Vert f\Vert_{L^{\infty}_{-\beta}}^2).
\eeqn
Now we continue the isolation of the leading terms in the right hand side of (\ref{h1}).
Note that
$$
  {\bf C}-{\bf C}_{T}=j^{-1}(\omega-\omega_{T})+j^{-1}(V-V_{T}),\quad {\rm where}\quad
  V=-\delta(x)[a+bP_1].
$$
Also
${\bf P}_{T}^c{\bf P}^c={\bf P}_{T}^c[{\bf P}_{T}^c+{\bf P}_{T}^d-{\bf P}^d]
  ={\bf P}_{T}^c+{\bf P}_{T}^c[{\bf P}_{T}^d-{\bf P}^d]$.
Therefore, (\ref{h1}) becomes
\be\la{h2}
  \dot h={\bf C}_{T}h+\sigma(t){\bf P}_{T}^cj^{-1}h+{\bf P}_{T}^cjE_2[w,w]+H_R'
\ee
with
$\sigma(t)=\omega-\omega_{T}+\dot\gamma$, and
$$
  H_R'={\bf P}_{T}^c[H_R+\sigma(t)j^{-1}g+j^{-1}(V-V_{T})f+({\bf P}_{T}^d-{\bf P}^d)
  j(E_2[w,w]+\dot\gamma f)].
$$
Using  the identity
${\bf P}_T^c=1- {\bf P}_T^d$, we obtain
\be\la{HR'-est}
   \Vert H_R'\Vert_{{\cal M}_{\beta}}={\cal R}(\om,|z|
   +\Vert f\Vert_{L^{\infty}_{-\beta}})
   (|z|^3+|z|\Vert f\Vert_{L^{\infty}_{-\beta}}+\Vert f\Vert_{L^{\infty}_{-\beta}}^2
   +|\om-\om_T|(|z|^2+|\dot\gamma|\Vert f\Vert_{L^{\infty}_{-\beta}}).
  \ee
Next we need an additional construction to combine first two terms in RHS
of (\ref{h2}). Namely, lemma \ref{rem2} below shows that the ``main part'' of the
second term is $i\sigma(t)({\bf\Pi}_{T}^{+}-{\bf\Pi}_{T}^{-})h$,
where
${\bf\Pi}^{+}$ and ${\bf\Pi}^{-}$ are defined in Proposition \ref{TD}.
Hence, we denote
\be\la{CM}
  {\bf C}_{M}(t)={\bf C}_{T}+i\sigma(t)({\bf\Pi}_{T}^{+}-{\bf\Pi}_{T}^{-})
\ee
and rewrite  (\ref{h2}) as
\be\la{h3}
  \dot h={\bf C}_{M}(t)h+{\bf P}_{T}^cjE_2[w,w]+\tilde H_R,
\ee
where
\be\la{HR1}
  \tilde H_R=H_R'+\sigma(t)[{\bf P}_{T}^cj^{-1}-i({\bf\Pi}_{T}^{+}-{\bf\Pi}_{T}^{-})]h
\ee

\begin{lemma}\label{rem2}
  For $h\in X_{T}^c$ we have
  \be\la{Proj}
    \Vert [{\bf P}^c_{T}j^{-1}-i({\bf\Pi}^{+}_T-{\bf\Pi}^{-}_T)]h\Vert_{L^1_{\beta}}\le
    \Vert h\Vert_{L^{\infty}_{-\beta}}.
  \ee
\end{lemma}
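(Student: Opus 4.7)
Lemma~\ref{rem2} says that, restricted to the continuous spectral subspace $X^c_T$, the operator $j^{-1}$ coincides with $i({\bf\Pi}^+_T-{\bf\Pi}^-_T)$ modulo an operator with a well-localized (indeed, rapidly decaying) integral kernel. The starting point is the observation that the constant vectors $\eta_\pm=(1,\pm i)^T$ diagonalize $j^{-1}$, namely $j^{-1}\eta_\pm=\pm i\,\eta_\pm$, and that the generalized eigenfunctions of the \emph{free} operator $\bC_0=j^{-1}(-\partial_x^2+\om_T)$ associated to ${\cal C}_\pm$ are exactly of the form $\eta_\pm e^{ikx}$. Consequently, in the free case the identity
\[
   j^{-1}\bP^c_0 \;=\; i\,({\bf\Pi}^+_0-{\bf\Pi}^-_0)
\]
holds \emph{exactly} on all of $L^2(\R,\C^2)$ (the free problem has no bound states, so $\bP^c_0=I$). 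Lemma~\ref{rem2} is therefore a statement about how much this identity is disturbed by the $\de$-potential $V_T=-\de(x)[a_T+b_TP_1]$.

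My plan is to use Stone's formula together with the explicit decomposition $\bR(\lam)=\Gamma(\lam)+P(\lam)$ of the resolvent kernel from \re{nR}. The free kernel $\Gamma(\lam,x,y)$ produces the free projectors $\bP^c_0,{\bf\Pi}^{\pm}_0$ which, as just noted, satisfy the desired identity \emph{exactly}. Subtracting, the left-hand side of \re{Proj} reduces to two pieces: (i) a finite-rank correction coming from the difference between the discrete-spectrum projectors for $\bC_T$ and for $\bC_0$, whose ranges are spanned by the exponentially decaying functions $\Psi_{\om_T},\partial_\om\Psi_{\om_T},u,u^*$; and (ii) an integral operator $K$ whose kernel is obtained by applying Stone's formula to the $P$-part of the resolvent and composing with $j^{-1}\mp i$ on each spectral branch.

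From the explicit expression for $P(\lam,x,y)$ displayed after \re{nR}, each matrix entry of the kernel $K(x,y)$ is a superposition of factored terms of the form $e^{ik_\pm(\lam)|x|}\,m(\lam)\,e^{ik_\pm(\lam)|y|}$ with $m(\lam)$ a rational function of $k_\pm$, whose only poles are the zeros of $D(\lam)$, i.e.\ the eigenvalues $0,\pm i\mu$ of $\bC_T$. After the change of variable $k=k_\pm(\lam)\in\R$, $K(x,y)$ becomes a Fourier-type integral in the \emph{nonnegative} variable $|x|+|y|$, whose integrand is meromorphic in $k$ with poles strictly off the real axis (ensured by the spectral condition \re{Co}, which forces $D(\lam)\ne 0$ on the continuous spectrum). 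Closing the contour in the upper half $k$-plane and picking up residues yields
\[
   |K(x,y)|\;\le\;C\,e^{-c(|x|+|y|)}
\]
for some $c>0$. From this pointwise bound, \re{Proj} follows immediately since $\int(1+|x|)^\beta\int|K(x,y)|\,|h(y)|\,dy\,dx\le C\,\|h\|_{L^\infty_{-\beta}}$.

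The main technical obstacle is the behaviour near the spectral endpoints $\lam=\pm i\om_T$ (where $k_\pm\to 0$), at which the amplitude $m(\lam)$ appears singular at first glance. A careful endpoint analysis, along the lines of the proof of Proposition~\re{l1} in appendix~C and using that $D(\pm i\om_T)\ne 0$ under \re{Co}, shows the apparent singularities are integrable so that the kernel bound extends all the way down to the endpoints. The finite-rank correction in (i) is straightforward, since the eigenfunctions entering it are explicit and exponentially decaying, producing a smooth kernel with exponential decay and hence no additional obstacle.
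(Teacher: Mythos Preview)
Your overall strategy---Stone's formula for ${\bf\Pi}^\pm_T$, then exploit that the vectors $v_\pm=(1,\pm i)$ diagonalize $j^{-1}$---is exactly the idea behind the paper's proof. But several steps in your execution are inaccurate, and one has a genuine gap.

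First, $\Gamma(\lam,x,y)$ is \emph{not} the free resolvent: it contains the reflected pieces $e^{ik_\pm(|x|+|y|)}$ in addition to $e^{ik_\pm|x-y|}$. So your claim that ``$\Gamma$ produces the free projectors'' is wrong, and with it the rationale for your piece~(i). The paper avoids this detour entirely: writing ${\bf P}^c_T={\bf\Pi}^+_T+{\bf\Pi}^-_T$ directly gives
\[
   {\bf P}^c_Tj^{-1}-i({\bf\Pi}^+_T-{\bf\Pi}^-_T)
   ={\bf\Pi}^+_T(j^{-1}-i)+{\bf\Pi}^-_T(j^{-1}+i),
\]
and then one simply computes $\tau_k j^{-1}$ for each matrix $\tau_k$ in the decomposition ${\bf R}=\sum A_k\tau_k$ (appendix~D). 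One finds $\tau_1 j^{-1}=i\tau_1$, $\tau_4 j^{-1}=i\tau_4$, $\tau_3 j^{-1}=-i\tau_3$, $\tau_5 j^{-1}=\tau_6 j^{-1}=-i\tau_{5,6}$, so on ${\cal C}_+$ only the $A_3,A_5$ terms survive (and $A_2,A_4$ on ${\cal C}_-$). No comparison with a free problem is needed.

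Second, the surviving kernel is \emph{not} a Fourier integral in the single variable $|x|+|y|$: the cross term $A_3$ has the mixed form $e^{ik_+|x|}e^{ik_-|y|}$ with $k_+$ real and $k_-$ purely imaginary on ${\cal C}_+$. The $A_5$ piece (which does have the $|x|+|y|$ structure) is already exponentially decaying and trivial.

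Third, and this is the real gap: for the $A_3$ piece, after the substitution $\zeta=k_+$, your integrand contains the factor $e^{-\sqrt{2\om+\zeta^2}\,|y|}$, which is \emph{not meromorphic}---it has branch points at $\zeta=\pm i\sqrt{2\om}$. So ``closing the contour and picking up residues'' is incomplete; you would also need to handle the branch-cut contribution. This can be done, but you have not done it. The paper sidesteps the issue with a much simpler device: write $e^{i\zeta|x|}\,d\zeta = d e^{i\zeta|x|}/(i|x|)$ and integrate by parts repeatedly, yielding
\[
   \Big|\int_{\R}\frac{\zeta\,e^{-\sqrt{2\om+\zeta^2}|y|+i\zeta|x|}}{D(\zeta)}\,d\zeta\Big|
   \;\le\; C_m\,e^{-\sqrt{2\om}\,|y|}(1+|x|)^{-m}
\]
for every $m$, which is more than enough for the $L^1_\beta$ bound. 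Polynomial decay in $x$ suffices; you do not need the exponential decay you were aiming for.
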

This lemma is proved in appendix D.
Lemma \ref{rem2} and the bound (\ref{HR'-est}) imply
\begin{pro}\label{rem1}
  The remainder $\tilde H_R$ admits the bound
  \be\la{tHR-est}
    \Vert \tilde H_R\Vert_{{\cal M}_{\beta}}={\cal R}_1(\om,|z|
    +\Vert f\Vert_{L^{\infty}_{-\beta}})
    (|z|^3+|z|\Vert f\Vert_{L^{\infty}_{-\beta}}+\Vert f\Vert_{L^{\infty}_{-\beta}}^2
    +|\om-\om_T|(|z|^2+\Vert f\Vert_{L^{\infty}_{-\beta}})).
  \ee
\end{pro}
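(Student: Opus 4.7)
The proposition is essentially a bookkeeping exercise combining the three pieces of machinery already assembled. Using the definition \eqref{HR1} of $\tilde H_R$ I would split it into two summands,
$$
 \tilde H_R = H_R' + \sigma(t)\bigl[{\bf P}_T^c j^{-1} - i({\bf\Pi}_T^{+}-{\bf\Pi}_T^{-})\bigr]h,
$$
and estimate each in the ${\cal M}_\beta$ norm by a triangle inequality.

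For the first summand the bound \eqref{HR'-est} already has almost the desired shape; the only discrepancy is the factor $|\dot\gamma|\,\Vert f\Vert_{L^{\infty}_{-\beta}}$ appearing in the $|\omega-\omega_T|$ term. To convert this into an acceptable numeric bound I would insert the modulation equation \eqref{gamma2} together with the coefficient bounds \eqref{Gij} and the remainder estimate \eqref{Gam}, obtaining $|\dot\gamma|\le{\cal R}(\omega)(|z|^2+\Vert f\Vert_{L^{\infty}_{-\beta}}^2)$. Then $|\dot\gamma|\,\Vert f\Vert_{L^{\infty}_{-\beta}}$ is the product of the small quantity $|z|+\Vert f\Vert_{L^{\infty}_{-\beta}}$ with $|z|\Vert f\Vert_{L^{\infty}_{-\beta}}+\Vert f\Vert_{L^{\infty}_{-\beta}}^2$, so it is absorbed into the ${\cal R}_1$ prefactor multiplying the listed terms.

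For the second summand I would apply Lemma \ref{rem2}, which gives
$$
 \Vert \sigma(t)[{\bf P}_T^c j^{-1} - i({\bf\Pi}_T^{+}-{\bf\Pi}_T^{-})]h\Vert_{L^1_\beta}
 \le |\sigma(t)|\,\Vert h\Vert_{L^{\infty}_{-\beta}};
$$
this controls the ${\cal M}_\beta$ norm because $L^1_\beta$ embeds isometrically into ${\cal M}_\beta$ via $\psi\mapsto\psi+0\cdot\delta$. Since $\sigma(t)=\omega-\omega_T+\dot\gamma$, writing $|\sigma(t)|\le|\omega-\omega_T|+|\dot\gamma|$ and inserting the same bound on $|\dot\gamma|$ as above gives the required form. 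To convert $\Vert h\Vert_{L^{\infty}_{-\beta}}$ into $\Vert f\Vert_{L^{\infty}_{-\beta}}$ I would use $h=f-g$ together with Lemma \ref{g-est}: the inequality $\Vert h\Vert_{L^{\infty}_{-\beta}}\le\Vert f\Vert_{L^{\infty}_{-\beta}}+{\cal R}_1(\omega)|\omega-\omega_T|\,\Vert h\Vert_{L^{\infty}_{-\beta}}$ yields $\Vert h\Vert_{L^{\infty}_{-\beta}}\lesssim \Vert f\Vert_{L^{\infty}_{-\beta}}$ provided $|\omega-\omega_T|$ is sufficiently small, which is guaranteed for $\omega$ near $\omega_0$. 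The resulting contribution $|\omega-\omega_T|\,\Vert f\Vert_{L^{\infty}_{-\beta}}$ is one of the listed terms, while the $|\dot\gamma|\,\Vert f\Vert_{L^{\infty}_{-\beta}}$ contribution is again absorbed.

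\textbf{Main hurdle.} There is no genuine analytic obstacle at this stage: the dispersive and projection estimates are done in Proposition \ref{TD} and appendices, the key algebraic identity is packaged in Lemma \ref{rem2} (proved separately in appendix D), and the raw nonlinear bound is \eqref{HR'-est}. The only subtlety is the algebraic verification that every cross-term produced by the expansion of $|\sigma(t)|\,\Vert h\Vert_{L^{\infty}_{-\beta}}$ and by $|\dot\gamma|\,\Vert f\Vert_{L^{\infty}_{-\beta}}$ fits into one of the four monomials $|z|^3$, $|z|\Vert f\Vert_{L^{\infty}_{-\beta}}$, $\Vert f\Vert_{L^{\infty}_{-\beta}}^2$, $|\omega-\omega_T|(|z|^2+\Vert f\Vert_{L^{\infty}_{-\beta}})$ up to a factor that stays bounded as $\omega\to\omega_0$ and $|z|+\Vert f\Vert_{L^{\infty}_{-\beta}}\to 0$, i.e.\ a factor of type ${\cal R}_1$.
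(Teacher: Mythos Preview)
Your proposal is correct and follows exactly the approach the paper indicates: the paper's entire proof is the single line ``Lemma~\ref{rem2} and the bound (\ref{HR'-est}) imply'', and you have accurately filled in the bookkeeping behind that sentence. The only minor point is that your bound $|\dot\gamma|\le{\cal R}(\omega)(|z|^2+\Vert f\Vert_{L^{\infty}_{-\beta}}^2)$ silently uses $|z|\Vert f\Vert_{L^{\infty}_{-\beta}}\le\tfrac12(|z|^2+\Vert f\Vert_{L^{\infty}_{-\beta}}^2)$ to absorb the $z\langle f,\Gamma'_{10}\rangle$ terms from \eqref{gamma2}, which is harmless.
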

\setcounter{equation}{0}
\section{Canonical form}
\label{eqns-trans}
Our goal is to transform the evolution equations for $(\om,\gamma, z,h)$
to a more simple, canonical form. We will use the idea of normal coordinates,
trying to keep unchanged the estimates for the remainders as much as is
possible. 
\subsection{Canonical form of  equation for $h$}
\label{f-trans}
We expand out the middle term on the right hand side
of (\ref{h3}), obtaining
\be\la{h4}
  \dot h={\bf C}_M(t) h+H_{20}z^2+H_{11}z\overline z+H_{02}\overline z^2+\tilde H_R.
\ee
Here, the coefficients $H_{ij}$ are defined by
\be\la{Hij}
  H_{20}={\bf P}^c_T jE_2[u,u],\quad H_{11}=2{\bf P}^c_T jE_2[u,u^*],
  \quad H_{12}={\bf P}^c_T jE_2[u^*,u^*].
\ee
We want to extract from $h$ the term of order $z^2\sim t^{-1}$.
For this purpose we expand $h$ as
\be\la{h-dec}
  h=h_1+k+k_1,
\ee
where
\be\la{k}
  k=a_{20}z^2+a_{11}z\overline z+a_{02}\overline z^2,
\ee
with some $a_{ij}\equiv a_{ij}(\om,x)$ satisfying $a_{ij}=\overline a_{ij}$, and
\be\la{k1}
  k_1=-\exp\Biggl(\int_0^t {\bf C}_M(\tau)d\tau\Biggr)k(0).
\ee
Note that $k_1$ is just the solution of the corresponding
homogeneous equation $\dot k_1={\bf C}_M k_1$, since the operators
${\bf C}_M(t)$ all commute for different values of $t$.
It follows from $k_1(0)=-k(0)$ that $h_1(0)=h(0)$.
\begin{lemma}\label{h-trans}
  There exist $a_{ij}\in L^{\infty}_{-\beta}$
in (\ref{k}) such that the equation
  for $h_1$ has the form
  \be\la{h5}
    \dot h_1={\bf C}_M(t) h_1+\hat H_R
  \ee
  where $\hat H_R=\tilde H_R+H'$, with
estimates as in \eqref{tHR-est}, and also
\be\la{kb}
\|k\|_{L^{\infty}_{-\beta}}={\cal R}_1(\om)|z|^2.
\ee
\end{lemma}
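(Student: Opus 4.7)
The plan is to choose $a_{ij}=a_{ij}(\omega,x)$ so that the three quadratic source terms $H_{20}z^2+H_{11}z\overline z+H_{02}\overline z^2$ in \eqref{h4} are absorbed into the homogeneous part of the equation for $h_1=h-k-k_1$. Substituting the decomposition \eqref{h-dec} into \eqref{h4} and using $\dot k_1=\mathbf{C}_M(t)k_1$ (which is consistent because the operators $\mathbf{C}_M(t)$ commute), I would obtain
\[
\dot h_1=\mathbf{C}_M(t)h_1+\bigl[\mathbf{C}_M(t)k-\dot k+H_{20}z^2+H_{11}z\overline z+H_{02}\overline z^2\bigr]+\tilde H_R,
\]
so the residual $H'$ is precisely the bracket. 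Differentiating $k$ and using \eqref{z2} written as $\dot z=i\mu z+r_z$, with $r_z$ at least quadratic in $(|z|,\|f\|_{L^\infty_{-\beta}})$, gives
\[
\dot k = 2i\mu\bigl(a_{20}z^2-a_{02}\overline z^2\bigr)+\bigl(\partial_\omega a_{ij}\bigr)\dot\omega\, z^i\overline z^j+\text{(terms in $r_z$)}.
\]
At leading order $\mathbf{C}_M(t)$ may be replaced by $\mathbf{C}$, the difference $(\mathbf{C}_T-\mathbf{C})+i\sigma(t)(\mathbf{\Pi}^+_T-\mathbf{\Pi}^-_T)$ producing a factor $|\omega-\omega_T|+|\sigma(t)|$ in front of $|z|^2$. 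Matching coefficients of $z^2,\,z\overline z,\,\overline z^2$ forces
\[
a_{20}=-(\mathbf{C}-2i\mu-0)^{-1}H_{20},\qquad a_{11}=-\mathbf{C}^{-1}H_{11},\qquad a_{02}=-(\mathbf{C}+2i\mu-0)^{-1}H_{02}.
\]

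The main obstacle is to make sense of $a_{20}$ and $a_{02}$: the spectral condition \eqref{Co} ensures $2\mu>\omega$, so $\pm 2i\mu$ is \emph{embedded} in the continuous spectrum $\mathcal{C}_\pm$ of $\mathbf{C}$. One therefore cannot invert $\mathbf{C}\mp 2i\mu$ on $L^2$ directly; instead the resolvents must be interpreted as boundary values via the limiting absorption principle. The required mapping property $(\mathbf{C}\mp 2i\mu-0)^{-1}:\mathcal{M}_\beta\to L^\infty_{-\beta}$, together with an $\mathcal{R}_1(\omega)$ bound on the operator norm, is a byproduct of the explicit kernel analysis of $\mathbf{R}(\lambda)$ in \eqref{nR} carried out in appendix C to prove Proposition \ref{l1}; this gives $\|a_{ij}\|_{L^\infty_{-\beta}}=\mathcal{R}_1(\omega)$ since each $H_{ij}$ is a fixed element of $\mathcal{M}_\beta$ depending smoothly on $\omega$, from which \eqref{kb} is immediate.

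Finally, one verifies that the residual $H'$ obeys the same bound \eqref{tHR-est} as $\tilde H_R$. Indeed, $H'$ collects: (a) the contribution of $r_z$ to $\dot k$, which is controlled using $|r_z|=\mathcal{R}(\omega,|z|+\|f\|_{L^\infty_{-\beta}})(|z|^2+\|f\|_{L^\infty_{-\beta}})$ and yields terms of order $|z|^3+|z|\|f\|_{L^\infty_{-\beta}}$; (b) the $(\partial_\omega a_{ij})\dot\omega$ contribution, which is of order $|\dot\omega||z|^2=\mathcal{O}(|z|^4+|z|^2\|f\|_{L^\infty_{-\beta}}^2)$ by \eqref{omega2}; and (c) the $(\mathbf{C}_M(t)-\mathbf{C})k$ piece, which is bounded by $(|\omega-\omega_T|+|\sigma(t)|)\,\|k\|_{L^\infty_{-\beta}}=\mathcal{R}_1(\omega)(|\omega-\omega_T|+|\dot\gamma|)|z|^2$. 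Each of these is dominated by the right-hand side of \eqref{tHR-est}, giving the claimed bound on $\hat H_R=\tilde H_R+H'$.
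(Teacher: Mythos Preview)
Your overall strategy---substitute $h=h_1+k+k_1$, match the coefficients of $z^2,z\overline z,\overline z^2$, and invoke the limiting absorption principle at the embedded points $\pm 2i\mu$---is the same as the paper's. Two points, however, need correction.

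First, the paper freezes at time $T$: the matching equations read $({\bf C}_T-2i\mu_T)a_{20}=-H_{20}$, ${\bf C}_T a_{11}=-H_{11}$, $({\bf C}_T+2i\mu_T)a_{02}=-H_{02}$, with $\mu(t)$ replaced by $\mu_T$ and the error $|\mu-\mu_T|\,|z|^2$ thrown into $H'$. Your choice of the time--dependent ${\bf C}$ and $\mu$ is not wrong in itself, but it disconnects the construction from Proposition~\ref{l1}, which is stated for $e^{{\bf C}_Tt}({\bf C}_T\mp 2i\mu_T-0)^{-1}$ and is exactly what one needs in \S\ref{can-sys} to propagate $H'$ under the semigroup.

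Second---and this is the real gap---your claim that $H'$ obeys the bound \eqref{tHR-est} in the ${\cal M}_\beta$ norm is false. The coefficients $a_{20},a_{02}$ are only in $L^\infty_{-\beta}$: since $2\mu>\omega$, the kernel of $({\bf C}_T\mp 2i\mu_T-0)^{-1}$ contains the oscillatory factor $e^{ik_\pm|x|}$ with real $k_\pm$, so $a_{20},a_{02}$ do not decay and certainly do not lie in $L^1_\beta$ or ${\cal M}_\beta$. Consequently the pieces (a), (b), (c) of your $H'$, which are scalar multiples of $a_{ij}$, cannot be controlled in ${\cal M}_\beta$. The paper does not attempt this. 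Instead it exploits the very definition $a_{ij}=-({\bf C}_T-2i\mu_Tm-0)^{-1}H_{ij}$ to rewrite $H'$ in the structured form
\[
H'=\sum_{m\in\{-1,0,1\}}({\bf C}_T-2i\mu_Tm-0)^{-1}A_m,\qquad A_m\in X^c_T,
\]
and then estimates only the \emph{preimages} $A_m$ in $L^1_\beta$ (equation \eqref{Am-est1}). It is this representation, combined with Proposition~\ref{l1}, that later yields the $(1+t)^{-3/2}$ decay of the Duhamel contribution of $H'$ in Lemma~\ref{mh-sol}. Your estimate of $H'$ therefore needs to be replaced by this resolvent factorisation; the norm you should be controlling is $\|A_m\|_{L^1_\beta}$, not $\|H'\|_{{\cal M}_\beta}$.
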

\bp (cf. Section 4.2.2 in \cite {BS})
  We substitute (\ref{k}) into (\ref{h3}) and equate the coefficients of the quadratic
  powers of $z$. In addition we replace  the discrete eigenvalue $\mu(t)$ by its value
  at time $T$, i.e. $\mu_T=\mu(\om(T))$, and include the correction in the remainder.
  Then we get
  $$
     H_{20}-2i\mu_Ta_{20}=-{\bf C}_Ta_{20},\quad\quad H_{11}=-{\bf C}_T a_{11},\quad\quad
     H_{02}+2i\mu_Ta_{02}=-{\bf C}_Ta_{02}
  $$
  and $\hat H_R=\tilde H_R+H'$, where $H'$ is defined as
   \beqn \la{H'}
     H'\!\!\!&=&\!\!\!\sum\pa_\om a_{ij}{\cal R}(\om,|z|
     +\Vert f\Vert_{L^{\infty}_{-\beta}})
     |z|^2 (|z|+\Vert f\Vert_{L^{\infty}_{-\beta}})^2\\
     \nonumber
     \!\!\!&+&\!\!\!\sum a_{ij}{\cal R}(\om,|z|+\Vert f\Vert_{L^{\infty}_{-\beta}})
    |z| (|z|+\Vert f\Vert_{L^{\infty}_{-\beta}})^2
    +\sum a_{ij}{\cal R}(\om)|z|^2|\mu_T-\mu|
    -i\sigma({\bf\Pi}_T^+-{\bf\Pi}_T^-)k.
  \eeqn
  The dependency in $x$ appears here through the coefficients $a_{ij}=a_{ij}(\om,x)$.
  Notice, from \eqref{e2d} and the formulae for the
projection operators in \S\ref{subspace},
that each $H_{ij}\in X^c_T$ is the sum of a multiple of
$\delta(x)$ and a function exponentially
  decreasing  at infinity.
  Hence, there exists a solution  $a_{11}$ in the form
  \be\la{a11}
    a_{11}=-{\bf C}_T^{-1}H_{11},
  \ee
  where ${\bf C}_T^{-1}$ stands for regular part of the resolvent $R(\lam)$
  at $\lam=0$ since the singular part of $R(\lam)H_{11}$ vanishes for $H_{11}\in X^c_T$.
  The function $a_{11}$ is exponentially decreasing at infinity.\\
  For $a_{20}$ and $a_{02}$ we  choose  the following inverse operators:
  \be\la{a20}
     a_{20}=-({\bf C}_T-2i\mu_T-0)^{-1}H_{20},\quad
     a_{02}={\overline a}_{20}=-({\bf C}_T+2i\mu_T-0)^{-1}H_{02},
  \ee
  This choice is motivated by Proposition \ref{l1}, and putting
$t=0$ in that Proposition we have the bound \eqref{kb}.
The remainder $H'$ can be written as
  \be\la{Am}
     H'=\sum\limits_m({\bf C}_T-2i\mu_Tm-0)^{-1}A_m,\quad m\in\{-1,0,1\}
  \ee
  with   $A_m\in X^c_T$, satisfying the estimate
  \be\la{Am-est1}
    \Vert A_{m}\Vert_{L^1_\beta}={\cal R}(\om,|z|+\Vert f\Vert_{L^{\infty}_{-\beta}})
    |z|\Bigl(|z||\om_T-\om|+(|z|+\Vert f\Vert_{L^{\infty}_{-\beta}})^2\Bigr).
  \ee
\ep
\subsection{Canonical form of equation for $\om$}
\label{om-trans}
We want to remove all terms in the right hand side of (\ref{omega2}) except the remainder
$\Omega_R$. This is possible by methods of \cite{BS}
since $\Omega_{11}=0$ by (\ref{Omega11}).

\begin{lemma}\label{bij-exist}
  There exist coefficients $b_{ij}(\om)$, $0\le i,j\le 3$, and  
exponentially decreasing functions
$b'_{ij}(x,\om)$, $0\le i,j\le 1$, such that real-valued
function $\om_1$ defined as
  \be\la{new-om}
    \om_1=\om+b_{20}z^2+b_{02}\ov z^2+b_{30}z^3+b_{21}z^2\ov z+b_{12}z\ov z^2
    +b_{03}\ov z^3+z\langle f,b'_{10}\rangle+\overline z\langle f,b'_{01}\rangle\\
  \ee
  obeys a differential equation of the form
  \be\la{n-om}
    \dot\om_1=\hat\Omega_R,
  \ee
  where $\hat\Omega_R$ satisfies the same estimate (\ref{Om}) as $\Omega_R$:
  \be\la{hOm}
    |\hat\Omega_R|={\cal R}(\om,|z|+\Vert f\Vert_{L^{\infty}_{-\beta}})
    (|z|^2+\Vert f\Vert_{L^{\infty}_{-\beta}})^2
  \ee
\end{lemma}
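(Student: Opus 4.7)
The approach is a Poincar\'e-type near-identity change of coordinates (a normal form): choose the scalar coefficients $b_{ij}(\om)$ and the $x$-dependent profiles $b'_{ij}(x,\om)$ so that, upon differentiating \eqref{new-om} and substituting the leading-order dynamics of $z$ and $f$, every non-remainder term on the right-hand side of \eqref{omega2} is cancelled and only contributions satisfying \eqref{hOm} remain.

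Differentiating \eqref{new-om} and using $\dot z=i\mu z+O(|z|^2+\Vert f\Vert^2_{L^\infty_{-\beta}})$ from \eqref{z2}, a monomial $b_{ij}z^i\bar z^j$ contributes $(i-j)i\mu\,b_{ij}z^i\bar z^j$ to $\dot\om_1$ at leading order; the variation of the coefficient, $(\partial_\om b_{ij})\dot\om\,z^i\bar z^j$, is already cubic in $(|z|,\Vert f\Vert_{L^\infty_{-\beta}})$ because $\dot\om$ itself is quadratic, so it falls into $\hat\Omega_R$. Matching against \eqref{omega2} yields
\[
 b_{20}=-\frac{\Omega_{20}}{2i\mu},\qquad b_{30}=-\frac{\Omega_{30}}{3i\mu},\qquad b_{21}=-\frac{\Omega_{21}}{i\mu},
\]
together with $b_{02},b_{12},b_{03}$ obtained from the conjugate relations. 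All small divisors $(i-j)i\mu$ are nonzero except at the resonant index $i=j=1$; that single potential obstruction is removed precisely by the identity $\Omega_{11}=0$ proved in \eqref{Omega11}, which is why no $b_{11}$ needs to appear in the ansatz. The symmetries $\Omega_{ji}=\overline{\Omega_{ij}}$ inherited from the computation in Section \ref{lt-2} force $b_{02}=\overline{b_{20}}$, $b_{12}=\overline{b_{21}}$, $b_{03}=\overline{b_{30}}$, so the added polynomial, and hence $\om_1$, is real.

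For the $f$-linear terms, differentiating $z\langle f,b'_{10}\rangle$ and inserting $\dot z=i\mu z+\cdots$ together with $\dot f={\bf C}f+\cdots$ from \eqref{f} gives, at leading order, $z\langle f,(-i\mu+{\bf C}^*)b'_{10}\rangle$ by antilinearity of the Hermitian pairing. Matching with the $z\langle f,\Omega'_{10}\rangle$ term of \eqref{omega2} forces the resolvent equation $({\bf C}^*-i\mu)b'_{10}=-\Omega'_{10}$, and analogously $({\bf C}^*+i\mu)b'_{01}=-\Omega'_{01}$. By Remark \ref{Om-sub} we may assume $\Omega'_{10},\Omega'_{01}\in X^c$, so only the restriction of these operators to the continuous spectral subspace is relevant. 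The spectral condition \eqref{Co} places $\pm i\mu$ strictly in the gap between $0$ and the continuous spectrum of ${\bf C}$, so these resolvents are bounded on $X^c$; the explicit kernel \eqref{nR} then yields the exponential decay of $b'_{10},b'_{01}$ at infinity.

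Finally, collecting all contributions left after the above cancellations --- namely the original $\Omega_R$ from \eqref{omega2}, the $(\partial_\om b_{ij})\dot\om\,z^i\bar z^j$ terms, the parts of $\dot z-i\mu z$ and $\dot{\bar z}+i\mu\bar z$ multiplied by the $b$-coefficients, and the parts of $\dot f-{\bf C}f$ coming from \eqref{f} --- and noting that each contributes at least one further power of $|z|^2+\Vert f\Vert_{L^\infty_{-\beta}}$ beyond the term being cancelled, yields the remainder estimate \eqref{hOm}. \textbf{The main obstacle} is the construction of $b'_{10}$ and $b'_{01}$: one must invert the non-self-adjoint operator ${\bf C}^*\mp i\mu$ on $X^c$ and establish both its boundedness and the exponential decay of the solution, which relies crucially on the spectral gap provided by \eqref{Co} and on the explicit resolvent formula \eqref{nR}. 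A secondary care-point is checking that when $\dot f$ is replaced by ${\bf C}f$ in the differentiation of the $f$-linear terms, the discarded pieces are indeed controlled by the right-hand side of \eqref{hOm}, which uses the cubic-or-higher character of the remainder in \eqref{f} together with the smallness of $f$ in $L^\infty_{-\beta}$.
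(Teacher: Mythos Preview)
Your overall strategy---the Poincar\'e normal-form construction---is exactly that of the paper, and your treatment of the quadratic coefficients $b_{20},b_{02}$ and of the resolvent equations for $b'_{10},b'_{01}$ is correct. However, your explicit formulas
\[
b_{30}=-\frac{\Omega_{30}}{3i\mu},\qquad b_{21}=-\frac{\Omega_{21}}{i\mu}
\]
are wrong, and the error propagates into your final paragraph. When you differentiate $b_{20}z^2$ and substitute the \emph{full} $\dot z$ from \eqref{z2}, the quadratic part $Z_{20}z^2+Z_{11}z\bar z+Z_{02}\bar z^2$ of $\dot z-i\mu z$ produces new cubic monomials $2b_{20}Z_{20}z^3$, $2b_{20}Z_{11}z^2\bar z$, etc. Likewise, differentiating $z\langle f,b'_{10}\rangle$ and substituting $\dot f$ from \eqref{f} brings in $z\langle {\bf P}^cjE_2[\rw,\rw],b'_{10}\rangle$, which again contributes at cubic order in $z$. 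These ``feedback'' terms are of size $O(|z|^3)$, which is \emph{not} dominated by $(|z|^2+\Vert f\Vert_{L^\infty_{-\beta}})^2$ for small $|z|$; hence your assertion that ``each contributes at least one further power of $|z|^2+\Vert f\Vert_{L^\infty_{-\beta}}$ beyond the term being cancelled'' fails precisely here, and the remainder bound \eqref{hOm} would be violated.

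The fix is to absorb these feedback terms into the matching equations for the cubic coefficients, exactly as the paper does in system \eqref{system}: for instance
\[
\Omega_{30}+2Z_{20}b_{20}+3i\mu b_{30}+\langle F_{20},b'_{10}\rangle=0,
\]
and similarly for $b_{21}$. Since the divisors $3i\mu$, $i\mu$ are still nonzero, solvability is unaffected; only the values of $b_{30},b_{21},b_{12},b_{03}$ change. Once this correction is made, the leftover contributions from differentiating the cubic terms and the $f$-linear terms are genuinely $O(|z|^4)$ or $O(|z|^2\Vert f\Vert_{L^\infty_{-\beta}})$, and your closing argument goes through.
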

\bp
  The calculation follows the classical method of normal coordinates.
  Substituting $\dot\om$, $\dot z$,  and $\dot f$ from
  (\ref{omega2}), (\ref{z2}), (\ref{f}) into the equation for $\dot\om_1$ and
  comparing  the coefficients of $z^2$, $zf$, etc.
  leads to a system of equations for the coefficients $b_{20}$,  $b'_{10}$, ets.
  (cf. \cite[Proposition 4.1]{BS})
  \beqn \nonumber
    &&\Omega_{20}+2i\mu b_{20}=0,\\
    \nonumber
    &&\Omega'_{10}+i\mu b'_{10}+{\bf C}^*b'_{10}=0,\\
   \la{system}
    &&\Omega_{21}+2Z_{11}b_{20}+i\mu b_{21}+2Z_{20}b_{02}+
    \langle F_{11},b'_{10}\rangle+\langle F_{20},b'_{01}\rangle=0,\\
    \nonumber
    &&\Omega_{30}+2Z_{20}b_{20}+3i\mu b_{30}+\langle F_{20},b'_{10}\rangle=0.
    \eeqn
  From the first equation of (\ref{system}) we obtain
  \be\la{b20}
      b_{20}=\ov b_{02}=\frac{i}{2\mu}\Omega_{20},
  \ee
  Multiply the second equation of (\ref{system}) by $j$ we get
  $j\Omega'_{10}+i\mu jb'_{10}-{\bf C}jb'_{10}=0$
  since $j{\bf C}^*=-{\bf C}j$.
  Without loss of generality we can assume that $j\Omega'_{10}\in X^c_t$
  by Remark \ref{Om-sub}. Therefore, there exists the solution $ b'_{10}$ in the form
  \be\la{b'10}
      b'_{10}=\ov{b'}_{01}=-j({\bf C}-i\mu)^{-1}j\Omega'_{10},
  \ee
  where $({\bf C}-i\mu)^{-1}$ stands for regular part of the resolvent $R(\lam)$
  at $\lam=i\mu$ since the singular part of $R(\lam)j\Omega'_{10}$ vanishes
  for $j\Omega'_{10}\in X^c_T$.
  The functions $b'_{10}$, $b'_{01}$ decrease  exponentially
at infinity, and
the equations for $b_{21}=\ov b_{12}$,  $b_{30}=\ov b_{03}$ can
be easily solved.
\ep
\subsection{Canonical form of  equation for $z$}
\label{z-transform}
In this section we obtain a canonical form of the equation (\ref{z2}) for $z$,
and carry out a computation of the coefficient of the resonant ``$z^2\overline{z}$ '' term,
which gives the Fermi Golden Rule. Substituting
(\ref{gh}) and (\ref{h-dec}) into (\ref{z2}) and putting the contribution of
$g+h_1+k_1$ in the remainder $\tilde Z_R$, we obtain
\beqn\la{z3}
  \dot z&=&i\mu z+Z_{20}z^2+Z_{11}z\overline z+Z_{02}\overline z^2
  +Z_{30}z^3+Z_{21}z^2\overline z+Z_{12}z\overline z^2+Z_{03}\overline z^3\\
  \nonumber
  &+&Z_{30}'z^3+Z_{21}'z^2\overline z+Z_{12}'z{\overline z}^2
  +Z_{03}'{\overline z}^3+\tilde Z_R.
\eeqn
We have by (\ref{h-dec})-(\ref{k})
\be\la{Z3}
  Z_{30}'=\!\langle a_{20},Z'_{10}\rangle,~~
  Z_{21}'=\!\langle a_{11},Z'_{10}\rangle\!+\!\langle a_{20},Z'_{01}\rangle,~~
  Z_{03}'=\!\langle a_{02},Z'_{01}\rangle,~~
  Z_{12}'=\!\langle a_{02},Z'_{10}\rangle\!+\!\langle a_{11},Z'_{01}\rangle
\ee
We are particularly interested in the resonant term  $Z'_{21}z^2\ov z$.
Formulas (\ref{Zij}), (\ref{Hij}), (\ref{a11}), (\ref{a20}) imply
\be\la{Z21}
  Z'_{21}=-\langle {\bf C}_T^{-1}2{\bf P}_T^cj E_2[u,u^*],
  2\frac{E_2[u,u^*]}{\ov\varkappa}\rangle
  -\langle ({\bf C}_T-2i\mu_T-0)^{-1}{\bf P}_T^cjE_2[u,u],
  2\frac{E_2[u,u]}{\ov\varkappa}\rangle.
\ee
For the coefficient $\varkappa=\varkappa(\om)$ we get (see \cite[Proposition 3.1]{BS})
\be\la{uu}
    \varkappa=\langle u,ju\rangle=i\delta,\quad {\rm with}\quad \delta>0.
\ee
\begin{lemma}\label{pv}
  Suppose that the non-degeneracy condition (\ref{FGR})  is satisfied, then
  \be\la{ReZ}
    \rRe Z'_{21}<0
  \ee
  for $\om$ in some vicinity of $\om_0$.
\end{lemma}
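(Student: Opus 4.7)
\bp
We use \eqref{uu} (so that $\bar\varkappa=-i\delta$) to rewrite \eqref{Z21} as
$$
Z'_{21}=\frac{4i}{\delta}\5\mathcal{I}_1+\frac{2i}{\delta}\5\mathcal{I}_2,\quad
\mathcal{I}_1=\langle \bC_T^{-1}\bP^c_T jE_2[u,u^*],E_2[u,u^*]\rangle,\quad
\mathcal{I}_2=\langle(\bC_T-2i\mu_T-0)^{-1}\bP^c_T jE_2[u,u],E_2[u,u]\rangle,
$$
so that $\rRe Z'_{21}=-4\delta^{-1}\rIm\mathcal{I}_1-2\delta^{-1}\rIm\mathcal{I}_2$. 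The plan is to show that (i) $\mathcal{I}_1\in\bR$, so the first term drops out, and (ii) $\rIm\mathcal{I}_2$ is a nonzero definite-sign multiple of $|\langle E_2[u,u],\tau_+(2i\mu_T)\rangle|^2$. The FGR hypothesis \eqref{FGR} is continuous in $\om$, so it persists in a neighbourhood of $\om_0$, and the sign of the multiplicative constant (determined by the Jacobian of the branch $k_+$ at $\lam=2i\mu_T$) combined with the coefficient $-2/\delta$ delivers $\rRe Z'_{21}<0$.

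For step (i), the choice of $u$ (with $u_1$ real and $u_2$ purely imaginary) gives $u^*=\overline u$, and since $a',a''$ are real and $\Psi$ is real, direct inspection of \eqref{e2d} shows $E_2[u,u^*]$ is a real $\bC^2$-valued distribution. The factorisation $\bC=j^{-1}\bB$ with $\bB$ self-adjoint yields the identity $\bC^{-1}j=-\bB^{-1}$ on $X^c_T$ (valid because $0$ is in the resolvent set of $\bB_T|_{X^c_T}$); hence $\mathcal{I}_1=-\langle\bP^c_T\bB_T^{-1}E_2[u,u^*],E_2[u,u^*]\rangle$. Since $\bB_T|_{X^c_T}$ has a real integral kernel, $\bB_T^{-1}E_2[u,u^*]$ is real, and the subsequent application of the symplectic projector \eqref{defsp}--\eqref{defspd} followed by pairing against the real vector $E_2[u,u^*]$ produces a real number.

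For step (ii), we apply Sokhotski--Plemelj to the explicit resolvent kernel \eqref{nR} at $\lam=2i\mu_T\in\mathcal{C}_+$. Since $2\mu_T>\om_T$, only the branch $k_+(\lam)=\sqrt{-\om-i\lam}$ contributes a jump across the cut, while $k_-$ stays analytic. Isolating the $i\pi$-piece of the jump produces a rank-one distributional operator supported on the generalised eigenspace of $\bC_T$ at $2i\mu_T$, which by Appendix A is spanned by $\tau_+(2i\mu_T)$. The principal-value part is treated by the self-adjoint reduction of step (i) applied now to the self-adjoint pencil $\bB_T-2i\mu_T j$ (self-adjoint because $j^*=-j$ makes $-2i\mu_T j$ Hermitian), so it contributes only to $\rRe\mathcal{I}_2$. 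Matrix-element computation then gives
$$
\rIm\mathcal{I}_2=\pi c\,\bigl|\langle E_2[u,u],\tau_+(2i\mu_T)\rangle\bigr|^2
$$
with an explicit constant $c$ of definite sign. The condition \eqref{FGR} guarantees strict positivity, and the lemma follows by continuity of all quantities in $\om$.

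The main obstacle is the rigorous implementation of Sokhotski--Plemelj for the non-self-adjoint $\bC_T$: the scalar formula $(x-i0)^{-1}=\mathrm{P.V.}\,x^{-1}+i\pi\delta(x)$ must be upgraded to an operator statement. This is accomplished by the reduction $(\bC_T-\lam)^{-1}j=-(\bB_T-\lam j)^{-1}$, which converts the question into standard spectral theory for the self-adjoint operator $\bB_T-\lam j$ at its spectral point $0$; equivalently one works directly with \eqref{nR} and extracts the jump across the $k_+$ cut. Either route identifies the distributional contribution with the matrix element in \eqref{FGR}.
\ep
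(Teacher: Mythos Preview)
Your overall strategy coincides with the paper's: split $Z'_{21}$ into the $\mathcal I_1$ and $\mathcal I_2$ pieces, show $\mathcal I_1\in\R$ so only $\rIm\mathcal I_2$ survives, and identify the latter with a definite-sign multiple of $|\langle E_2[u,u],\tau_+(2i\mu)\rangle|^2$ via a boundary-value computation on the cut.

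For step (i) the paper is more direct: it simply observes that $\bC_T^{-1}\bP_T^c j$ is self-adjoint (from $\bC_T=j^{-1}\bB_T$ and $(\bP_T^c)^*=-j\bP_T^c j$), so $\mathcal I_1=\langle(\bC_T^{-1}\bP_T^c j)G,G\rangle$ is real for any $G$. Your route via realness of $E_2[u,u^*]$ and of the kernel of $\bB_T^{-1}$ also works, though you should note that $\bP_T^c$ preserves real vectors because $\bP^0$ is manifestly real and $\bP^1\psi=cu+\bar c\,\bar u$ whenever $\psi$ is real (here $u^*=\bar u$); you assert this without justification.

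Step (ii) is where the gap lies. Your ``self-adjoint pencil'' route does not go through as written: the limiting prescription $(\bC_T-2i\mu_T-0)^{-1}$ means $\lam=2i\mu_T+\eps$ with $\eps\to 0^+$, and under your reduction this becomes $-(\bB_T-2i\mu_T j-\eps j)^{-1}$. The regularising perturbation is $-\eps j$, which is skew-adjoint and non-scalar, so Stone's formula $(A-i0)^{-1}=\mathrm{P.V.}A^{-1}+i\pi\delta(A)$ is not directly available and the ``PV part is real'' conclusion is not automatic. The paper instead expresses $2i\,\rIm\langle R(2i\mu+0)\alpha,j\alpha\rangle$ as the resolvent jump $\langle[R(2i\mu+0)-R(2i\mu-0)]\alpha,j\alpha\rangle$, uses the explicit eigenfunction expansion
\[
R(\lam+0)-R(\lam-0)=-\frac{\tau_\pm(\lam)\otimes\overline{\tau_\pm(\lam)}}{8ik_\pm D\bar D}\,j-\frac{s_\pm(\lam)\otimes\overline{s_\pm(\lam)}}{2ik_\pm}\,j,\qquad \lam\in\mathcal C_\pm,
\]
and then a scalar Sokhotski--Plemelj in the spectral variable $\nu$. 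The integrals over $\mathcal C_-$ and the PV-part over $\mathcal C_+$ are visibly real (they are integrals of $|\langle\tau_\pm,j\alpha\rangle|^2$ against real densities), so the imaginary part reduces to the single term at $\nu=2\mu$, with explicit sign coming from $k_+(2i\mu_T+0)<0$. You mention the second route (``work directly with \eqref{nR} and extract the jump across the $k_+$ cut'') but do not carry out the computation; in particular the crucial sign of your constant $c$ is never determined, and that is exactly the content of the lemma.
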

\bp
  We first notice that the coefficient
  $\langle {\bf C}_T^{-1}2{\bf P}_T^cj E_2[u,u^*],E_2[u,u^*]\rangle$
  appearing in the expression (\ref{Z21}) for $Z'_{21}$ is real,
  since the operator ${\bf C}_T^{-1}2{\bf P}_T^cj$ is selfadjoint.
  Hence by (\ref{uu}) $\rRe Z'_{21}$ reduces to
 $$
    \rRe Z'_{21}=-\rRe 2\frac{\langle ({\bf C}_T\!-2i\mu_T-0)^{-1}{\bf P}_T^c
    jE_2[u,u], E_2[u,u]\rangle}{\varkappa}
    =-\frac 2{\delta}\rIm\langle R(2i\mu+0){\bf P}_T^c
    jE_2[u,u], E_2[u,u]\rangle
 $$
  where we denote
  $R(\lam)=R_T(\lam)=({\bf C}_T-\lam)^{-1}$, $\rRe\lam>0$ and $\mu=\mu_T$.
  Using that ${\bf P}_T^c$ commutes with $ R(2i\mu+0)$, we have
  $ R(2i\mu+0){\bf P}_T^c={\bf P}_T^c R(2i\mu+0){\bf P}_T^c$.
  We have also that $({\bf P}_T^c)^*=-j{\bf P}_T^cj$, hence
  $$
    \rRe Z'_{21}=\frac 2{\delta}\rIm\langle  R(2i\mu+0)\al, j\al\rangle
  ~~~{\rm with}~~~~\al={\bf P}_T^cjE_2[u,u].
  $$
 The function $\lam\mapsto\langle R(\lam)\al,j\al\rangle$ is analytic
  in the region $\C\setminus ({\cal C}_+\cup{\cal C}_-)$ since  $\al\in X_T^c$.
  Hence by the Cauchy residue theorem we have
  \be\la{sign1}
     \langle  R(2i\mu+0)\al, j\al\rangle=
     -\frac 1{2\pi i}\int\limits_{{\cal C}_+\cup{\cal C}_-}d\lam~
     \frac {\langle (R(\lam+0)-R(\lam-0))\al, j\al\rangle}{\lam-2i\mu-0}
  \ee
  Now we use the representation
  \be\la{R-rep}
    R(\lam+0)-R(\lam-0)=-\frac {\tau_{\pm}(\lam)\otimes \ov \tau_{\pm}(\lam)}
    {8ik_{\pm}D\ov D}j-\frac {s_{\pm}(\lam)\otimes \ov s_{\pm}(\lam)}
    {2ik_{\pm}}j,\quad \lam\in {\cal C}_{\pm} 
  \ee
  where $D=D(\lam+0)$, $k_{\pm}=k_{\pm}(\lam+0)$, and
  $\tau_{\pm}(\lam)$ and $s_{\pm}(\lam)$
  are the even and the odd eigenfunctions of the operator ${\bf C}_T$
  corresponding to  $\lam\in {\cal C}_{\pm}$ (see appendix B).
  The representation (\ref{R-rep})  can be checked by direct calculation
  using formulas (\ref{dec})-(\ref{tau1}) for the resolvent.
  Then equation (\ref{sign1}) becomes
  \be\nonumber
     \langle R(2i\mu+0)\al, j\al\rangle
     =-\frac 1{16\pi }\int_{-\infty}^{-\om}\;\frac{d\nu}{k_-|D|^2}
     \frac{\langle\tau_-,j\al\rangle\ov{\langle\tau_-,j\al\rangle}}{\nu-2\mu}
     -\frac 1{16\pi }\int_{\om}^{\infty}\;\frac{d\nu}{k_+|D|^2}
     \frac{\langle\tau_+,j\al\rangle\ov{\langle\tau_+,j\al\rangle}}{\nu-2\mu+i0}
  \ee
  since the function $\al$ is even. Using that
  $
    \frac 1{\nu+i0}=p.v.\frac 1{\nu}-i\pi\delta(\nu)
  $
  where p.v. is the Cauchy principal value, we have
  \beqn\nonumber
   \langle  R(2i\mu+0)\al, j\al\rangle
     \!\!\!&=&\!\!\!-\frac 1{16\pi }\int_{-\infty}^{-\om}\;\frac{d\nu}{k_-|D|^2}
     \frac{\langle\tau_-,j\al\rangle\ov{\langle\tau_-,j\al\rangle}}{\nu-2\mu}
     -\frac 1{16\pi }p.v.\int_{\om}^{\infty}\;\frac{d\nu}{k_+|D|^2}
     \frac{\langle\tau_+,j\al\rangle\ov{\langle\tau_+,j\al\rangle}}{\nu-2\mu}\\
     \nonumber
     \!\!\!&+&\!\!\!\frac i{16}\frac{\langle\tau_+(2i\mu),j\al\rangle
     \ov{\langle\tau_+(2i\mu),j\al\rangle}}{k_+(2i\mu+0)|D(2i\mu+0)|^2}.
  \eeqn
  Since the integral terms are real, this implies
  $$
    \rIm\langle  R_T(2i\mu_T+0)\al, j\al\rangle=
    \frac{|\langle\tau_+(2i\mu_T), E_2[u,u]\rangle|^2}
    {16k_+(2i\mu_T+0)|D(2i\mu_T+0)|^2}.
  $$
  The non-degeneracy condition (\ref{FGR}) implies that
  $\langle\tau_+(2i\mu_T), E_2[u,u]\rangle\not =0$ in some vicinity of
  $\om_0$. Using also the inequality $k_+(2i\mu_T+0)<0$,
we deduce $\rRe Z'_{21}<0$.
\ep
Now we estimate  the remainder $\tilde Z_R$.
\begin{lemma}\label{tiZR}
  The remainder $\tilde Z_R$ has the form
  \be\la{tZR}
    \tilde Z_R={\cal R}_1(\om,|z|+\Vert f\Vert_{L^{\infty}_{-\beta}})
    \Bigl[(|z|^2+\Vert f\Vert_{L^{\infty}_{-\beta}})^2+
    |z||\om_T-\om|\Vert h\Vert_{L^{\infty}_{-\beta}}
    +|z|\Vert k_1\Vert_{L^{\infty}_{-\beta}}+|z|\Vert h_1\Vert_{L^{\infty}_{-\beta}}\Bigr]
  \ee
\end{lemma}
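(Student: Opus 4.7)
The approach is simply to trace carefully how $\tilde Z_R$ arises from equation (\ref{z2}) after the substitutions $f=g+h$ and $h=h_1+k+k_1$, and then to apply the previously established estimates term by term. I would begin by writing out explicitly:
\[
\tilde Z_R \;=\; Z_R \;+\; z\langle g+h_1+k_1,Z'_{10}\rangle\;+\;\overline z\langle g+h_1+k_1,Z'_{01}\rangle,
\]
since the only contribution of $k$ in the linear-in-$f$ pieces of (\ref{z2}) is precisely what produces the coefficients $Z'_{30},\dots,Z'_{03}$ in (\ref{z3}) via formulas (\ref{Z3}), while the other components of $f$ are collected into the new remainder.

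Next I would estimate each summand separately. The contribution $Z_R$ satisfies bound (\ref{ZR}), so its size is controlled by ${\cal R}(\omega,|z|+\|f\|_{L^\infty_{-\beta}})(|z|^2+\|f\|_{L^\infty_{-\beta}})^2$, which is exactly the first term on the right of (\ref{tZR}). For the three remaining pairings, observe that $Z'_{10}=2E_2[u^*,u]/\overline\varkappa$ and $Z'_{01}=2E_2[u,u]/\overline\varkappa$ are distributions of the form $\delta(x)$ multiplied by an exponentially localized factor built from $u,u^*,\Psi_\omega$; in particular $\|Z'_{ij}\|_{{\cal M}_\beta}={\cal R}_1(\omega)$. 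Consequently
\[
\bigl|\langle\varphi,Z'_{ij}\rangle\bigr|\le {\cal R}_1(\omega)\,\|\varphi\|_{L^\infty_{-\beta}}
\]
for $\varphi\in\{g,h_1,k_1\}$. Applying Lemma \ref{g-est} to bound $\|g\|_{L^\infty_{-\beta}}$ by ${\cal R}_1(\omega)|\omega-\omega_T|\,\|h\|_{L^\infty_{-\beta}}$ produces the term $|z||\omega_T-\omega|\|h\|_{L^\infty_{-\beta}}$, while the $h_1$ and $k_1$ contributions directly yield the terms $|z|\|h_1\|_{L^\infty_{-\beta}}$ and $|z|\|k_1\|_{L^\infty_{-\beta}}$ in (\ref{tZR}).

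There is essentially no analytic obstacle here; the lemma is a bookkeeping exercise once the decomposition (\ref{h-dec}) and the auxiliary estimates of Section \ref{f-dec} are available. The one point requiring small care is the verification that the pairings $\langle\varphi,Z'_{ij}\rangle$ make sense given that $Z'_{ij}$ contains a $\delta(x)$ factor: since $g,h_1,k_1$ all lie in $L^\infty_{-\beta}\cap C(\R)$ (they belong to the continuous spectral subspaces of $\bC$ at two close values of $\omega$, whose generalized eigenfunctions are continuous at the origin), evaluation at $x=0$ is well defined and controlled by the $L^\infty_{-\beta}$ norm. Collecting these four bounds and absorbing constants into a common ${\cal R}_1$ yields (\ref{tZR}).
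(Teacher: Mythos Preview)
Your proposal is correct and follows essentially the same argument as the paper: write $\tilde Z_R=Z_R+z\langle f-k,Z'_{10}\rangle+\overline z\langle f-k,Z'_{01}\rangle$, use $f-k=g+h_1+k_1$, bound $Z_R$ by (\ref{ZR}), and control the pairings via $\|Z'_{ij}\|$ together with Lemma~\ref{g-est} for the $g$ piece. Your extra remark on the well-definedness of the $\delta(x)$ pairing is a welcome clarification but not a departure from the paper's route.
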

\bp
  The remainder $\tilde Z_R$ is given by
  $$
    \tilde Z_R=Z_R+z\langle f-k,Z'_{10}\rangle+\overline z\langle f-k,Z'_{01}\rangle
  $$
   where $Z_R$ satisfies estimate (\ref{ZR}).
  Since $f-k=g+k_1+h_1$, we have by (\ref{g})
  $$
    |\langle f-k,Z'_{10}\rangle|\le
    {\cal R}(\om)(\Vert g\Vert_{L^{\infty}_{-\beta}}
    +\Vert k_1\Vert_{L^{\infty}_{-\beta}}+\Vert h_1\Vert_{L^{\infty}_{-\beta}})
    \le{\cal R}_1(\om)(|\om_T-\om|\Vert h\Vert_{L^{\infty}_{-\beta}}
    +\Vert k_1\Vert_{L^{\infty}_{-\beta}}+\Vert h_1\Vert_{L^{\infty}_{-\beta}})
  $$
  which implies (\ref{tZR}).
  \ep
We can apply now the method of normal coordinates to equation (\ref{z3}).
\begin{lemma}\label{nc}  (cf. \cite[Proposition 4.9]{BS})\\
  There exist coefficients $c_{ij}$ such that the new function $z_1$ defined by
  \be\la{new-z}
    z_1=z+c_{20}z^2+c_{11}z\ov z+c_{02}\ov z^2+c_{30}z^3
    +c_{12}z\ov z^2+c_{03}\ov z^3,
  \ee
satisfies an equation of the form
 $\dot z_1=i\mu(\om)z_1+iK(\om)|z_1|^2z_1+\hat Z_R$,
 where $\hat Z_R$ satisfies estimates of the same type as $\tilde Z_R$, and
  \be\la{Re K}
    \rRe(iK)=\rRe Z'_{21}<0.
  \ee
\end{lemma}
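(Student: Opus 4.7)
The plan is a classical Poincar\'e--Dulac normal form reduction of \eqref{z3}, along the lines of \cite[Proposition~4.9]{BS}: the coefficients $c_{ij}(\omega)$ in \eqref{new-z} are to be chosen so that every non-resonant polynomial monomial in $z,\bar z$ of degree $\le 3$ is cancelled, leaving the unique resonant cubic $z^2\bar z=|z|^2 z$ with some coefficient $iK(\omega)$.

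First, I differentiate \eqref{new-z} in $t$ and substitute \eqref{z3} for $\dot z$, together with its complex conjugate for $\dot{\bar z}$. The rotation part $\dot z=i\mu z+\cdots$ contributes $c_{pq}\bigl((p-q)-1\bigr)i\mu$ to the coefficient of $z^p\bar z^q$ in $\dot z_1-i\mu z_1$. Solving the homological equations
\[
c_{pq}\bigl((p-q)-1\bigr)i\mu=-Z_{pq}^{\rm acc},
\]
where $Z_{pq}^{\rm acc}$ gathers the coefficient of $z^p\bar z^q$ from \eqref{z3} together with all contributions generated by previously chosen $c_{ij}$, is possible whenever $p-q\ne 1$. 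In the range $2\le p+q\le 3$ the only resonant index is $(p,q)=(2,1)$, so all non-resonant monomials up to cubic order are eliminated and the equation reduces to $\dot z_1=i\mu z_1+iK|z_1|^2 z_1+\hat Z_R$.

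Second, the surviving coefficient $iK$ is the sum of $Z_{21}+Z'_{21}$ and the bilinear shifts $2c_{20}Z_{11}+c_{11}Z_{20}+c_{11}\overline{Z_{11}}+2c_{02}\overline{Z_{02}}$ produced when the quadratic nonlinearity of \eqref{z3} acts on the quadratic corrections of \eqref{new-z}. Using the explicit values $c_{20}=iZ_{20}/\mu$, $c_{11}=-iZ_{11}/\mu$, $c_{02}=-iZ_{02}/(3\mu)$ from Step~1 together with \eqref{uu}, which asserts $\varkappa=i\delta\in i\mathbb{R}$, and the structural fact that the inner products $\langle E_2[u,u],u\rangle$, $\langle E_2[u,u^*],u\rangle$, $\langle E_2[u^*,u^*],u\rangle$ are real (since $\Psi=(\Psi_1,0)$ is real and $u^*=\bar u$ with $u_1$ real, $u_2$ purely imaginary), a direct algebraic computation shows that $Z_{21}$ and each of the above bilinear shifts is purely imaginary. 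Hence $\rRe(iK)=\rRe Z'_{21}$, and Lemma~\ref{pv} then yields \eqref{Re K}.

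The remainder $\hat Z_R$ collects $\tilde Z_R$ (bounded by \eqref{tZR}), the polynomial residuals of degree $\ge 4$ generated by the substitution, the terms $\partial_\omega c_{ij}(\omega)\,\dot\omega$ absorbed via $|\dot\omega|=|\hat\Omega_R|$ from \eqref{n-om}, and the difference $iK(|z_1|^2 z_1-|z|^2 z)=O(|z|^4)$ produced on inverting \eqref{new-z}. Each admits a bound of the same form as \eqref{tZR}, giving the claimed estimate. The main obstacle will be the algebraic identity in Step~2: without the cancellation forcing every contribution to $iK$ apart from $Z'_{21}$ to be purely imaginary, the Fermi Golden Rule content of $Z'_{21}$ could be masked by $\mathcal O(1)$ real contributions coming from $Z_{20},Z_{11},Z_{02}$ through the quadratic normal form transformation, destroying the dissipative sign $\rRe(iK)<0$ that is needed to drive the subsequent $t^{-1/2}$ decay of $|z(t)|$.
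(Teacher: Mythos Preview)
Your proposal is correct and follows essentially the same route as the paper: a Poincar\'e--Dulac reduction of \eqref{z3} with the same values $c_{20}=iZ_{20}/\mu$, $c_{11}=-iZ_{11}/\mu$, $c_{02}=-iZ_{02}/(3\mu)$, the same formula $iK=Z_{21}+Z'_{21}+c_{11}Z_{20}+2c_{20}Z_{11}+c_{11}\overline{Z_{11}}+2c_{02}\overline{Z_{02}}$, and the same observation that $Z_{20},Z_{11},Z_{02},Z_{21}$ are purely imaginary (forcing $\rRe(iK)=\rRe Z'_{21}$). Your structural explanation for the imaginariness---via $\varkappa\in i\R$ from \eqref{uu} and the reality of the relevant $E_2$-pairings, which follows from $u_1$ real, $u_2$ purely imaginary, $\Psi$ real---is exactly the computation the paper alludes to with ``it is easy to check''.
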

\bp
  Substituting $z_1$ in equation (\ref{z3}) for $z$
  and equating the coefficients, we get, in particular,
 $c_{20}=\frac{i}{\mu}Z_{20}$, $c_{11}=-\frac{i}{\mu}Z_{11}$,
 $c_{02}=-\frac{i}{3\mu}Z_{02}$ and
 $iK=Z_{21}+ Z_{21}'+c_{11}Z_{20}+2c_{20}Z_{11}+c_{11}\ov Z_{11}+2c_{02}\ov Z_{02}$.
 It is easy to check that all of the coefficients $Z_{11}$, $Z_{20}$, $Z_{02}$, and $Z_{21}$
  defined in (\ref{Zij}) are pure imaginary, and hence
  (\ref{Re K}) follows immediately.
\ep
Denoting $K_T=K(\om_T)$, the equation for $z_1$ is rewritten as
\be\la{z6}
  \dot z_1=i\mu z_1+iK_T|z_1|^2z_1+\widehat{\widehat Z}_R
\ee
where
\beqn\nonumber
 |\widehat{\widehat Z}_R|\!\!\!&\le&\!\!\!|\hat Z_R|+{\cal R}_1(\om,|z|
  +\Vert f\Vert_{L^{\infty}_{-\beta}}\!)|z|^3|\om_T-\om|
  ={\cal R}_1(\om,|z|+\Vert f\Vert_{L^{\infty}_{-\beta}}\!)
  \!\Bigl[(|z|^2+\Vert f\Vert_{L^{\infty}_{-\beta}}\!)^2\\
  \la{hhZR}
  &+&|z||\om_T-\om|(\Vert h\Vert_{L^{\infty}_{-\beta}}+|z|^2)
  +|z|\Vert k_1\Vert_{L^{\infty}_{-\beta}}
  +|z|\Vert h_1\Vert_{L^{\infty}_{-\beta}}\Bigr].
\eeqn
It is easier to deal with $y=|z_1|^2$, rather than $z_1$, because $y$ decreases
at infinity while $z_1$ is oscillating.
The equation satisfied by $y$ is simply obtained by multiplying (\ref{z6}) by $\ov z_1$
and taking the real part:
\be\la{y}
  \dot y=2\rRe(iK_T)y^2 +Y_R,
\ee
where
$$
  |Y_R|\!=\!{\cal R}_1(\om,|z|+\Vert f\Vert_{L^{\infty}_{-\beta}})|z|
  \Bigl[(|z|^2+\Vert f\Vert_{L^{\infty}_{-\beta}})^2+
  |z||\om_T-\om|(\Vert h\Vert_{L^{\infty}_{-\beta}}+|z|^2)
  +|z|\Vert k_1\Vert_{L^{\infty}_{-\beta}}
  +|z|\Vert h_1\Vert_{L^{\infty}_{-\beta}}\Bigr]
$$
The negativity of $\rRe(iK_T)$ is a key point in the analysis and was proved in
Lemma \ref{pv}.
\subsection{Canonical form of  equation for $\gamma$}
\label{gamma-transform}
The only difference between equations (\ref{omega2}) and (\ref{gamma2})
for $\om$ and $\gamma$ is that, in general the coefficient $\Gamma_{11}\not=0$.
We can nevertheless perform the same change of variables as for $\om$, obtaining:
\begin{lemma}\la{dij-exist}
There exist coefficients $d_{ij}(\om)$, $0\le i,j\le 3$,  and exponentially decreasing functions 
$d'_{ij}(x,\om)$ such that the new function $\gamma_1$ defined as
\be\la{gamma1-def}
\gamma_1=\gamma+d_{20}z^2+d_{02}\ov z^2+d_{30}z^3+d_{12}z^2\ov z+d_{12}z\ov z^2
+d_{023}\ov z^3+z\langle f,d'_{10}\rangle+\ov z\langle f,d'_{01}\rangle,
\ee
with $d_{ij}=\ov d_{ji}$, is a solution of the differential equation
\be\la{gamma3}
\dot\gamma_1=\Gamma_{11}(\om)z\ov z+\hat\Gamma_R.
\ee
Furthermore  $\hat\Gamma_R$ satisfies the same estimate (\ref{Gam}) as $\Gamma_R$.
\end{lemma}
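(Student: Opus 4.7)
The plan is to run the same normal-form procedure as in the proof of Lemma \ref{bij-exist}, with the key difference that the monomial $z\ov z$ is resonant for the leading flow $\dot z=i\mu z$ (since $\partial_t(z\ov z)=(i\mu-i\mu)z\ov z+\ldots$ vanishes at leading order), and therefore cannot be eliminated by a near-identity change of variables. This is precisely why the ansatz \eqref{gamma1-def} does \emph{not} include a $d_{11}z\ov z$ term, and why $\Gamma_{11}z\ov z$ survives as the sole non-remainder term on the right-hand side of \eqref{gamma3}. All the other quadratic, cubic, and linear-in-$f$ monomials appearing in \eqref{gamma2} are non-resonant and will be absorbed into the change of variables.

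Concretely, I would differentiate \eqref{gamma1-def}, substitute \eqref{omega2}, \eqref{z2}, its complex conjugate, and \eqref{f} for $\dot\om,\dot z,\dot{\ov z},\dot f$, and then collect terms by monomial type in $(z,\ov z,f)$. Requiring the coefficient of each non-resonant monomial to vanish yields the analog of \eqref{system}: the quadratic coefficients satisfy $\Gamma_{20}+2i\mu d_{20}=0$ and its conjugate; the cubic coefficients obey inhomogeneous linear algebraic relations of the form
\[
  \Gamma_{30}+2Z_{20}d_{20}+3i\mu d_{30}+\langle F_{20},d'_{10}\rangle=0
\]
together with their conjugates and permutations for $d_{21},d_{12},d_{03}$; and the vector coefficients $d'_{10},d'_{01}$ satisfy
\[
  \Gamma'_{10}+i\mu d'_{10}+{\bf C}^* d'_{10}=0.
\]
Multiplying by $j$ and using $j{\bf C}^*=-{\bf C}j$ turns the last equation into $({\bf C}-i\mu)(jd'_{10})=j\Gamma'_{10}$. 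Exactly as in the proof of Lemma \ref{bij-exist}, Remark \ref{Om-sub} (applied to $\Gamma'_{10}$, since only pairings against $f\in X^c$ enter the dynamics) allows us to assume without loss of generality that $j\Gamma'_{10}\in X^c$; the singular part of the resolvent at $\lambda=i\mu$ then annihilates this vector, so $jd'_{10}$ is well defined as the regular part of $({\bf C}-i\mu)^{-1}$ applied to $j\Gamma'_{10}$, and decays exponentially in $x$. The remaining algebraic system for the cubic $d_{ij}$ is then immediate.

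The remainder $\hat\Gamma_R$ collects: the original $\Gamma_R$ from \eqref{gamma1}; $\dot\om$-derivative contributions of the form $(\partial_\om d_{ij})\dot\om\cdot(\text{monomial})$ and $\dot\om\,z\langle f,\partial_\om d'_{10}\rangle$; quartic and higher corrections produced when the nonlinear parts of \eqref{z2} and \eqref{f} are substituted into the time-derivatives of the cubic monomials and of the pairing $z\langle f,d'_{10}\rangle$; and $\dot\gamma$-dependent pieces generated through $\dot f$. The only non-routine step is to verify that all of these respect the bound \eqref{Gam}. This follows by inserting the a priori bounds $|\dot\om|,|\dot\gamma|=\mathcal{O}((|z|+\Vert f\Vert_{L^{\infty}_{-\beta}})^2)$ provided by \eqref{Om} and \eqref{Gam}, and using the exponential decay in $x$ of $d'_{10}$, $d'_{01}$ and $\partial_\om d'_{10}$ to control all spatial pairings with $f$ by constants times $\Vert f\Vert_{L^{\infty}_{-\beta}}$. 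Aggregating these contributions reproduces exactly $|\hat\Gamma_R|={\cal R}(\om,|z|+\Vert f\Vert_{L^{\infty}_{-\beta}})(|z|^2+\Vert f\Vert_{L^{\infty}_{-\beta}})^2$, as asserted.
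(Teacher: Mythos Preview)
Your proposal is correct and follows exactly the approach the paper intends: the paper gives no separate proof of this lemma, merely noting that ``the only difference between equations (\ref{omega2}) and (\ref{gamma2}) for $\om$ and $\gamma$ is that, in general the coefficient $\Gamma_{11}\not=0$'' and that one can ``perform the same change of variables as for $\om$''. You have correctly identified the resonance of $z\ov z$ under the flow $\dot z=i\mu z$ as the reason $\Gamma_{11}$ survives, and your equations for the $d_{ij}$ and $d'_{10}$ are the direct analogs of system (\ref{system}); in fact you supply more detail than the paper itself.
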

\subsection{Bound for $|\om_T-\om|$ and  initial conditions}
\label{om-est}
Now we obtain a uniform bound for $|\om_T-\om(t)|$ on the interval $[0,T]$. 
\beqn\nonumber
  &&\!\!\!\!\!\!\!\!
  |\om_T-\om(t)|\le|\om_{1T}-\om_1(t)|+|\om_{1T}-\om_T|+|\om_1(t)-\om(t)|\\
  \nonumber
  &\!\!\!\!\!\le&\!\!\!\!\int_t^T|\dot\om_1(\tau)|d\tau+
  {\cal R}(\om_T,|z_T|+\Vert f_T\Vert_{L^{\infty}_{-\beta}})
  (|z_T|+\Vert f_T\Vert_{L^{\infty}_{-\beta}})^2
  +{\cal R}(\om,|z|+\Vert f\Vert_{L^{\infty}_{-\beta}})
  (|z|+\Vert f\Vert_{L^{\infty}_{-\beta}})^2\\
  \nonumber
  &\!\!\!\!\!\le&\!\!\!\!\max\limits_{0\le t\le T}{\cal R}
  (\om,|z|+\Vert f\Vert_{L^{\infty}_{-\beta}})
  \Bigl[\int_t^T(|z|^2+\Vert f\Vert_{L^{\infty}_{-\beta}})^2d\tau+
  (|z_T|+\Vert f_T\Vert_{L^{\infty}_{-\beta}})^2
  +(|z|+\Vert f\Vert_{L^{\infty}_{-\beta}})^2\Bigr],
\eeqn
where $z_T=z(T)$, $f_T=f(T)$.
Using
$
  |\om|\le|\om_0|+|\om_0-\om_T|+|\om-\om_T|
$, we have
$$
  \max\limits_{0\le t\le T}{\cal R}(\om,|z|+\Vert f\Vert_{L^{\infty}_{-\beta}})
  ={\cal R}(\max\limits_{0\le t\le T}|\om-\om_T|, \max\limits_{0\le t\le T}
  (|z|+\Vert f\Vert_{L^{\infty}_{-\beta}})).
$$
We denote such quantities by the symbol
${\cal R}_2(\om,|z|+\Vert f\Vert_{L^{\infty}_{-\beta}})$.
Then
\be\la{om-dif}
 |\om_T-\om|\le{\cal R}_2(\om, |z|+\Vert f\Vert_{L^{\infty}_{-\beta}})
  \Bigl[\int_t^T\!(|z|^2+\Vert f\Vert_{L^{\infty}_{-\beta}})^2d\tau+
  (|z_T|+\Vert f_T\Vert_{L^{\infty}_{-\beta}})^2
  +(|z|+\Vert f\Vert_{L^{\infty}_{-\beta}})^2\Bigr]
\ee
As in  (\ref{close}),
we suppose the smallness condition:
\be\la{zf0}
   |z(0)|\le\ve^{1/2},\quad \Vert f(0)\Vert_{L^1_\beta}\le c\ve^{3/2},
\ee
where $\ve>0$ is sufficiently small.
Equation (\ref{new-z}) implies $|z_1|^2\le |z|^2+{\cal R}(\om,|z|)|z|^3$.
Therefore
\be\la{y0}
  y(0)=|z_1(0)|^2\le\ve+{\cal R}(\om,|z(0)|)\ve^{3/2}.
\ee
From the formula $h={\bf P}_T^cf=f+({\bf P}^d-{\bf P}_T^d)f$, we see that
\be\la{h0}
  \Vert h(0)\Vert_{L^1_\beta}\le c\ve^{3/2}+{\cal R}_1(\om)|\om_T-\om|
  \Vert f(0)\Vert_{L^{\infty}_{-\beta}}.
\ee
\subsection{Bound for $k_1$}
\label{k1-est}
\begin{lemma}\la{k1-l}
  The function $k_1$ defined in (\ref{k1}) satisfies the following bound:
  \be\la{k1-1}
    \Vert k_1\Vert_{L^{\infty}_{-\beta}}\le c|z(0)|^2\frac 1{(1+t)^{3/2}}
    \le c\frac{\ve}{(1+t)^{3/2}}.
  \ee
\end{lemma}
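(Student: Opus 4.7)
The plan is to exploit the explicit structure of $k(0)$ coming from Lemma \ref{h-trans} together with the commutativity of the family $\{\mathbf{C}_M(\tau)\}_\tau$ and the dispersive estimate of Proposition \ref{l1}. First I would write
$$
\exp\!\Bigl(\int_0^t \mathbf{C}_M(\tau)\,d\tau\Bigr)
= e^{\mathbf{C}_T t}\,\exp\!\Bigl(i\Sigma(t)(\mathbf{\Pi}^+_T-\mathbf{\Pi}^-_T)\Bigr),
\qquad \Sigma(t)=\int_0^t\sigma(\tau)\,d\tau,
$$
which is permitted because $\mathbf{C}_T$ commutes with $\mathbf{\Pi}^\pm_T$ by definition of $\mathbf{C}_M$ in \eqref{CM}. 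Since $\Sigma(t)$ is real, the second factor is, on the continuous spectral subspace, simply multiplication by $e^{i\Sigma(t)}$ on $\mathrm{Ran}\,\mathbf{\Pi}^+_T$ and by $e^{-i\Sigma(t)}$ on $\mathrm{Ran}\,\mathbf{\Pi}^-_T$, hence uniformly bounded in $t$.

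Next I would use the representation of $k(0)$ obtained in Lemma \ref{h-trans}. By \eqref{a11}, \eqref{a20} and \eqref{k} one has
$$
k(0)=-\sum_{m\in\{-1,0,1\}}(\mathbf{C}_T-2i\mu_T m-0)^{-1}\mathbf{P}^c_T H_m,
$$
where each $H_m$ is one of $z(0)^2\,jE_2[u,u]$, $2z(0)\bar z(0)\,jE_2[u,u^*]$, $\bar z(0)^2\,jE_2[u^*,u^*]$, and from \eqref{e2d} each of these lies in $\mathcal{M}_\beta$ with norm controlled by $|z(0)|^2$. Decomposing $\mathbf{P}^c_T=\mathbf{\Pi}^+_T+\mathbf{\Pi}^-_T$, each term in $k_1$ thus becomes, up to the bounded phase factor $e^{\pm i\Sigma(t)}$, an expression of exactly the form covered by the bound \eqref{b5}:
$$
e^{\mathbf{C}_T t}(\mathbf{C}_T-2i\mu_T m-0)^{-1}\mathbf{\Pi}^{\pm}_T H_m.
$$

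Applying Proposition \ref{l1} to each such term gives
$$
\bigl\|e^{\mathbf{C}_T t}(\mathbf{C}_T-2i\mu_T m-0)^{-1}\mathbf{\Pi}^{\pm}_T H_m\bigr\|_{L^\infty_{-\beta}}
\le c\,(1+t)^{-3/2}\,\|H_m\|_{\mathcal{M}_\beta}
\le c\,(1+t)^{-3/2}\,|z(0)|^2,
$$
and summing over the finitely many terms yields the first inequality in \eqref{k1-1}. The second inequality then follows from the smallness assumption $|z(0)|\le\varepsilon^{1/2}$ in \eqref{zf0}. The main (minor) obstacle is simply to verify that the construction of $k$ in Lemma \ref{h-trans} fits precisely the hypotheses of Proposition \ref{l1}: namely that the source $\mathbf{P}^c_T H_m$ lies in $\mathcal{M}_\beta$ and that the spectral parameters $\pm 2i\mu_T$ appearing in \eqref{a20} match the shifts in \eqref{b5}; once these identifications are made the estimate is immediate from the dispersive bound and the uniform boundedness of the phase operator $e^{i\Sigma(t)(\mathbf{\Pi}^+_T-\mathbf{\Pi}^-_T)}$.
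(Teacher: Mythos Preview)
Your proof is correct and follows essentially the same route as the paper: factor the modified propagator as $e^{\mathbf{C}_T t}$ times a bounded phase operator on $X^c_T$, expand $k(0)$ via the formulas \eqref{a11}--\eqref{a20}, and apply the dispersive bounds. The only small inaccuracy is that for the $m=0$ term (i.e.\ $a_{11}=-\mathbf{C}_T^{-1}H_{11}$) the relevant estimate is \eqref{b1} from Proposition~\ref{TD}, not \eqref{b5} from Proposition~\ref{l1}; the paper accordingly invokes both \eqref{b1} and \eqref{b5}.
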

\bp
  Equalities (\ref{CM}) and (\ref{k1}) imply
  \be\la{k1-2}
    k_1=-e^{\int_0^t{\cal C}_M(\tau)d\tau}k(0)
    =-e^{{\bf C}_Tt+i\int_0^t\beta(\tau)d\tau({\bf\Pi}_T^+-{\bf\Pi}_T^-)}k(0)
  \ee
Denoting $\nu=\int_0^t\beta(\tau)d\tau$, we obtain by expanding the
exponential,  using the idempotency of projections and
${\bf\Pi}^{+}_T+{\bf\Pi}^{-}_T+{\bf P}^d_T=1$ (the Euler trick):
  $$
   e^{i\nu{\bf\Pi}^{\pm}_T}={\bf\Pi}^{\pm}_Te^{i\nu}+{\bf\Pi}^{\mp}_T+{\bf P}^d_T.
  $$
  Therefore
  $e^{i\nu({\bf\Pi}^+_T-{\bf\Pi}^-_T)}=({\bf\Pi}^{+}_Te^{i\nu}+{\bf\Pi}^{-}_T
   +{\bf P}^d_T)({\bf\Pi}^{-}_Te^{-i\nu}+{\bf\Pi}^{+}_T+{\bf P}^d_T)
   ={\bf\Pi}^+_Te^{i\nu}+{\bf\Pi}^-_Te^{-i\nu}+{\bf P}^d_T$.\\
  Note that ${\bf C}_T$ commutes with $P^{\pm}_T$, hence
  \be\la{CM-int}
    e^{\int_0^t{\bf C}_M(\tau)d\tau}=
    e^{{\bf C}_Tt}(e^{i\nu}{\bf\Pi}^+_T+e^{-i\nu}{\bf\Pi}^-_T+{\bf P}^d_T).
  \ee
  Since $\beta$ is a real function, both exponentials are bounded.
  Further, by (\ref{k}) we have
  $k(0)=a_{20}z^2(0)+a_{11}z(0)\ov z(0)+a_{02}\ov z^2(0)$
  with $a_{ij}$ defined in (\ref{a11}), (\ref{a20}).
  Therefore, the bounds (\ref{b1}), (\ref{b5}), and assumption (\ref{zf0})
  imply (\re{k1-1}).
  \ep
\section{Large time asymptotics}
\setcounter{equation}{0}
\label{maj}
In this section we will make use of the dispersive estimates
given in \S\ref{rpsec} to prove the asymptotic representation
for the solution of \eqref{SV}
with initial data as in Theorem \ref{main}.
\subsection{Definition of majorants}
\label{maj-def}
We define the quantities
\beqn\la{M0}
   \M_0(T)&=&\max\limits_{0\le t\le T}|\om_T-\om|\Big(\frac{\ve}{1+\ve t}\Big)^{-1}\\
\la{M1}
   \M_1(T)&=&\max\limits_{0\le t\le T}|z(t)|\Big(\frac{\ve}{1+\ve t}\Big)^{-1/2}\\
\la{M2}
   \M_2(T)&=&\max\limits_{0\le t\le T}\Vert h_1\Vert_{L^{\infty}_{-\beta}}
   \Big(\frac{\ve}{1+\ve t}\Big)^{-3/2}
\eeqn
which will be refered to in the following as ``majorants'',
and denote $\M$ the 3-dimensional vector
$(\M_0,\M_1,\M_2)$.
The goal of this section is to prove that if $\ve$ is sufficiently small, $\M$ is
bounded uniformly in $T$. 
\subsection{Estimates of  remainders and initial data}
\label{maj-rem}
\begin{lemma}\la{YR-est}
  The remainder $Y_R$ defined in (\ref{y}) satisfies the estimate
  \be\la{YRest}
    |Y_R|={\cal R}(\ve^{1/2}\M)\frac{\ve^{5/2}}{(1+\ve t)^2\sqrt{\ve t}}
    (1+|\M|)^5.
  \ee
\end{lemma}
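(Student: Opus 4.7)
The plan is to substitute the majorant bounds \eqref{M0}--\eqref{M2} together with the auxiliary estimates of Section 6 directly into the expression for $Y_R$, and verify that each resulting monomial is dominated by the target $C(1+|\M|)^5\,\ve^{5/2}/[(1+\ve t)^2\sqrt{\ve t}]$.

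First I would collect the pointwise bounds induced by the majorants,
\[
|z(t)|\le \M_1\Bigl(\tfrac{\ve}{1+\ve t}\Bigr)^{1/2},\qquad
|\om_T-\om|\le \M_0\,\tfrac{\ve}{1+\ve t},\qquad
\|h_1\|_{L^{\infty}_{-\beta}}\le \M_2\Bigl(\tfrac{\ve}{1+\ve t}\Bigr)^{3/2},
\]
and combine them with \eqref{kb}, Lemma \ref{k1-l} (using the elementary inequality $(1+t)^{-3/2}\le(1+\ve t)^{-3/2}$ valid for $\ve\le 1$), and Lemma \ref{g-est}, to deduce
\[
\|k\|_{L^{\infty}_{-\beta}}\le C\M_1^{\,2}\tfrac{\ve}{1+\ve t},\qquad
\|k_1\|_{L^{\infty}_{-\beta}}\le C\tfrac{\ve}{(1+\ve t)^{3/2}},\qquad
\|g\|_{L^{\infty}_{-\beta}}\le C\M_0\tfrac{\ve}{1+\ve t}\|h\|_{L^{\infty}_{-\beta}}.
\]
Since $h=h_1+k+k_1$ and $f=g+h$, summing yields $\|h\|_{L^{\infty}_{-\beta}}+\|f\|_{L^{\infty}_{-\beta}}\le C(1+|\M|)^2\,\ve/(1+\ve t)$. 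The pointwise bounds also give $|z|+\|f\|\le C\ve^{1/2}(1+|\M|)^2$, so the $\mathcal{R}_1(\om,|z|+\|f\|)$ prefactor in the definition of $Y_R$ reduces exactly to the $\mathcal{R}(\ve^{1/2}\M)$ of the statement.

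Next I would expand the bracket in the definition of $Y_R$ into the seven distinct monomial types it contains. After extracting the outer factor $|z|$, these are
\[
|z|^5,\quad |z|^3\|f\|,\quad |z|\|f\|^2,\quad |z|^4|\om_T-\om|,\quad |z|^2|\om_T-\om|\|h\|,\quad |z|^2\|k_1\|,\quad |z|^2\|h_1\|.
\]
Substituting the bounds above turns each into a constant multiple of $|\M|^{p}$ ($p\le 5$) times a rational monomial in $\ve$, $1+\ve t$, and (in one case) $1+t$, with total scaling at least $(\ve/(1+\ve t))^{5/2}$. The extremal case is $|z|^5$ (and similarly $|z|^3\|f\|$, $|z|\|f\|^2$, $|z|^2\|h_1\|$), for which the target inequality
\[
\Bigl(\tfrac{\ve}{1+\ve t}\Bigr)^{5/2}\le C\,\tfrac{\ve^{5/2}}{(1+\ve t)^2\sqrt{\ve t}}
\]
is equivalent to $\sqrt{\ve t}\le C\sqrt{1+\ve t}$, which is immediate. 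The two terms containing $|\om_T-\om|$ carry an additional factor $\ve/(1+\ve t)$ and are dominated with room to spare.

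The only delicate estimate is the $|z|^2\|k_1\|$ contribution, which mixes the $1+t$ scale of Lemma \ref{k1-l} with the $1+\ve t$ scale of the majorants. The required inequality $\ve^{2}/[(1+\ve t)(1+t)^{3/2}]\le C\,\ve^{5/2}/[(1+\ve t)^2\sqrt{\ve t}]$ is equivalent to $(1+t)^{3/2}/[(1+\ve t)\sqrt{t}]\ge c\sqrt{\ve}$, which I would verify by a brief case split on $t\le 1$, on $t\ge 1$ with $\ve t\le 1$, and on $\ve t\ge 1$. This matching of two time scales is the sole technical point of the proof; the resulting $\sqrt{\ve t}$ in the denominator of the target bound—rather than $(1+\ve t)^{1/2}$—is precisely what guarantees integrability of \eqref{YRest} both near $t=0$ and as $t\to\infty$, which will be needed later when \eqref{YRest} is fed into the differential inequality \eqref{y} to close the majorant argument for $y=|z_1|^2$.
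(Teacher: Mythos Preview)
Your approach is correct and essentially identical to the paper's: both proofs substitute the majorant bounds \eqref{M0}--\eqref{M2}, together with \eqref{kb}, Lemma~\ref{k1-l} and Lemma~\ref{g-est}, directly into the expression for $Y_R$ following \eqref{y}, and then verify that each resulting term is dominated by the target. The paper carries out the substitution in one chain of inequalities (replacing $\Vert f\Vert$ and $\Vert h\Vert$ by $|z|^2+\Vert k_1\Vert+\Vert h_1\Vert$ throughout), whereas you expand into the seven monomial types and check them individually; this is only a difference of presentation. One minor algebraic slip: your ``equivalent'' inequality for the $|z|^2\Vert k_1\Vert$ term should read $(1+t)^{3/2}/[(1+\ve t)\sqrt{t}]\ge c$ rather than $\ge c\sqrt{\ve}$, but this follows immediately from $\ve\le 1$ and $\sqrt{t}\le (1+t)^{1/2}$, so no case split is actually needed.
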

\bp
  Using  the equality $f=g+h=g+k+k_1+h_1$, Lemma \ref{k1-l} and the
definitions of the $\M_j$,
the remainder $Y_R$ is bounded as follows:
  $$
    Y_R={\cal R}_2(\om,|z|+\Vert f\Vert_{L^{\infty}_{-\beta}})|z|
    \Bigl[(|z|^2+\Vert k_1\Vert_{L^{\infty}_{-\beta}}
    +\Vert h_1\Vert_{L^{\infty}_{-\beta}})^2+
    |z||\om_T-\om|(|z|^2+\Vert k_1\Vert_{L^{\infty}_{-\beta}}
    +\Vert h_1\Vert_{L^{\infty}_{-\beta}})
  $$
  $$
    +|z|(\Vert k_1\Vert_{L^{\infty}_{-\beta}}
    +\Vert h_1\Vert_{L^{\infty}_{-\beta}})\Bigr]\!
    ={\cal R}(\ve^{1/2}\M)\Big(\frac{\ve}{1+\ve t}\Big)^{1/2}
    \M_1\Big[\!\Big(\frac{\ve}{1+\ve t}\M_1^2+\frac{\ve}{(1+ t)^{3/2}}
    +\Big(\frac{\ve}{1+\ve t}\Big)^{3/2}\!\M_3\Big)^2
  $$
  $$
    \!\!\!\!\!\!\!\!\!\!\!\!\!\!\!\!\!\!\!\!\!\!\!\!\!\!\!\!\!\!\!\!\!\!\!
    \!\!\!\!\!\!\!\!\!\!\!\!\!\!\!\!\!\!\!\!\!\!\!\!\!\!\!\!\!\!\!\!\!\!
    +\Big(\frac{\ve}{1+\ve t}\Big)^{3/2}\M_0\M_1
    \Big(\frac{\ve}{1+\ve t}\M_1^2+\frac{\ve}{(1+ t)^{3/2}}
    +\Big(\frac{\ve}{1+\ve t}\Big)^{3/2}\M_3\Big)
  $$
  $$
   +(\frac{\ve}{1+\ve t}\Big)^{1/2}
    \M_1\Big(\frac{\ve}{(1+ t)^{3/2}}
   +\Big(\frac{\ve}{1+\ve t}\Big)^{3/2}\M_3\Big)\Big]\\
   ={\cal R}(\ve^{1/2}\M)\frac{\ve^{5/2}}
    {(1+\ve t)^2\sqrt{\ve+\ve t}}(1+|\M|)^5,
  $$
establishing \eqref{YR-est}.\ep
Let us turn now to the remainder $\hat H_R=\tilde H_R+H'$ in equation
(\ref{h5}) for $h_1$.
\begin{lemma}\la{HH}
  The first summand $\tilde H_R$ satisfies
  \be\la{fHR}
    \Vert \tilde H_R\Vert_{{\cal M}_{\beta}}={\cal R}(\ve^{1/2}\M)
    \Big(\frac{\ve}{1+\ve t}\Big)^{3/2}
     \Big((1+\M_1)^3+\ve^{1/2}(1+|\M|)^4\Big).
  \ee
\end{lemma}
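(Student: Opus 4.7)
The plan is to start from the splitting \eqref{HR1}, namely
$\tilde H_R=H_R'+\sigma(t)[{\bf P}_{T}^cj^{-1}-i({\bf\Pi}_{T}^{+}-{\bf\Pi}_{T}^{-})]h$,
and estimate the two summands separately, in each case reducing the bound in $\cM_\beta$ to the majorants $\M_0,\M_1,\M_2$ via the smallness condition \eqref{zf0} and Lemma~\ref{k1-l}. Throughout I would use the elementary substitutions
\[
|z|\le \ve^{1/2}\M_1(1+\ve t)^{-1/2},\qquad
|\om-\om_{T}|\le \ve\M_0(1+\ve t)^{-1},\qquad
\Vert h_1\Vert_{L^\infty_{-\beta}}\le \ve^{3/2}\M_2(1+\ve t)^{-3/2},
\]
together with the decomposition $f=g+k+k_1+h_1$, Lemma~\ref{g-est} to bound $\Vert g\Vert_{L^\infty_{-\beta}}$ by $|\om-\om_T|\Vert h\Vert_{L^\infty_{-\beta}}$, the bound $\Vert k\Vert_{L^\infty_{-\beta}}={\cal R}_1(\om)|z|^2$ from \eqref{kb}, and Lemma~\ref{k1-l} for $\Vert k_1\Vert_{L^\infty_{-\beta}}$; the net result is an estimate
$\Vert f\Vert_{L^\infty_{-\beta}}={\cal R}(\ve^{1/2}\M)\ve(1+\ve t)^{-1}(1+|\M|)^2$.

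For the first summand I would apply \eqref{HR'-est} term by term. The $|z|^3$ contribution converts directly into $\ve^{3/2}\M_1^3(1+\ve t)^{-3/2}$, which is the leading piece of $(\ve/(1+\ve t))^{3/2}(1+\M_1)^3$. The cross term $|z|\Vert f\Vert_{L^\infty_{-\beta}}$ and the quadratic term $\Vert f\Vert_{L^\infty_{-\beta}}^2$ both produce additional factors of $\ve^{1/2}$ relative to $(\ve/(1+\ve t))^{3/2}$, which absorbs into the $\ve^{1/2}(1+|\M|)^4$ contribution. The term $|\om-\om_T|(|z|^2+|\dot\gamma|\Vert f\Vert_{L^\infty_{-\beta}})$ produces $\ve^2\M_0/(1+\ve t)^2$-type factors once the bound $|\dot\gamma|={\cal R}(\om,|z|+\Vert f\Vert_{L^\infty_{-\beta}})(|z|^2+\Vert f\Vert_{L^\infty_{-\beta}})^2$ coming from \eqref{gamma2} and \eqref{Gam} is used; again this is majorised by $\ve^{1/2}(\ve/(1+\ve t))^{3/2}$.

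For the second summand I would apply Lemma~\ref{rem2}, which yields
$\Vert [{\bf P}_T^c j^{-1}-i({\bf\Pi}_T^+-{\bf\Pi}_T^-)]h\Vert_{L^1_\beta}\le\Vert h\Vert_{L^\infty_{-\beta}}$, and then estimate $|\sigma(t)|\le |\om-\om_T|+|\dot\gamma|$. The first piece gives $\ve\M_0/(1+\ve t)$ and the second contributes a term of size ${\cal R}\,\ve/(1+\ve t)$ via \eqref{gamma2}; multiplying by the bound $\Vert h\Vert_{L^\infty_{-\beta}}\le\Vert k\Vert_{L^\infty_{-\beta}}+\Vert k_1\Vert_{L^\infty_{-\beta}}+\Vert h_1\Vert_{L^\infty_{-\beta}}={\cal R}(\ve^{1/2}\M)\ve(1+\ve t)^{-1}(1+|\M|)^2$ yields a contribution of order $\ve^2(1+\ve t)^{-2}\le \ve^{1/2}(\ve/(1+\ve t))^{3/2}$, which again fits inside the $\ve^{1/2}(1+|\M|)^4$ part of the claim.

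The main obstacle is purely bookkeeping: one has many cross terms in \eqref{HR'-est} and in $\sigma(t)h$, each of which must be shown to fit within either the clean $(1+\M_1)^3$ contribution (the one coming solely from $|z|^3$) or the $\ve^{1/2}$-gained contribution $\ve^{1/2}(1+|\M|)^4$. The cleanest organisation is to note that every term in $\tilde H_R$ other than $|z|^3$ carries either a factor of $\Vert f\Vert_{L^\infty_{-\beta}}$ or a factor of $|\om-\om_T|$ or $|\dot\gamma|$, and each of these factors, after substitution of the majorants, produces an extra $\ve^{1/2}$ compared with the target decay $(\ve/(1+\ve t))^{3/2}$; collecting these gains gives exactly \eqref{fHR}.
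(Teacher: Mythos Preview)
Your overall strategy is right and closely tracks the paper's, but there is a real gap in the bookkeeping for the cross term $|z|\Vert f\Vert_{L^\infty_{-\beta}}$. You assert that this term ``produces additional factors of $\ve^{1/2}$ relative to $(\ve/(1+\ve t))^{3/2}$''. It does not. With your own crude bound $\Vert f\Vert_{L^\infty_{-\beta}}\le {\cal R}(\ve^{1/2}\M)\,\ve(1+\ve t)^{-1}(1+|\M|)^2$ and $|z|\le (\ve/(1+\ve t))^{1/2}\M_1$, you get
\[
|z|\,\Vert f\Vert_{L^\infty_{-\beta}}\le {\cal R}(\ve^{1/2}\M)\Big(\frac{\ve}{1+\ve t}\Big)^{3/2}\M_1(1+|\M|)^2,
\]
with \emph{no} extra $\ve^{1/2}$. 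The factor $\M_1(1+|\M|)^2$ does not fit inside $(1+\M_1)^3+\ve^{1/2}(1+|\M|)^4$ in general, so the claimed bound \eqref{fHR} does not follow from this estimate. (A related slip: your stated bound $|\dot\gamma|={\cal R}(\cdots)(|z|^2+\Vert f\Vert_{L^\infty_{-\beta}})^2$ is wrong; the leading term in \eqref{gamma2} is of order $|z|^2$, not $|z|^4$. You get it right later when bounding $\sigma(t)$.)

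The fix, and what the paper actually does, is not to pass to the crude bound on $\Vert f\Vert_{L^\infty_{-\beta}}$ at all. Instead one starts from \eqref{tHR-est} and replaces $\Vert f\Vert_{L^\infty_{-\beta}}$ by ${\cal R}_2(\cdots)\,(|z|^2+\Vert k_1\Vert_{L^\infty_{-\beta}}+\Vert h_1\Vert_{L^\infty_{-\beta}})$ using the decomposition $f=g+k+k_1+h_1$, $\Vert k\Vert\le {\cal R}_1|z|^2$, and Lemma~\ref{g-est}. Then $|z|\Vert f\Vert$ splits into $|z|^3$ (which lands in the $(1+\M_1)^3$ slot), $|z|\Vert k_1\Vert$ (bounded by $(\ve/(1+\ve t))^{3/2}\M_1$ via Lemma~\ref{k1-l} and $(1+\ve t)\le(1+t)^{3/2}$), and $|z|\Vert h_1\Vert$ (which does carry an extra $\ve^{1/2}$). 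Your plan already has all the ingredients; the point is simply to keep the decomposition of $f$ inside the product $|z|\Vert f\Vert$ rather than collapsing it first.
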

\bp
   It follows from (\ref{tHR-est})
   $$
     \Vert \tilde H_R\Vert_{{\cal M}_{\beta}}=
     {\cal R}_2(\om,|z|+\Vert f\Vert_{L^{\infty}_{-\beta}})
     \Bigl[|z|^3 +(|z|+|\om_T-\om|)(|z|^2+\Vert k_1\Vert_{L^{\infty}_{-\beta}}
     +\Vert h_1\Vert_{L^{\infty}_{-\beta}})
  $$
  $$
     +(|z|^2+\Vert k_1\Vert_{L^{\infty}_{-\beta}}
     +\Vert h_1\Vert_{L^{\infty}_{-\beta}})^2\Big]
     ={\cal R}(\ve^{1/2}\M)\Big[\Bigl(\frac{\ve}{1+\ve t}\Bigr)^{3/2}\M_1^3
     +\Big(\Big(\frac{\ve}{1+\ve t}\Bigr)^{1/2}\M_1+\frac{\ve}{1+\ve t}\M_0\Big)
   $$
   $$
      \Bigl(\frac{\ve}{1+\ve t}\M_1^2+\frac{\ve}{(1+t)^{3/2}}
     +\Bigl(\frac{\ve}{1+\ve t}\Bigr)^{3/2}\M_2\Bigr)
     +\Bigl(\frac{\ve}{1+\ve t}\M_1^2+\frac{\ve}{(1+t)^{3/2}}
     +\Bigl(\frac{\ve}{1+\ve t}\Bigr)^{3/2}\M_2\Bigr)^2\Big]
   $$
   which implies (\ref{fHR}).
\ep

The second summand $H'$ is represented as in (\ref{Am})
where the $A_m$ are estimated in (\ref{Am-est1}).
For the $A_m$ we now obtain:
\begin{lemma}\la{fAm}
  \be\la{Am-final}
     \Vert A_{m}\Vert_{{\cal M}_{\beta}}= {\cal R}(\ve^{1/2}\M)
     \Big(\frac{\ve}{1+\ve t}\Big)^{3/2}\Big(\M_1^3+\ve^{1/2}(1+|\M|)^3\Big).
  \ee
\end{lemma}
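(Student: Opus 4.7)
The plan is to substitute the majorant definitions \eqref{M0}--\eqref{M2} together with the decay bound of Lemma \ref{k1-l} directly into the estimate \eqref{Am-est1}, and then to separate the ``leading'' $|z|^3$ contribution (which yields the $\M_1^3$ term) from all other contributions (which acquire an extra factor of $\ve^{1/2}$).

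First I would decompose $f = g + k + k_1 + h_1$ according to \eqref{gh} and \eqref{h-dec} and bound each piece in the $L^{\infty}_{-\beta}$ norm. Lemma \ref{g-est} gives $\Vert g\Vert_{L^{\infty}_{-\beta}} \le {\cal R}_1(\om)|\om_T - \om|\,\Vert h\Vert_{L^{\infty}_{-\beta}}$; estimate \eqref{kb} gives $\Vert k\Vert_{L^{\infty}_{-\beta}} \le {\cal R}_1(\om)|z|^2$; Lemma \ref{k1-l} gives $\Vert k_1\Vert_{L^{\infty}_{-\beta}} \le c\ve/(1+t)^{3/2}$; and $\Vert h_1\Vert_{L^{\infty}_{-\beta}} \le (\ve/(1+\ve t))^{3/2}\M_2$ by definition. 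Combining with $|z| \le (\ve/(1+\ve t))^{1/2}\M_1$ and $|\om_T - \om| \le (\ve/(1+\ve t))\M_0$, the quantity $|z|+\Vert f\Vert_{L^{\infty}_{-\beta}}$ is of order $\ve^{1/2}$ uniformly in $t$, so the prefactor ${\cal R}(\om,|z|+\Vert f\Vert_{L^{\infty}_{-\beta}})$ in \eqref{Am-est1} absorbs into ${\cal R}(\ve^{1/2}|\M|)$.

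Next I would substitute these bounds into the factor $|z|\bigl(|z||\om_T - \om| + (|z|+\Vert f\Vert_{L^{\infty}_{-\beta}})^2\bigr)$ appearing in \eqref{Am-est1}. The purely cubic-in-$z$ contribution $|z|^3$ produces exactly $(\ve/(1+\ve t))^{3/2}\M_1^3$, giving the $\M_1^3$ summand of \eqref{Am-final}. Every other contribution---$|z|^2|\om_T - \om|$, the mixed $|z|^2\Vert f\Vert$ terms, and $|z|\Vert f\Vert^2$---carries at least one additional ``small factor'' beyond $(\ve/(1+\ve t))^{3/2}$; since $(\ve/(1+\ve t))^{1/2} \le \ve^{1/2}$, each such additional factor can be replaced by a uniform $\ve^{1/2}$, yielding the announced $\ve^{1/2}(1+|\M|)^3$ correction. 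The one point requiring a touch of care is the appearance of $\Vert k_1\Vert^2 \le c\ve^2/(1+t)^3$: one uses the elementary inequality $\ve/(1+t)^{3/2} \le \ve^{1/2}(\ve/(1+\ve t))^{3/2}$, which follows from $1+\ve t \le 1+t$, to show that every factor of $\Vert k_1\Vert$ contributes an $\ve^{1/2}$ surplus rather than diminishing the overall decay.

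I do not expect any substantive obstacle: the argument is essentially temporal scaling bookkeeping, entirely analogous to the proofs of Lemma \ref{YR-est} and Lemma \ref{HH} immediately preceding, and collects all the contributions into the form \eqref{Am-final}.
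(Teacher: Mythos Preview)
Your approach is exactly that of the paper: start from \eqref{Am-est1}, replace $\Vert f\Vert_{L^\infty_{-\beta}}$ by the sum of the pieces $g,k,k_1,h_1$ (the $g$ and $k$ contributions being absorbed into ${\cal R}$ and $|z|$), substitute the majorant bounds and the estimate of Lemma~\ref{k1-l}, and collect terms. This is the same bookkeeping as in Lemmas~\ref{YR-est} and~\ref{HH}.

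One small correction: the inequality you single out, $\ve/(1+t)^{3/2}\le \ve^{1/2}\bigl(\ve/(1+\ve t)\bigr)^{3/2}$, is false at $t=0$ (it would read $\ve\le\ve^2$). What you actually need, and what suffices, is the weaker bound
\[
\frac{\ve}{(1+t)^{3/2}}\;\le\;\frac{\ve}{(1+\ve t)^{1/2}}\;=\;\ve^{1/2}\Bigl(\frac{\ve}{1+\ve t}\Bigr)^{1/2},
\]
which follows from $1+\ve t\le 1+t\le(1+t)^{3}$. This is enough to show that every cross term involving $\Vert k_1\Vert_{L^\infty_{-\beta}}$ carries the required surplus $\ve^{1/2}$ relative to $(\ve/(1+\ve t))^{3/2}$, and the rest of your argument goes through unchanged.
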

\bp
  Estimate (\ref{Am-est1}) implies
  $$
    \Vert A_{m}\Vert_{{\cal M}_{\beta}}
    ={\cal R}_2(\om,|z|+\Vert f\Vert_{L^{\infty}_{-\beta}})|z|
     \Big(|z||\om_T-\om|+(|z|+\Vert k_1\Vert_{L^{\infty}_{-\beta}}
     +\Vert h_1\Vert_{L^{\infty}_{-\beta}})^2\Big)
  $$
  $$
   = {\cal R}(\ve^{1/2}\M)
    \Big(\frac{\ve}{1+\ve t}\Big)^{1/2}\!\M_1\Big[\!\Big(\frac{\ve}{1+\ve t}\Big)^{3/2}
    \M_0\M_1+\Bigl(\!\Big(\frac{\ve}{1+\ve t}\Big)^{1/2}\!\M_1+\frac{\ve}{(1+t)^{3/2}}
     +\Big(\frac{\ve}{1+\ve t}\Big)^{3/2}\!\M_2\!\Big)^2\Big]
  $$
   which implies (\ref{Am-final}).
\ep

Now we  estimate the initial data. Referring to the formulas
at the end of  \S\ref{om-est}, we have
\be\la{fy0}
  y(0)\le\ve+{\cal R}(\ve^{1/2}\M)\ve^{3/2}=\ve(1+{\cal R}(\ve^{1/2}\M)\ve^{1/2}).
\ee
\be\la{fh0}
  \Vert h(0)\Vert_{{\cal M}_{\beta}}\le c\ve^{3/2}+{\cal R}_1(\om)|\om_T-\om|
  \Vert f(0)\Vert_{L^{\infty}_{-\beta}}\le c\ve^{3/2}+{\cal R}(\ve^{1/2}\M)
  \ve^2\M_0(1+\M_1^2+\ve^{1/2}\M_2).
\ee
\subsection{Integral inequalities and decay in time}
\label{can-sys}
This section is devoted to a study of the system:
\be\la{my}
  \dot y=2\rRe(iK_T)y^2+Y(t),
\ee
\be\la{mh}
  \dot h_1={\bf C}_Mh_1+H(x,t),
\ee
under some assumptions on the initial data, and on the inhomogeneous
(or source) terms $Y$ and $H$. Equation (\ref{my}) for $y$ is of Ricatti type.
For the initial data, we assume
\be\la{myh0}
  y(0)\le\ve y_0,\quad \Vert h_1(0)\Vert_{{\cal M}_{\beta}}\le \ve^{3/2}h_0
\ee
with some constant $y_0$ and $h_0>0$.
As for the source terms, we assume that
\be\la{mY}
  |Y(t)|\le {\ov Y}\frac{\ve^{5/2}}{(1+\ve t)^2\sqrt{\ve t}}
\ee
and  that $H(x,t)=H_1(x,t)+H_2(x,t)$, where
$H_2=\sum\limits_{m}({\bf C}_T-2i\mu_Tm-0)^{-1}{\cal A}_{m}$,
${\cal A}_{m}\in X^c_T$  with the following bounds:
\be\la{mH}
  \Vert H_1\Vert_{{\cal M}_{\beta}}\le\ov H_1\Bigl(\frac{\ve}{1+\ve t}\Bigr)^{3/2},
~~~~~~\Vert {\cal A}_{m}\Vert_{{\cal M}_{\beta}}\le\ov A_{m}
    \Bigl(\frac{\ve}{1+\ve t}\Bigr)^{3/2}
\ee
where  the quantities $\ov Y$, $\ov H_1$, $\ov A_m$ are supposed to be
given positive constants.
All these assumptions are motivated by the estimates of the remainders
in \S\ref{maj-rem},
and by the final estimates we intend to prove on $\om$, $z$, $h$ and $h_1$.
Equation (\ref{my}) corresponds to equation (\ref{y})
 and the assumption (\ref{mY}) on the source term has the form of estimate
(\ref{YRest}) for the remainder $Y_R$.
Similarly, equation (\ref{mh}) corresponds to equation (\ref{h5}) and assumptions
(\ref{mH}) correspond to the inequalities (\ref{fHR})- (\ref{Am-final}). 
Finally, corresponding to \eqref{Re K}, we work under the
assumption $\rRe(iK_T)=-\rIm K_T<0$.
\begin{lemma}\la{my-sol}(\cite[Proposition 5.6]{BS})
  The solution of (\ref{my}), with  initial condition and source term
  satisfying (\ref{myh0})
  and (\ref{mY}) respectively, is bounded as follows for $t>0$:
  \be\la{Ricatti}
     |y(t)-\frac{y(0)}{1+2\rIm K_{T}y(0)t}|\le c\ov Y\Big(\frac{\ve}{1+\ve t}\Big)^{3/2},
     \quad c=c(y_0,\rIm K_{T}).
  \ee
\end{lemma}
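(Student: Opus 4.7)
The plan is to compare $y$ with the explicit solution of the unperturbed Riccati equation and treat the source $Y$ perturbatively via a Duhamel--type formula, closed by a bootstrap argument.

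\medskip\noindent
\textbf{Step 1 (comparison function).} Set $\tilde y(t)=y(0)/(1+2\rIm K_T\5y(0)t)$, which solves $\dot{\tilde y}=-2\rIm K_T\5\tilde y^2$ with $\tilde y(0)=y(0)$. Writing $r=y-\tilde y$, subtraction gives the linear inhomogeneous ODE
\be\la{rEq}
  \dot r=-2\rIm K_T\5(y+\tilde y)\5r+Y(t),\qquad r(0)=0.
\ee
Note that under \eqref{myh0} we have $\tilde y(t)\le y(0)\le\ve y_0$ and $\tilde y(t)\le 1/(2\rIm K_T t)$, so
\be\la{tildeyBound}
  \tilde y(t)\le\frac{c_1\5\ve}{1+\ve t},\qquad c_1=c_1(y_0,\rIm K_T).
\ee

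\medskip\noindent
\textbf{Step 2 (bootstrap).} Let $T_*$ be the supremum of times on $[0,T]$ for which $|r(t)|\le\tilde y(t)/2$; then $T_*>0$ by continuity and on $[0,T_*]$
\be\la{yUp}
  \tfrac{1}{2}\tilde y(t)\le y(t)\le\tfrac{3}{2}\tilde y(t).
\ee
The goal is to show that the claimed $(\ve/(1+\ve t))^{3/2}$ bound for $r$ holds on $[0,T_*]$, which for $\ve$ small will be strictly sharper than $\tilde y(t)/2$, hence forces $T_*=T$.

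\medskip\noindent
\textbf{Step 3 (integrating factor).} Under \eqref{yUp},
$$
  \int_s^t 2\rIm K_T\5(y+\tilde y)(\tau)\5d\tau
  \ \ge\ 3\rIm K_T\int_s^t\tilde y(\tau)\5d\tau
  \ =\ \tfrac{3}{2}\log\frac{1+a t}{1+a s},\qquad a=2\rIm K_T\5y(0),
$$
so the integrating factor $\Phi(t,s)=\exp\bigl(-\int_s^t 2\rIm K_T(y+\tilde y)\bigr)$ satisfies
\be\la{phiBound}
  \Phi(t,s)\le\Bigl(\frac{1+a s}{1+a t}\Bigr)^{3/2}\le c_2\Bigl(\frac{1+\ve s}{1+\ve t}\Bigr)^{3/2}.
\ee
From \eqref{rEq}, $r(t)=\int_0^t\Phi(t,s)\5Y(s)\5ds$.

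\medskip\noindent
\textbf{Step 4 (Duhamel estimate).} Inserting \eqref{mY} and \eqref{phiBound},
$$
  |r(t)|
  \ \le\ c_2\5\ov Y\int_0^t\Bigl(\frac{1+\ve s}{1+\ve t}\Bigr)^{3/2}
  \frac{\ve^{5/2}}{(1+\ve s)^2\sqrt{\ve s}}\5ds
  \ \le\ \frac{c_2\5\ov Y\5\ve^{5/2}}{(1+\ve t)^{3/2}}
  \int_0^t\frac{ds}{(1+\ve s)^{1/2}\sqrt{\ve s}}.
$$
A direct computation of the last integral yields a value $\le c_3\5\ve^{-1}(1+\ve t)^{1/2}/(1+\ve t)^{1/2}\cdot\ldots$; more precisely, splitting at $\ve s=1$ one gets $\int_0^t(1+\ve s)^{-1/2}(\ve s)^{-1/2}\5ds\le c_3\5\ve^{-1}$, so
\be\la{finalrBound}
  |r(t)|\ \le\ c\5\ov Y\Bigl(\frac{\ve}{1+\ve t}\Bigr)^{3/2}.
\ee

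\medskip\noindent
\textbf{Step 5 (closing the bootstrap and conclusion).} Comparing \eqref{finalrBound} with \eqref{tildeyBound}, the ratio is $O((\ve/(1+\ve t))^{1/2})=O(\ve^{1/2})$, which is $<1/4$ for $\ve$ sufficiently small (depending only on $y_0,\rIm K_T,\ov Y$). Hence the bootstrap assumption $|r|\le\tilde y/2$ is strictly improved on $[0,T_*]$; by continuity $T_*=T$. Inequality \eqref{finalrBound} is exactly \eqref{Ricatti}.

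\medskip\noindent
The main obstacle is the interplay in Step 4 between the weak decay of the integrating factor (only $(1+s)^{3/2}/(1+t)^{3/2}$, essentially the Riccati rate itself) and the mild non-integrable singularity $s^{-1/2}$ of $Y$ near the origin: the computation must be done carefully enough to see that these two effects combine to give precisely the gain of one half power in $\ve/(1+\ve t)$ relative to $\tilde y$ — any cruder bookkeeping loses a logarithm and breaks the bootstrap.
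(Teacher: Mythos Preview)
The paper does not give its own proof of this lemma; it is simply quoted from \cite[Proposition 5.6]{BS}, so there is nothing in the present paper to compare your argument against.

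That said, your argument has a genuine gap. In Step~3 you pass from $\Phi(t,s)\le\bigl((1+as)/(1+at)\bigr)^{3/2}$, with $a=2\,\rIm K_T\,y(0)$, to $\Phi(t,s)\le c_2\bigl((1+\ve s)/(1+\ve t)\bigr)^{3/2}$. This second inequality requires $a$ to be bounded \emph{below} by a fixed multiple of $\ve$, i.e.\ $y(0)\ge c_0\ve$ for some $c_0>0$; the hypothesis \eqref{myh0} only gives the upper bound $y(0)\le\ve y_0$. If $y(0)\ll\ve$ then for $s=0$ and $\ve t$ large one has $(1+at)/(1+\ve t)\to 0$, and the inequality fails. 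Without it, the Duhamel integral in Step~4 only yields $|r(t)|\le C\ov Y\,\ve^{3/2}/(1+at)^{3/2}$, not the claimed $(\ve/(1+\ve t))^{3/2}$; and Step~5 does not close either, because the smallness you need there actually depends on the ratio $y(0)/\ve$, not merely on $y_0$.

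This is not just a technical slip: the lemma as literally stated (with only an upper bound on $y(0)$) appears to require the missing lower bound. Indeed, take $y(0)=0$ and $Y(t)=\ov Y\,\ve^{5/2}/\bigl((1+\ve t)^2\sqrt{\ve t}\bigr)>0$. Since $\dot y\le Y$ one has $y\le C\ov Y\ve^{3/2}$, and then $\dot y\ge Y-2\,\rIm K_T\,y^2$ gives $y(1/\ve)\ge c_1\ov Y\ve^{3/2}$ with $c_1>0$ a numerical constant. For $t>1/\ve$ the comparison $\dot y\ge-2\,\rIm K_T\,y^2$ yields $y(T/\ve)\ge c_1\ov Y\ve^{3/2}/\bigl(1+O(\ve^{1/2}T)\bigr)$, which stays of order $\ve^{3/2}$ for any fixed $T$ as $\ve\to0$ --- much larger than the asserted bound $c\ov Y(\ve/(1+T))^{3/2}$. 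So a two-sided hypothesis $y(0)\sim\ve$ (as is presumably intended in \cite{BS}, and as holds in the paper's application up to the degenerate case $z(0)=0$) is needed; under that extra assumption your bootstrap argument is correct.
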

Now we consider equation (\ref{mh}) for $h_1$.
\begin{lemma}\la{mh-sol}
 The solution of (\ref{mh}), with  initial condition and source term
  satisfying (\ref{myh0}) and (\ref{mH}), is bounded as follows:
  \be\la{mh-est}
    \Vert h_1\Vert_{L^{\infty}_{-\beta}}\le c(\om_T)\Bigl(\frac{\ve}
    {1+\ve t}\Bigr)^{3/2}\Big(h_{0}+\ov H_1+\sum\limits_m \ov A_m\Big).
  \ee
\end{lemma}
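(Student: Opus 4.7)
The plan is to combine Duhamel's formula with the Euler-trick factorisation of the non-autonomous propagator of (\ref{mh}), already exploited for $k_1$ in Lemma \ref{k1-l}, and then to invoke the dispersive bounds in Propositions \ref{TD} and \ref{l1}. Since the operators ${\bf C}_M(\tau)$ commute at different times, the solution of (\ref{mh}) is
\[
  h_1(t) = \mathcal{U}(t,0)h_1(0) + \int_0^t \mathcal{U}(t,s)H(\cdot,s)\,ds,\qquad
  \mathcal{U}(t,s) = e^{{\bf C}_T(t-s)}\bigl(e^{i\nu(t,s)}{\bf\Pi}^+_T+e^{-i\nu(t,s)}{\bf\Pi}^-_T+{\bf P}^d_T\bigr),
\]
with $\nu(t,s)=\int_s^t\sigma(\tau)\,d\tau$ real-valued, so the phase factors have modulus one. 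Because $h_1(0)$, $H_1(\cdot,s)$ and every ${\cal A}_m(\cdot,s)$ lie in $X^c_T$, the ${\bf P}^d_T$ component is annihilated throughout.

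For the initial-data term, Proposition \ref{TD} applied to ${\bf\Pi}^\pm_T h_1(0)$ yields $\|\mathcal{U}(t,0)h_1(0)\|_{L^\infty_{-\beta}}\le c(1+t)^{-3/2}\ve^{3/2}h_0 \le c h_0\bigl(\tfrac{\ve}{1+\ve t}\bigr)^{3/2}$. For the $H_1$ contribution Proposition \ref{TD} again gives integrand $\le c(1+t-s)^{-3/2}\ov H_1\bigl(\tfrac{\ve}{1+\ve s}\bigr)^{3/2}$. For the $H_2=\sum_m ({\bf C}_T-2i\mu_T m-0)^{-1}{\cal A}_m$ contribution the key bound is Proposition \ref{l1}: for $m=\pm 1$ it directly provides the required $(1+t-s)^{-3/2}$ decay, and for $m=0$ I decompose ${\cal A}_0={\bf\Pi}^+_T{\cal A}_0+{\bf\Pi}^-_T{\cal A}_0$ (legitimate because ${\cal A}_0\in X^c_T$) and invoke the $e^{{\bf C}t}{\bf C}^{-1}{\bf\Pi}^\pm$ bound in (\ref{b1}). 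In every case the integrand is dominated by $c(1+t-s)^{-3/2}\bigl(\ov H_1+\sum_m\ov A_m\bigr)\bigl(\tfrac{\ve}{1+\ve s}\bigr)^{3/2}$.

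It then remains to prove the convolution estimate
\[
  I(t):=\int_0^t (1+t-s)^{-3/2}\Bigl(\frac{\ve}{1+\ve s}\Bigr)^{3/2}ds \le C\Bigl(\frac{\ve}{1+\ve t}\Bigr)^{3/2},
\]
uniformly in $t\ge 0$ and small $\ve$. I split at $s=t/2$: on $[0,t/2]$ the factor $(1+t-s)^{-3/2}$ is controlled by $c(1+t)^{-3/2}$, while $\int_0^\infty \ve^{3/2}(1+\ve s)^{-3/2}\,ds=2\ve^{1/2}$; on $[t/2,t]$ the factor $\ve^{3/2}(1+\ve s)^{-3/2}$ is controlled by $c\bigl(\tfrac{\ve}{1+\ve t}\bigr)^{3/2}$, while $(1+t-s)^{-3/2}$ has uniformly bounded integral. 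A short case analysis in the regimes $t\le 1$, $1\le t\le \ve^{-1}$, and $t\ge \ve^{-1}$ verifies the desired bound, giving (\ref{mh-est}). The main technical obstacle is precisely this convolution estimate, where the dispersive scale $\cO(1)$ competes with the modulation scale $\cO(\ve^{-1})$; the intermediate regime $1\le t\le \ve^{-1}$ is the delicate one, while the rest is bookkeeping based on the dispersive machinery of \S\ref{rpsec}.
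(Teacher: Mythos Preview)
Your proposal is correct and follows essentially the same approach as the paper: Duhamel's formula combined with the Euler-trick factorisation (\ref{CM-int}) of the commuting propagator, then the dispersive bounds (\ref{b1}) and (\ref{b5}), and finally the convolution estimate $\int_0^t(1+t-s)^{-3/2}(1+\ve s)^{-3/2}ds\le c(1+\ve t)^{-3/2}$. The only difference is that the paper simply cites \cite[Lemma 5.3]{BS} for this last convolution inequality, whereas you sketch a direct proof by splitting at $s=t/2$ and treating the three time regimes separately.
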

\bp
The function $h_1(x,t)$  can be  expressed as:
 $$
     h_1=e^{\int_0^t{\bf C}_M(\tau)d\tau}h_1(0)
     +\int_0^te^{\int_s^t{\bf C}_M(\tau)d\tau}H(s)ds.
 $$
To establish (\ref{mh-est}) we use the representation (\ref{CM-int}) and
the bounds (\ref{b1}) and (\ref{b5}) to deduce that
  $$
    \Vert h_1\Vert_{L^{\infty}_{-\beta}}\le \frac{c(\om_T)}{(1+t)^{3/2}}
    \Vert h_1(0)\Vert_{{\cal M}_{\beta}}+\int_0^t\frac{c(\om_T)}
    {(1+(t-s))^{3/2}}(\Vert H_1(s)\Vert_{{\cal M}_{\beta}}+
    \Vert A_m(s)\Vert_{{\cal M}_{\beta}})ds
  $$
  $$
    \le c(\om_T)
    \Big[h_0\Big(\frac{\ve}{1+t}\Big)^{3/2}+\int_0^t\frac {ds}
    {(1+(t-s))^{3/2}}\Big(\ov H_1\Big(\frac{\ve}{1+\ve s}\Big)^{3/2}
    +\sum_m \ov A_m\Big(\frac{\ve}{1+\ve s}\Big)^{3/2}\Big)\Big]
  $$
  $$
    \le c(\om_T)\Big(\frac{\ve}
    {1+\ve t}\Big)^{3/2}\Big(h_0+\ov H_1+\sum\limits_m \ov A_{m}\Big),
  $$
since $\int_0^t(1+t-s)^{-3/2}(1+\epsilon s)^{-3/2}ds\leq c(1+\epsilon t)^{-3/2}$
by \cite[lemma 5.3]{BS}.
\ep
\subsection{Inequalities for  majorants}
\label{est-maj}
In this section we estimate in turn the three majorants
$\M_0,\M_1,\M_2$.
\begin{lemma}\la{M0-est}
  The majorants $\M_0(T)$, $\M_1(T)$, and $\M_2(T)$ satisfy
  \be\la{M0M}
     \M_0(T)={\cal R}(\ve^{1/2}\M)\Bigl[(1+\M_1)^4+\ve (1+|\M|)^2\Bigr],
  \ee
  \be\la{M1M}
    \M_1^2={\cal R}(\ve^{1/2}\M)\Big(1+\ve^{1/2}(1+|\M|)^5\Big)
  \ee
  \be\la{M2M}
    \M_2={\cal R}(\ve^{1/2}\M)\Big[(1+\M_1)^3+\ve^{1/2}(1+|\M|)^4\Big].
  \ee
\end{lemma}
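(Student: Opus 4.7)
The three estimates are established independently, each by inserting the
majorant bounds into a previously established inequality. In each
case I will make systematic use of the pointwise decay statements
$|z(t)|\le\M_1(\ve/(1+\ve t))^{1/2}$, $\|h_1(t)\|_{L^\infty_{-\beta}}\le
\M_2(\ve/(1+\ve t))^{3/2}$, $|\om-\om_T|\le \M_0\,\ve/(1+\ve t)$, together
with the bound $\|k_1\|_{L^\infty_{-\beta}}\le c\ve/(1+t)^{3/2}$ from
Lemma~\ref{k1-l}, the elementary estimate $\|k\|_{L^\infty_{-\beta}}={\cal
R}_1(\om)|z|^2$, and the bound $\|g\|_{L^\infty_{-\beta}}={\cal
R}_1(\om)|\om-\om_T|\|h\|_{L^\infty_{-\beta}}$ from Lemma~\ref{g-est}.
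Combining these, $|z|^2+\|f\|_{L^\infty_{-\beta}}={\cal R}(\ve^{1/2}\M)
[\M_1^2\ve/(1+\ve t)+(\ve/(1+\ve t))^{3/2}(1+|\M|)]$.

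For $\M_0$, I substitute these estimates into the bound (\ref{om-dif}). The
time integral contributes $\int_t^T \M_1^4(\ve/(1+\ve\tau))^2 d\tau$ plus
lower order terms; since $\int_t^\infty(\ve/(1+\ve\tau))^2\,d\tau\le
\ve/(1+\ve t)$, this gives a contribution of order $\M_1^4\,\ve/(1+\ve t)$,
plus $\ve$-corrections from the remaining pieces of $\|f\|$. The boundary
terms $(|z_T|+\|f_T\|)^2$ and $(|z|+\|f\|)^2$ contribute quantities bounded
by $(\M_1^2+\text{lower order})\,\ve/(1+\ve t)$, using
$(1+\ve t)/(1+\ve T)\le 1$ for $t\le T$. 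Dividing through by $\ve/(1+\ve
t)$ produces (\ref{M0M}).

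For $\M_1$, the Riccati estimate (\ref{Ricatti}) of Lemma~\ref{my-sol},
applied to equation (\ref{y}) with initial condition $y(0)\le\ve(1+{\cal
R}\ve^{1/2})$ from (\ref{fy0}) and source bound from Lemma~\ref{YR-est}
yielding $\ov Y={\cal R}(\ve^{1/2}\M)(1+|\M|)^5$, gives
$y(t)\le c\ve/(1+\ve t)+c\ov Y(\ve/(1+\ve t))^{3/2}$ uniformly in $t$.
Because $z_1=z+O(z^2)$ is a near-identity change of variables,
$|z|^2\le y+O(y^{3/2})$, so $|z(t)|^2\le c\ve/(1+\ve t)
[1+\ve^{1/2}{\cal R}(1+|\M|)^5]$. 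Dividing by $\ve/(1+\ve t)$ gives
(\ref{M1M}).

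For $\M_2$, I apply Lemma~\ref{mh-sol} to equation (\ref{h5}) for $h_1$.
The initial condition $h_0$ is bounded via (\ref{fh0}) by a constant plus
${\cal R}(\ve^{1/2}\M)\ve^{1/2}\M_0(1+\M_1^2+\ve^{1/2}\M_2)$; the source
decomposition $\hat H_R=\tilde H_R+H'$ with $H'=\sum_m({\bf C}_T-2i\mu_T
m-0)^{-1}A_m$ matches the structure (\ref{mH}), with $\ov H_1$ controlled
by Lemma~\ref{HH} and $\ov A_m$ by Lemma~\ref{fAm}. Summing contributions
and dividing by $(\ve/(1+\ve t))^{3/2}$ gives $\M_2\le{\cal
R}(\ve^{1/2}\M)[(1+\M_1)^3+\ve^{1/2}(1+|\M|)^4]$, up to reabsorbing the
$\ve^{1/2}\M_2$ term appearing on the right-hand side (which is legitimate
for $\ve$ small). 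The main bookkeeping obstacle is ensuring that all
self-referential occurrences of $\M_2$ in the source bounds carry an
explicit $\ve^{1/2}$ factor so they can be absorbed; this is precisely
what the separation of $\tilde H_R$ (cubic-type terms) from $H'$ (which
uses the dispersive estimate of Proposition~\ref{l1}) accomplishes.
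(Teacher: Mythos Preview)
Your proof is correct and follows essentially the same route as the paper's: in each of the three steps you invoke exactly the same ingredients (the bound (\ref{om-dif}) for $\M_0$, the Riccati estimate (\ref{Ricatti}) together with (\ref{YRest}) and (\ref{fy0}) for $\M_1$, and Lemma~\ref{mh-sol} together with (\ref{fHR}), (\ref{Am-final}), (\ref{fh0}) for $\M_2$), and the intermediate bound on $|z|^2+\Vert f\Vert_{L^\infty_{-\beta}}$ you record matches the paper's (\ref{zf}). One small remark: the ``reabsorption'' you worry about in the $\M_2$ step is not actually needed here, since the stated inequality (\ref{M2M}) already allows $(1+|\M|)^4$ on the right-hand side; the genuine bootstrap closing occurs only later, in Lemma~\ref{Mb}.
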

\bp
{\it Step i)}
  Using the equality $f=g+h=g+k+k_1+h_1$ and bound (\ref{k1-1}) for $k_1$  we have, using the
notation defined prior to \eqref{om-dif}
  \beqn\nonumber
     |z|^2+\Vert f\Vert_{L^{\infty}_{-\beta}}&=&
     {\cal R}_2(\om,|z|+\Vert f\Vert_{L^{\infty}_{-\beta}})
     (\Vert k_1\Vert_{L^{\infty}_{-\beta}}+|z|^2
     +\Vert h_1\Vert_{L^{\infty}_{-\beta}})\\
     \nonumber
     &=&{\cal R}(\ve^{1/2}\M)\Big(\frac{\ve}{(1+ t)^{3/2}}+
     \Big(\frac{\ve}{1+\ve t}\Big)\M_1^2
     +\Big(\frac{\ve}{1+\ve t}\Big)^{3/2}\M_2\Big)\\
     \nonumber
      &=&{\cal R}(\ve^{1/2}\M)\Big(\frac{\ve}{1+\ve t}\Big)
      \Big(1+\M_1^2+\ve^{1/2}\M_2\Big),
  \eeqn
so that
  \be\la{zf}
     |z|^2+\Vert f\Vert_{L^{\infty}_{-\beta}}={\cal R}(\ve^{1/2}\M)
     \frac {\ve}{1+\ve t}\Bigl(1+\M_1^2+\ve^{1/2}\M_2\Bigr).
  \ee
  Then   (\ref{om-dif}) and (\ref{M0})  imply (\ref{M0M}).
\\
{\it Step ii)}
  Recall  $y=|z_1|^2$ satisfies (\ref{my}) with $Y=Y_R$,
  and  $Y_R$ satisfies the inequality (\ref{YRest}) which is exactly
  the condition (\ref{mY}) with $\ov Y={\cal R}(\ve^{1/2}\M)(1+|\M|)^5$.
  Using (\ref{Ricatti}) as well as (\ref{fy0}) to bound the initial condition $y(0)$,
  it follows that
  $
    y\le {\cal R}(\ve^{1/2}\M)\Big[\frac{\ve}{1+\ve t}+\Big(\frac{\ve}{1+\ve t}\Big)
    ^{3/2}(1+|\M|)^5\Big].
  $
  Therefore
  $$
    |z|^2\le y+ {\cal R}(\om)|z|^3
    \le {\cal R}(\ve^{1/2}\M)\Big[\frac{\ve}{1+\ve t}+\Big(\frac{\ve}{1+\ve t}\Big)
    ^{3/2}(1+|\M|)^5+\Big(\frac{\ve}{1+\ve t}\Big)^{3/2}\M_1^3\Big],
  $$
from which (\ref{M1M}) follows.
\\
{\it Step iii)}
Let us now consider $h_1$, the solution of (\ref{h5}). It has the form (\ref{mh})
  with $H=\hat H_R=\tilde H_R+H'$, where $\tilde H_R$ and $H'$ identify respectively
  to $H_1$ and $H_2$.
  More precisely, using (\ref{fHR}) and (\ref{Am-final}), we have
  $$
   \ov{H_1}={\cal R}(\ve^{1/2}\M)\Big((1+\M_1)^3+\ve^{1/2}(1+|\M|)^4)\Big),\quad\quad
   \ov A_{m}={\cal R}(\ve^{1/2}\M)\Big(\M_1^3+\ve^{1/2}(1+|\M|)^3\Big).
  $$
  We know that $h_1(0)=h(0)$. Thus
  $h_0=c+{\cal R}(\ve^{1/2}\M)\ve^{1/2}\M_0(1+\M_1^2+\ve^{1/2}\M_2)$
  by (\ref{fh0}). Applying Lemma \ref{mh-sol}, we deduce that
  $$
   \Vert h_1\Vert_{L^{\infty}_{-\beta}}={\cal R}(\ve^{1/2}\M)
   \Big(\frac{\ve}{1+\ve t}\Big)^{3/2}\Big[(1+\M_1)^3+\ve^{1/2}(1+|\M|)^4\Big],
  $$
which implies \eqref{M2M}.
\ep

\subsection{Uniform bounds for majorants}
\label{M-bounds}
Now we prove that if $\ve$ is sufficiently small, all the $\M_i$
are bounded uniformly in $T$ and $\ve$.
\begin{lemma}\la{Mb}
  For $\ve$ sufficiently small, there exists a constant $M$ independent of $T$ and $\ve$,
  such that,
  \be\la{M-est}
    |\M(T)|\le M.
  \ee
\end{lemma}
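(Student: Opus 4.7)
The plan is to prove \eqref{M-est} by a standard continuity/bootstrap argument applied to the system of coupled nonlinear inequalities (\ref{M0M})--(\ref{M2M}) furnished by Lemma \ref{M0-est}. The key observation is that each majorant $\M_i(T)$ is a non-decreasing and continuous function of $T$ (as the supremum of a continuous function of time weighted by a continuous factor), and that $|\M(0)|$ is controlled by a constant independent of $\ve$: indeed $\M_0(0)=0$, while \eqref{fy0} and \eqref{fh0} together with $h_1(0)=h(0)$ give $\M_1(0)^2\le 1+{\cal R}(\ve^{1/2}\M(0))\ve^{1/2}$ and $\M_2(0)\le c+{\cal R}(\ve^{1/2}\M(0))\ve^{1/2}|\M(0)|$.

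First I would fix a large constant $M$ (to be chosen) and define
\[
   T^*=\sup\Bigl\{T\ge 0\;:\;|\M(T)|\le 2M\Bigr\}.
\]
On the interval $[0,T^*]$, the a priori bound $|\M|\le 2M$ ensures that every occurrence of ${\cal R}(\ve^{1/2}\M)$ in Lemma \ref{M0-est} is dominated by a constant $K=K(M)$ independent of $\ve$, provided $\ve$ is small enough that $\ve^{1/2}\cdot 2M$ remains in the domain on which these ${\cal R}$ are bounded. Substituting $|\M|\le 2M$ into (\ref{M1M}) and using that the $(1+|\M|)^5$ term comes multiplied by $\ve^{1/2}$, I would obtain
\[
   \M_1(T)^2\le K\bigl(1+\ve^{1/2}(1+2M)^5\bigr)\le 2K,
\]
for $\ve$ small, so $\M_1(T)\le\sqrt{2K}$. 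Feeding this back into (\ref{M0M}) and (\ref{M2M}), and again using that the $(1+|\M|)^p$ contributions come multiplied by an explicit power of $\ve$, yields analogous absolute bounds
\[
   \M_0(T)\le K\bigl[(1+\sqrt{2K})^4+\ve(1+2M)^2\bigr],\qquad
   \M_2(T)\le K\bigl[(1+\sqrt{2K})^3+\ve^{1/2}(1+2M)^4\bigr].
\]

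Now I would choose $M$ to be twice the sum of the $\ve$-independent parts of these three right-hand sides, and then fix $\ve_0>0$ so small that for $0<\ve\le\ve_0$ the remaining $\ve$-dependent corrections are each bounded by $M/10$. This gives $|\M(T)|\le M$ on $[0,T^*]$, strictly inside the continuity box $\{|\M|\le 2M\}$. Combined with the initial bound $|\M(0)|\le M$ (valid for $\ve_0$ small, by the computation above), a standard continuity argument in $T$ now shows $T^*=+\infty$, proving \eqref{M-est}.

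The main obstacle is the self-referential character of the system (\ref{M0M})--(\ref{M2M}): each ${\cal R}$ depends on $|\M|$ itself, and the right-hand sides involve high powers of the majorants. The bootstrap closes only because in every superlinear term the power of $|\M|$ is compensated by an explicit factor $\ve^{1/2}$, so that once one fixes an a priori ``cage'' $|\M|\le 2M$ the nonlinearities are tamed by smallness of $\ve$. The only genuinely unsmoothed part is the leading $(1+\M_1)$ contributions; these are handled because (\ref{M1M}) decouples and gives an $\ve$-independent bound on $\M_1$, which then feeds (\ref{M0M}) and (\ref{M2M}) and produces $\ve$-independent bounds on $\M_0$ and $\M_2$.
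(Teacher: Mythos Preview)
Your argument is correct and follows essentially the same route as the paper's proof: both use a continuity (bootstrap) argument based on the system (\ref{M0M})--(\ref{M2M}), with the decisive observation that (\ref{M1M}) furnishes an $\ve$-independent bound on $\M_1$ once $\ve^{1/2}|\M|$ is controlled, after which (\ref{M0M}) and (\ref{M2M}) close because all remaining superlinear occurrences of $|\M|$ carry a factor $\ve^{1/2}$. The paper compresses this into a single inequality $\M^2\le{\cal R}(\ve^{1/2}\M)[(1+\M_1)^8+\ve^{1/2}(1+|\M|)^8]$, substitutes (\ref{M1M}) to get $\M^2\le{\cal R}(\ve^{1/2}\M)(1+\ve^{1/2}F(\M))$, and then invokes continuity; your version makes the cage $|\M|\le 2M$ and the choice of $M,\ve_0$ explicit, but the substance is identical.
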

\bp
  Combining the inequalities (\ref{M0M})-(\ref{M2M}) for the $\M_i$, one get a estimate of the form
  $$
   \M^2\le{\cal R}(\ve^{1/2}\M)[(1+\M_1)^8+\ve^{1/2}(1+|\M|)^8]
  $$
  Replacing $\M_1^2$ in the right-hand by its bound (\ref{M1M}),
  we get an inequality in the form
  $$
   \M^2\le{\cal R}(\ve^{1/2}\M)(1+\ve^{1/2}F(\M))
  $$
  where $F(\M)$ is an appropriate function . From this inequality
  it follows that $\M$ is bounded independent of $\ve\ll 1$,
since $\M(0)$ is small,
  and $\M(t)$ is a continuous function of $t$.
  \ep
\bc\la{bounds}
  The function $\om(t)$ has a limit $\om_{+}$ as $t\to\infty$.
Furthermore, the following estimates hold for all $t>0$:
  \be\la{est0}
    |\om_{+}-\om(t)|\le \frac{M\ve}{1+\ve t},~~
|z(t)|\le M\Big(\frac{\ve}{1+\ve t}\Big)^{1/2},~~
\Vert h_1\Vert_{L^{\infty}_{-\beta}}\le M\Big(\frac{\ve}{1+\ve t}\Big)^{3/2},~~
\Vert f\Vert_{L^{\infty}_{-\beta}}\le\frac{ c(M)\ve}{1+\ve t}
  \ee
\ec
\bp
  Since $|\om_T-\om(t)|\le\M_0\ve/(1+\ve t)$,
  then  applying this result to
  $|\om(t_1)-\om(t_2)|$, we see that $\om(t)$ is a Cauchy sequence. It thus has
  a limit, denoted $\om_{+}$ and the first estimate holds.
The next three results follow immediately.
\ep
\subsection{Large time behaviour of  solution}
\label{lt-as}
Here we deduce from corollary \ref{bounds} a theorem
which describe a large time behaviour of the solution. 
Notice that in the decomposition $f=g+h=g+h_1+k+k_1$, a fixed time $T$
has been chosen, and all the components depend on $\om(T)$. From the above proposition,
we know that $\om(t)$ has a limit $\om_{+}$ as $t\to\infty$,
and since the estimates are uniform in $T$ we can reformulate the 
decomposition by choosing $T=\infty$ and $\om_T=\om_+$.
Namely, let us denote ${\bf P}_{\infty}^c={\bf P}^c(\om_{+})$
and ${\bf P}_{\infty}^d=1-{\bf P}_{\infty}^c$. We define $f=g+h$ where
$g={\bf P}_{\infty}^df$ and $h={\bf P}_{\infty}^cf$. We also decompose $h=h_1+k+k_1$ where
$$
  k=a_{20}z^2+a_{11}z\ov z+a_{02}\ov z^2,\quad
  k_1=-\exp\big(\int_0^t {\bf C}_{+}(\tau)d\tau\big)k(0),\quad
  a_{ij}=a_{ij}(\om_{+},x)
$$ 
and
${\bf C}_{+}={\bf C}(\om_{+})+i\big(\om(t)-\om_{+}+\dot\gamma(t)\big)
\big({\bf\Pi}_{\infty}^{+}-{\bf\Pi}_{\infty}^{-}\big)$.
All the estimates previously obtained in \S\ref{om-trans}-\S\ref{maj}
for finite $T$ can be extended to $T=\infty$ and $\om_T=\om_{+}$ without modification.
Thus we have proved the following result:
\begin{theorem}\la{t-main}
  Let the  conditions of Theorem \ref{main} hold. Then,
for $\ve$ sufficiently small, there exist
$C^1$ functions $\omega(t),\gamma(t),z(t)$ as in lemma \ref{BS}, and
constants $\omega_+\in\R$ and $M>0$, such that for all $t\geq 0$:
  \be\la{solform}
     \psi(x,t)=e^{j(\int_0^t\om(s)ds+\gamma(t))}
     \big(\Psi_\om(x)+z(t)u(x,\om)+\ov z(t)u^*(x,\om)+f(x,t)\big),
  \ee
and
   \be\la{est00}
    |\om(t)-\om_{+}|\le M\frac{\ve}{1+\ve t}, \quad
    |z(t)|\le M\Big(\frac{\ve}{1+\ve t}\Big)^{1/2}, \quad
    \Vert f(t)\Vert_{L^{\infty}_{-\beta}}\le M\frac{\ve}{1+\ve t},
  \ee
so that $\om_{+}=\lim\limits_{t\to\infty}\om(t)\in\R$. A corresponding
statement also holds for $t\to -\infty$.
\end{theorem}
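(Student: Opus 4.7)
The plan is to assemble the analysis of the previous sections and then pass to the limit $T\to\infty$ in the frozen spectral decomposition. First, I would invoke Lemma \ref{BS} to write the solution in the modulated form \eqref{sol}, which introduces the $C^1$ parameters $\omega(t)$, $\gamma(t)$, $z(t)$ and $f(x,t)\in X^c_t$. Since the hypothesis \eqref{close} guarantees that $(z(0),f(0))$ is small and $\psi_0$ is close to $e^{j\theta_0}\Psi_{\omega_0}$, a standard continuity/bootstrap argument, together with the a priori bound \eqref{a-priori-h1} from Theorem \ref{locex}, shows that this decomposition extends as long as the majorants $\M_0,\M_1,\M_2$ remain controlled.

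Next, for an arbitrary $T>0$, I would perform the frozen spectral decomposition of \S\ref{f-dec}, write the canonical form of the equations obtained in \S\ref{eqns-trans}, and apply the uniform bound on majorants proved in Lemma \ref{Mb}. This yields, by Corollary \ref{bounds}, that for all $0\le t\le T$
\[
   |\omega_T-\omega(t)|\le \frac{M\ve}{1+\ve t},\quad
   |z(t)|\le M\Big(\tfrac{\ve}{1+\ve t}\Big)^{1/2},\quad
   \Vert f(t)\Vert_{L^\infty_{-\beta}}\le \frac{c(M)\ve}{1+\ve t},
\]
with $M$ independent of $T$. Applying the first of these estimates to two instants $t_1<t_2<T$ and letting $T\to\infty$ shows that $\{\omega(t)\}$ is Cauchy; define $\omega_+=\lim_{t\to\infty}\omega(t)$. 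Passing to the limit $T\to\infty$ in $|\omega_T-\omega(t)|\le M\ve/(1+\ve t)$ gives the first inequality of \eqref{est00}. The bounds on $z(t)$ and $f(t)$ are already $T$-independent, hence carry over immediately.

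To obtain the final form \eqref{solform} I would, following the remark at the opening of \S\ref{lt-as}, redo the decomposition $f=g+h=g+h_1+k+k_1$ with $T=\infty$, replacing ${\bf C}_T$ by ${\bf C}(\omega_+)$ and ${\bf P}_T^{c,d}$ by ${\bf P}_\infty^{c,d}={\bf P}^{c,d}(\omega_+)$; the analyticity of the projectors and resolvents in $\omega$, together with the estimate $|\omega_+-\omega(t)|=O(\ve/(1+\ve t))$, ensures that all bounds of \S\ref{om-trans}--\S\ref{M-bounds} transfer to this new decomposition without loss. Finally, the statement for $t\to-\infty$ follows by applying the same argument to the time-reversed equation, using the invariance $t\mapsto -t$, $\psi\mapsto\overline{\psi}$ of \eqref{S}.

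The main obstacle I anticipate is verifying that the $T$-indexed decompositions are genuinely compatible in the limit: specifically, that when $\omega_T$ is replaced by $\omega_+$, the quantities $k_1$, $H_R'$, $\tilde H_R$, etc., still obey the estimates \eqref{fHR}--\eqref{Am-final} with constants independent of the passage to the limit. Once the uniform-in-$T$ control of Lemma \ref{Mb} is in hand, however, the main steps reduce to bookkeeping: the spectral projectors depend smoothly on $\omega$, and the difference ${\bf P}_T^{c,d}-{\bf P}_\infty^{c,d}$ is of order $|\omega_T-\omega_+|=O(\ve/(1+\ve T))\to 0$, so each error term inherits the required $(1+\ve t)^{-3/2}$ decay.
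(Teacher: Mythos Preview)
Your proposal is correct and follows essentially the same route as the paper: invoke Corollary~\ref{bounds} for the uniform-in-$T$ majorant bounds and the existence of $\omega_+$, then redo the frozen decomposition at $T=\infty$ with ${\bf C}_T$ replaced by ${\bf C}(\omega_+)$ and observe that all estimates of \S\ref{om-trans}--\S\ref{M-bounds} carry over unchanged. Your final paragraph about the compatibility of the $T$-indexed decompositions is exactly the point the paper dispatches with the one-line remark that the estimates ``can be extended to $T=\infty$ and $\omega_T=\omega_+$ without modification''; your justification via the smooth dependence of the projectors on $\omega$ and the $O(\ve/(1+\ve T))$ smallness of ${\bf P}_T^{c,d}-{\bf P}_\infty^{c,d}$ is the correct mechanism behind that remark.
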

\section{Scattering asymptotics}
\setcounter{equation}{0}
\label{sc-as}
In this section we obtain the scattering asymptotics (\ref{sol-as}).
\subsection{Large time behavior of $z(t)$, $\om(t)$ and $\gamma(t)$}
\label{z-beh}
We start with equation (\ref{z6}) for $z_1$, rewritten as
$
  \dot z_1=i\mu z_1+iK_{+}|z_1|^2z_1+\widehat{\widehat Z}_R
$
with $K_{+}=K(\om_{+})$. By (\ref{hhZR}) the inhomogeneous term
$\widehat{\widehat Z}_R$
satisfies the estimate
$$
  |\widehat{\widehat Z}_R|
  ={\cal R}(\ve^{1/2}M)\frac{\ve^2}{(1+\ve t)^{3/2}\sqrt{\ve t}}
   (1+M^4)= O(t^{-2}),\quad t\to\infty.
$$
On the other hand, we have, from (\ref{fy0}) and (\ref{Ricatti}),
$$
y=\frac{y(0)}{1+2\rIm K_{+}y(0)t}+ O(t^{-3/2}),\quad t\to\infty.
$$
Given the estimate  (\ref{est0}) for $|z|$, and obviously the same one for
$|z_1|$, we have
\be\la{z7}
\dot z_1= i\mu z_1+iK_{+}\frac{y(0)}{1+2\rIm K_{+}y(0)t}z_1+Z_1,\quad
Z_1=  O(t^{-2}),\quad t\to\infty.
\ee
Since by assumption in theorem \ref{main} $y(0)=O(\epsilon)$ we can write
$y(0)=\epsilon y_0$ with $y_0=O(1)$.
Let us denote $2\rIm K_{+}y_0=\epsilon k_{+}$, $\delta=\rRe K_{+}/\rIm K_{+}$
so that $\epsilon K_+ y_0=i\epsilon k_+(1-i\delta)/2$.
The solution $z_1$ of (\ref{z7}) is written in the form
$$
 z_1=\!\frac{e^{i\int_0^t\mu(t_1)dt_1}}
{(1+{\epsilon}k_{+}t)^{\frac{1}{2}(1-i\delta)}}
 \Big[z_1(0)+\!\int_0^t\!\!e^{-i\int\limits_0^s\mu(t_1)dt_1}
 (1+{\epsilon}k_{+}s)^{\frac{1}{2}(1-i\delta)}Z_1(s)ds\Big]
  =z_{\infty}\frac{e^{i\int_0^t\mu(t_1)dt_1}}{(1+{\epsilon}k_{+}t)^{\frac{1}{2}(1-i\delta)}}
  +z_R
$$
where
$$
 z_{\infty}(\om)= z_1(0)+\int_0^{\infty}e^{-i\int_0^s\mu(t_1)dt_1}
 (1+{\epsilon}k_{+}s)^{\frac{1}{2}(1-i\delta)}Z_1(s)ds
$$
and
$$
 z_R= -\int_t^{\infty}e^{i\int_s^t\mu(t_1)dt_1}
 \Big(\frac{1+{\epsilon}k_{+}s}{1+{\epsilon}k_{+}t}
\Big)^{\frac{1}{2}(1-i\delta)}Z_1(s)ds.
$$
Here  $\mu(t_1)=\mu(\om(t_1))$. From the bound (\ref{z7}) on $Z_1$ it follows that
$z_R = O(t^{-1})$.
Therefore $z(t)=z_1(t)+ O(t^{-1})$  satisfies
$$
z(t)= z_{+}\frac{e^{i\int_0^t\mu(t_1)dt_1}}
{(1+{\epsilon}k_{+}t)^{\frac{1}{2}(1-i\delta)}}
 + O(t^{-1}),\;t\to\infty,\quad z_{+}=z_{\infty}(\om_{+}).
$$
From these formulas for $z(t)$,  the
asymptotic behavior of $\om(t)$ and $\gamma(t)$ can be
deduced as in \cite[Sections 6.1 and 6.2]{BS}, leading to the following:
\begin{lemma}\label{om-as}
In the situation of  theorem \ref{t-main}, the functions $\om(t)$
and $\gamma(t)$ have the following asymptotic behavior as $t\to +\infty$:
  \beqn\nonumber
    \om(t)& = &\om_{+}+\frac{q_{+}}{1+{\epsilon}k_{+}t} +\frac{b_{+}}{1+{\epsilon}k_{+} t}
    \cos(2\mu_{+}t+b_1\log(1+{\epsilon}k_{+}t)+b_2)+  O(t^{-3/2}),
  \\
\nonumber
\gamma(t)& = &\gamma_{+}+c_{+}\log(1+{\epsilon}k_{+}t)+ O(t^{-1}),
\eeqn
where $\omega_+,k_+$ are as defined above,
$\mu_{+}=\mu(\om_{+})$, and
$q_+$, $b_{+}$, $b_1$, $b_2$, $c_+$ are constants.
\end{lemma}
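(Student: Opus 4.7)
The plan is to exploit the canonical forms for $\omega$ and $\gamma$ established in \S\ref{eqns-trans}, combined with the explicit leading-order expression for $z(t)$ just derived in \S\ref{z-beh}. Both $\dot\omega_1$ and $\dot\gamma_1$ are slow (modulo an explicit quadratic-in-$z$ term in the case of $\gamma_1$), so direct integration will produce the claimed asymptotics up to easily controlled remainders.

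For $\omega$, Lemma~\ref{bij-exist} gives $\dot\omega_1 = \hat\Omega_R$ with $|\hat\Omega_R| = O((|z|^2+\Vert f\Vert_{L^\infty_{-\beta}})^2) = O((1+\epsilon t)^{-2})$ by Corollary~\ref{bounds}. Hence $\omega_1(t)$ has a limit $\omega_{1,+}$ with $|\omega_1(t)-\omega_{1,+}| = O(t^{-1})$. Inverting the change of variables \eqref{new-om} yields $\omega(t) = \omega_1(t) - b_{20}(\omega)z^2 - b_{02}(\omega)\ov z^2 + O(t^{-3/2})$, since the cubic and $zf$ terms are $O(t^{-3/2})$. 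By \eqref{b20} and \eqref{Omega20} the coefficient $b_{20}$ is real and $b_{02}=b_{20}$, so the quadratic piece equals $2b_{20}(\omega_{+})\rRe(z^2) + O(t^{-2})$. Substituting
$$z(t) = z_{+}(1+\epsilon k_{+}t)^{-(1-i\delta)/2}e^{i\int_0^t\mu(s)ds} + O(t^{-1})$$
and $(1+\epsilon k_{+} t)^{i\delta}=e^{i\delta\log(1+\epsilon k_{+}t)}$ produces a term $\propto (1+\epsilon k_{+} t)^{-1}\cos\Theta(t)$ with $\Theta(t) = 2\int_0^t\mu(s)ds + \delta\log(1+\epsilon k_{+}t) + \arg z_{+}^2$. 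Expanding $\mu(\omega(s)) = \mu_{+} + \mu'(\omega_{+})(\omega(s)-\omega_{+}) + O((\omega(s)-\omega_{+})^2)$ and integrating the $q_{+}/(1+\epsilon k_{+} s)$ part of $\omega(s)-\omega_{+}$ self-consistently delivers $\Theta(t) = 2\mu_{+} t + b_1\log(1+\epsilon k_{+} t) + b_2 + O(t^{-1})$ for appropriate constants $b_1, b_2$. This yields the expansion for $\omega$, with $q_{+}$ absorbing the $1/(1+\epsilon k_{+} t)$ non-oscillatory remnants.

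For $\gamma$, Lemma~\ref{dij-exist} gives $\dot\gamma_1 = \Gamma_{11}(\omega)|z|^2 + \hat\Gamma_R$ with $|\hat\Gamma_R| = O(t^{-2})$. Using $|z(s)|^2 = |z_{+}|^2/(1+\epsilon k_{+} s) + O(s^{-3/2})$ and continuity of $\Gamma_{11}$ at $\omega_{+}$ gives $\gamma_1(t) = c_{+}\log(1+\epsilon k_{+} t) + \gamma_{1,+} + O(t^{-1})$ with $c_{+} = \Gamma_{11}(\omega_{+})|z_{+}|^2/(\epsilon k_{+})$. The transformation \eqref{gamma1-def} relating $\gamma$ to $\gamma_1$ only adds $O(t^{-1})$ oscillatory and decaying pieces, which combine with a redefined $\gamma_{+}$.

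The delicate point is the self-consistent extraction of the logarithmic phase correction $b_1\log(1+\epsilon k_{+}t)$: the integrand $\mu(\omega(s))$ depends on $\omega(s)$, whose $1/(1+\epsilon k_{+} s)$ tail is precisely what we are trying to establish. This circularity is resolved by treating $\omega(s)-\omega_{+}$ as an a priori $O((1+\epsilon s)^{-1})$ quantity supplied by Corollary~\ref{bounds}, which already suffices to integrate against $\mu'(\omega_{+})$; the non-leading part of $\omega(s)-\omega_{+}$ is either oscillatory with non-resonant phase or of higher decay, and contributes only $O(t^{-1})$ to $\Theta(t)$. The rest of the argument is a routine iteration paralleling \cite[Sections~6.1--6.2]{BS}.
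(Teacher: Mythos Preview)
Your proposal is correct and matches the paper's approach: the paper itself gives no argument beyond the sentence ``the asymptotic behavior of $\omega(t)$ and $\gamma(t)$ can be deduced as in \cite[Sections~6.1 and 6.2]{BS},'' and your sketch is precisely an outline of that deduction, invoking the canonical forms from Lemmas~\ref{bij-exist} and~\ref{dij-exist} together with the explicit asymptotics for $z(t)$. In fact you supply considerably more detail than the paper does, correctly identifying that $b_{20}=\overline{b_{02}}$ is real (so the quadratic correction in \eqref{new-om} is $2b_{20}\rRe z^2$), that the absence of a $|z|^2$ term in \eqref{new-om} (because $\Omega_{11}=0$) forces the non-oscillatory $q_+/(1+\epsilon k_+t)$ contribution to come from $\omega_1(t)-\omega_{1,+}=-\int_t^\infty\hat\Omega_R$, and that the self-consistency of the logarithmic phase is handled by the a~priori bound of Corollary~\ref{bounds} followed by iteration.

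One point you pass over quickly deserves a comment: to obtain the expansion for $\omega$ with $O(t^{-3/2})$ remainder (rather than merely $O(t^{-1})$), you need more than the size estimate $|\hat\Omega_R|=O(t^{-2})$; you must isolate the non-oscillatory leading part of $\hat\Omega_R$ (essentially the $|z|^4\sim |z_+|^4/(1+\epsilon k_+ s)^2$ contribution) from the oscillatory and higher-order pieces, so that its integral from $t$ to $\infty$ produces exactly $q_+/(1+\epsilon k_+t)$ plus an $O(t^{-3/2})$ tail. This is the content of the ``routine iteration'' in \cite[\S6.1]{BS} that you cite, so the omission is consistent with both your sketch and the paper's own deferral.
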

\subsection{Soliton asymptotics}
\label{solas-sec}

Here we prove the statement \eqref{sol-as} in
our main theorem \ref{main}.
To achieve this
we look for the solution $\psi(x,t)$ to (\ref{S}), in the
corresponding complex form $\psi=s+\rv+f$,
where
$$
s(x,t)=\psi_{\om(t)}(x)e^{i\theta(t)},\quad\dot\theta(t)=\om(t)+\dot\gamma(t)
$$
is the accompanying soliton, and
$$
\rv(x,t)=v(x,t)e^{i\theta(t)},\quad
v(x,t)=\big(z(t)+\ov z(t)\big)u_1(x,\om(t))+i\big(z(t)-\ov z(t)\big)u_2(x,\om(t)).
$$
We aim now to prove the complex form \eqref{sol-as'} of the
scattering asymptotics, by analysing $f.$
\begin{lemma}
If $\psi(x,t)$ is a solution of (\ref{S}), then
$i\dot f=-f''+R$, where
\beqn\la{F}
R&=&\dot\gamma (s+\rv)-i\dot\om\partial_{\om}(s+\rv)
-i[(\dot z-i\mu z)(u_i+iu_2)+(\dot{\ov z}+i\mu\ov z)(u_1-iu_2)]e^{i\theta}
\nonumber\\
&&-\delta(x)[af+be^{i\theta}\rRe (e^{-i\theta}f)
+\cO(|f+\rv|^2)].
\eeqn
\end{lemma}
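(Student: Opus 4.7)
My plan is to substitute $\psi=s+\rv+f$ directly into \eqref{S} and isolate the equation for $f$. Doing so yields $i\dot f=-f''+R$ with $R=-(i\dot s+s'')-(i\dot\rv+\rv'')-\delta(x)F(\psi(0,t))$, so the task reduces to rewriting each of these three pieces in the claimed form. The governing idea is that the linear part of the nonlinearity at the soliton exactly matches the linearisation that produced the eigenfunctions $u$, while everything else is either a modulation error or genuinely higher order in $f+\rv$.

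First I would handle $i\dot s+s''$. Using $s=\psi_{\om(t)}e^{i\theta(t)}$ together with $\dot\theta=\om+\dot\gamma$ and the nonlinear eigenvalue equation \eqref{NEP}, a direct differentiation gives $-(i\dot s+s'')=-i\dot\om\,\partial_\om\psi_\om e^{i\theta}+\dot\gamma s+\delta(x)a(C^2)s(0)$. Next I would Taylor-expand $F$ around $s(0)$: since $F(\psi)=a(|\psi|^2)\psi$, the real linearisation at a point of modulus $C$ is $\eta\mapsto a(C^2)\eta+be^{i\theta}\rRe(e^{-i\theta}\eta)$ with $b=2a'(C^2)C^2$. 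Applying this with $\eta=\rv(0)+f(0)$ and absorbing the quadratic remainder into $\delta(x)\cO(|f+\rv|^2)$ produces a $\delta(x)a(C^2)s(0)$ contribution which cancels the one just produced from $-(i\dot s+s'')$.

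The main obstacle will be computing $-(i\dot\rv+\rv'')$ and showing that the $\delta$-function terms linear in $\rv$ coming from $\rv''$ exactly cancel those from the linearised nonlinearity $-\delta(x)[a\rv+be^{i\theta}\rRe(e^{-i\theta}\rv)]$. I would write $\rv=ve^{i\theta}$ with $v=zw+\ov z w^*$, $w=u_1+iu_2$, $w^*=u_1-iu_2$, differentiate $v$ through both its $z$- and $\om$-dependence, and compute $w''$ and $w^{*''}$ by means of the eigenfunction identities ${\bf D}_2u_2=i\mu u_1$, $-{\bf D}_1u_1=i\mu u_2$. The fact that $u_1$ is real and $u_2$ is purely imaginary is essential here: it produces boundary contributions proportional to $(a+b)u_1(0)$ and to $au_2(0)$ which, once combined with those from the linearised nonlinearity, cancel identically because $\rRe v=(z+\ov z)u_1(0)$.

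After these cancellations I would reassemble the remaining pieces. Using $\dot\theta=\om+\dot\gamma$ rewrites the spurious $\dot\theta\,\rv$ factor as $\om\rv+\dot\gamma\rv$, and substituting $\dot z=i\mu z+(\dot z-i\mu z)$ and its complex conjugate inside the coefficients of $w$ and $w^*$ makes the $\om$-proportional terms combine with the linear eigenvalue pieces and drop out. What survives is exactly $\dot\gamma(s+\rv)-i\dot\om\,\partial_\om(s+\rv)-i[(\dot z-i\mu z)(u_1+iu_2)+(\dot{\ov z}+i\mu\ov z)(u_1-iu_2)]e^{i\theta}$ plus the nonlinear delta contribution $-\delta(x)[af+be^{i\theta}\rRe(e^{-i\theta}f)+\cO(|f+\rv|^2)]$, which is the announced formula for $R$.
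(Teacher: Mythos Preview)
Your proposal is correct and follows essentially the same approach as the paper: both compute $i\dot s$ from the nonlinear eigenvalue problem \eqref{NEP}, compute $i\dot\rv$ from the component form of $\mathbf{C}u=i\mu u$, Taylor-expand the nonlinearity at $s(0)$ to isolate the linear action $\eta\mapsto a\eta+be^{i\theta}\rRe(e^{-i\theta}\eta)$, and then observe the cancellation between the $\delta$-terms in $\rv''$ and those from the linearised force. The only organisational difference is that the paper first derives the identity $-v''+\om v-\delta(x)[av+b\,\rRe v]=-\mu(z-\ov z)u_1-i\mu(z+\ov z)u_2$ and then plugs it into $i\dot\rv$, whereas you do the cancellation in one pass; the content is the same.
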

\begin{proof}
Multiply (\ref{NEP}) by $e^{i\theta}$, we obtain
$-\om s=-s''-\delta(x)F(s)$ that implies
\be\la{dots}
i\dot s=-\om s-\dot\gamma s+i\dot\om\partial_{\om}s=
-s''-\dot\gamma s+i\dot\om\partial_{\om}s-\delta(x)F(s).
\ee
By the equation $\bC u=i\mu u$,
we obtain for the components $u_1$ and $u_2$ of vector $u$,
$$
 -u_2''+\om u_2-\delta(x)au_2=i\mu u_1,\quad~~
-u_1''+\om u_1-\delta(x)[a+b]u_1=-i\mu u_2.
$$
Therefore
$-v''+\om v-\delta(x)[av+b\rRe v]=-\mu (z-\ov z)u_1-i\mu(z+\ov z)u_2$
and then
\beqn\nonumber
\!\!\!\!\!&&\!\!\!\!\!\!\!\!\!\!\!\!\!\!\!\!
i\dot\rv=-(\om +\dot\gamma) ve^{i\theta(t)}+i\dot\om\partial_{\om}ve^{i\theta(t)}
+\big(i(\dot z+\dot{\ov z})u_1-(\dot z-\dot{\ov z})u_2\big)e^{i\theta}\\
\nonumber
&=&\!\!\!-\rv''-\dot\gamma \rv+i\dot\om\partial_{\om}\rv
-\delta(x)[av+b\rRe v]e^{i\theta}+i[(\dot z-i\mu z)(u_1+iu_2)
+(\dot{\ov z}+i\mu\ov z)(u_1-iu_2)]e^{i\theta}
\eeqn
The last equality, (\ref{S}) and (\ref{dots}) imply 
\be\la{F-def}
i\dot f= -f''+R
\ee
with $R$ as in \eqref{F}.
\end{proof}
The function $f(x,t)$ which is a solution of (\ref{F-def}) can be expressed  as
\beqn\nonumber
f(t)&= &W(t)f(0)+\int_0^t W(t-\tau)R(\tau)
d\tau\\
&= &W(t)\Big(f(0)+
\int_0^{\infty} W(-\tau)R(\tau)d\tau\Big)
\nonumber
-\int^{\infty}_t W(t-\tau)R(\tau)d\tau
= W(t)\phi_{+}+r_{+}(t)
\eeqn
where $W(t)$ is the dynamical group of the free Schr\"odinger equation.
To establish the asymptotic behavior (\ref{sol-as'}), it suffices to
prove that
\be\la{phi-r}
\phi_{+}\in C_b(\R)\cap L^2(\R),\quad {\rm and}\quad
\Vert r_{+}(t)\Vert_{C_b(\R)\cap L^2(\R)}= O(t^{-\nu}),\;t\to\infty.
\ee
These assertions follow from the definition (\ref{F}) of the
function $R$, and the following two lemmas.
The first lemma studies the contribution
to $\phi_{+}(x)$ and $r_{+}(x,t)$ from the
terms  in (\ref{F}) involving $\delta(x)$: these are  
${\cal O}(t^{-1})$ as $t\to\infty$  by corollary \ref{bounds}. 
\begin{lemma}\la{L2-1}
Let $\Pi(t)$ be a continuous bounded function of $t\ge 0$,
with $|\Pi(t)|\le L_0$ and $|t\Pi(t)|\le L_1$. Then
\be\la{Cb}
\int_0^{\infty} W(-\tau)[\delta(\cdot)\Pi(\tau)]d\tau=
\int_0^{\infty}\frac {e^{-ix^2/(4\tau)}}
{\sqrt{-4\pi i\tau}}\Pi(\tau)d\tau\in C_b(\R)\cap L^2(\R)
\ee
and for  $\nu\in(0,\frac{1}{4})$ there exists $C=C(\nu,L_0,L_1)>0$ such that
\be\la{Cbr}
\Big\|\int_t^{\infty} W(t-\tau)[\delta(\cdot)\Pi(\tau)]d\tau\Big\|_{C_b(\R)\cap L^2(\R)}
\le C(1+t)^{-\nu}
\ee
\end{lemma}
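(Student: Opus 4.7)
The plan is to begin with the explicit 1D free Schr\"odinger kernel: for $s>0$,
$$
  W(-s)\delta(x) \;=\; \frac{e^{-ix^2/(4s)}}{\sqrt{-4\pi i s}},
$$
which identifies the right hand side of \re{Cb}. For the pointwise bound in $C_b$, I would estimate $|F(x)|\le (4\pi)^{-1/2}\int_0^\infty s^{-1/2}|\Pi(s)|\,ds$, split the integral at $s=1$, and bound the two pieces by $L_0\int_0^1 s^{-1/2}ds$ and $L_1\int_1^\infty s^{-3/2}ds$ using the two hypotheses on $\Pi$. The $L^2$ bound is where real work is required: $W(-s)\delta$ has constant modulus $(4\pi s)^{-1/2}$ in $x$ and is \emph{not} in $L^2$, so one must exploit cancellation between different values of $s$.

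To extract that cancellation I would compute $\|F\|_{L^2}^2$ by Fubini. The inner integral in $x$ is a Fresnel integral $\int_\R e^{ix^2(s_1^{-1}-s_2^{-1})/4}\,dx$, which equals (up to a unimodular phase) $\sqrt{4\pi s_1 s_2/|s_1-s_2|}$. The factor $\sqrt{s_1 s_2}$ cancels the two kernel denominators, leaving
$$
  \|F\|_{L^2}^2 \;\le\; C\int_0^\infty\!\!\int_0^\infty \frac{|\Pi(s_1)||\Pi(s_2)|}{\sqrt{|s_1-s_2|}}\,ds_1\,ds_2.
$$
This is a Riesz-potential pairing of order $1/2$ in one dimension; Hardy--Littlewood--Sobolev bounds it by $C\|\Pi\|_{L^{4/3}(\R_+)}^2$. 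The two hypotheses ensure $\Pi\in L^{4/3}(\R_+)$: $|\Pi|\le L_0$ handles $s\le 1$, and $|\tau\Pi(\tau)|\le L_1$ gives $\tau^{-4/3}$ decay, which is integrable at infinity.

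For the tail estimate \re{Cbr}, the same recipe applies after the substitution $s=\tau-t$, which rewrites $r_+(x,t)=\int_0^\infty W(-s)\delta\cdot\Pi(t+s)\,ds$. The $C_b$ estimate follows from splitting at $s=t$ and using $|\Pi(t+s)|\le L_1/(t+s)$, yielding $\|r_+(\cdot,t)\|_{C_b}\le C(1+t)^{-1/2}$. The Fresnel--HLS argument now reduces to bounding $\|\Pi(t+\cdot)\|_{L^{4/3}(\R_+)}^{4/3}\le\int_t^\infty(L_1/\tau)^{4/3}d\tau\le Ct^{-1/3}$ for $t\ge 1$, giving $\|r_+(\cdot,t)\|_{L^2}\le Ct^{-1/4}$. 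For $t\le 1$ a uniform bound follows from the first part of the lemma together with the $L^2$-unitarity of $W(t)$. The $L^2$ decay is the binding constraint, giving $\nu$ arbitrarily close to, but strictly less than, $1/4$.

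The main obstacle is the $L^2$ bound. Since $W(-s)\delta$ is not square-integrable in $x$, no pointwise bound can yield any $L^2$ decay, and neither Minkowski under the integral sign nor the $L^2$-unitarity of $W$ alone suffice; one must combine the oscillation in $x$ (via the explicit Fresnel integral) with the fractional integration structure in $s$ (via Hardy--Littlewood--Sobolev). Everything else in the proof is an elementary split-and-bound on $\Pi$.
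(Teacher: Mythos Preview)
Your argument is correct but takes a genuinely different route from the paper. The paper substitutes $\tau=1/u$, recognising the integral as a Fourier transform in the variable $\xi=x^2/4$: writing $\eta(u)=\Pi(1/u)u^{-3/2}$, one has $\psi(x)=c\,\widehat{\theta\eta}(x^2/4)$. The hypotheses force $\eta\in L^p$ for $1\le p<2$, so Hausdorff--Young gives $\psi\in L^q$ \emph{as a function of} $y=x^2$, i.e.\ $\int|\psi(x)|^q|x|\,dx<\infty$ for $q>2$; this weighted $L^q$ bound is then combined with the $C_b$ bound via H\"older to deduce $\psi\in L^2$. The tail is handled identically after restricting $u$ to $(0,1/t)$, and the final H\"older step is what forces $q<4$, hence $p>4/3$, hence $\nu<1/4$.

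Your approach---expand $\|F\|_{L^2}^2$, evaluate the resulting Fresnel integral in $x$ exactly, then recognise the outcome as a one-dimensional Riesz potential controlled by HLS with $\|\Pi\|_{L^{4/3}}$---is more direct and even reaches the endpoint decay $t^{-1/4}$ in $L^2$, slightly sharper than the paper's statement. One point you should make explicit: the interchange of the $x$--integral with the $(s_1,s_2)$--integrals is formal, since $\int_\R e^{i\alpha x^2}\,dx$ is only conditionally convergent; inserting a Gaussian cutoff $e^{-\epsilon x^2}$ and letting $\epsilon\downarrow0$ makes this rigorous. Also, for $t\le 1$ you need not invoke unitarity separately---your HLS bound $\|r_+(\cdot,t)\|_{L^2}\le C\|\Pi(t+\cdot)\|_{L^{4/3}}\le C\|\Pi\|_{L^{4/3}}$ already holds uniformly in $t\ge 0$.
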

\begin{proof}
The $C_b$-properties follow from formulas
(\ref{Cb}) and (\ref{Cbr}) (in fact with $\nu=1/2$).
To prove the $L^2$-properties, let us  change the variable to $\tau=1/u$ to get:
\be\la{ftz}
\psi(x):=\frac 1{\sqrt{-4\pi i}}\int_0^\infty e^{-iux^2/4}~\eta(u)~du
=\frac 1{\sqrt{-2i}}{\mathcal F}_{u\to x^2/4}(\theta(u)\eta(u)),
\qquad \eta(u)=\Pi(1/u)/u^{3/2},
\ee
where
${\mathcal F}_{u\to\xi}(f(u))=\hat f(\xi)$
indicates the Fourier transform
with argument $\xi$, and $\theta(u)$ is the Heaviside function.
By the assumptions on $\Pi$ we have $|\eta(u)|\leq  L_0u^{-3/2}$ as
$u\to \infty$, and
$|\eta(u)|\leq L_1u^{-1/2}$ as $u\to 0$.
Therefore $\eta(u)\in L^p(\R)$ for $1\le p<2$.
It follows from the Hausdorff-Young inequality for the Fourier
transform that
$\psi\in L^q(\R)$ for $q>2$ {\em as a function of $y=x^2$}, i.e.
$$
\int_0^\infty|\psi(x)|^qx~dx<\infty ,\;\forall q>2,
$$
and hence $\psi\in L^2(\R)$, since it is already known to be
bounded and continuous.
It remains to prove the decay (\re{Cbr})
in the norm $L^2(\R)$. Denote
$$
\rho(x,t):=\int_t^{\infty}\! W(-\tau)[\delta(\cdot)\Pi(\tau)]d\tau
=\frac 1{\sqrt{-4\pi i}}\int_0^{1/t}e^{-iux^2/4}~\eta(u)~du
=\frac 1{\sqrt{-2i}}
{\mathcal F}_{u\to x^2/4} (\zeta_t(u)\eta(u)).
$$
Here $\zeta_t(u)$ is the characteristic function of the interval $(0,1/t)$.
As above $\rho $ is bounded, but also since
$$
\Vert \zeta_t(u)\eta(u)\Vert_{L^p}=
\Big(\int_0^{1/t}|\eta(u)|^pdu\Big)^{1/p}
\le L_1\Big(\int_0^{1/t}u^{-p/2}du\Big)^{1/p}
\le Ct^{-\frac{1-p/2}p},~~~ 1\le p<2,~~~ t\ge 1
$$
the Hausdorff-Young inequality implies that
for any $q>2$:
$$\Vert\rho(x,t)(1+|x|)^{1/q}\Vert_{L^q}\leq Ct^{-\frac{1-p/2}p},\quad t\ge 1$$
for some constant $C=C(L_1,p)$, for $q^{-1}+p^{-1}=1$.
The Young inequality then implies that
$$
\Vert\rho(x,t)\Vert_{L^2}\le \Vert\rho(x,t)(1+|x|)^{1/q}\Vert_{L^{q}}
 \Vert (1+|x|)^{-1/q}\Vert_{L^{r}}\le Ct^{-\frac{1-p/2}p},\;
 q^{-1}+r^{-1}=1/2
$$
if $r>q$.
To have  $r>q$, we must take $q<4$, or
equivalently $p>4/3$. Hence, we have $\nu=(1-p/2)/p<1/4$.
\end{proof}
The second lemma studies
the contribution to $\phi_{+}(x)$ and $r_{+}(x,t)$ from terms
without  $\delta(x)$ in (\ref{F}). Consider the expansions  (\ref{omega2}),
(\ref{gamma2}), (\ref{z2}), for $\dot\om(t)$, $\dot\gamma(t)$,
and $\dot z(t)-i\mu z(t)$: the main (quadratic) parts of these contain the terms
$z^2_{+}(t)$, $\ov z^2_{+}(t)$, $z_{+}(t)\ov z_{+}(t)$, which are $ O(t^{-1})$
as $t\to\infty$. The remainders are $O(t^{-3/2})$, and it is straightforward
 to bound the contribution of these to $\phi_+$
in $C_b\cap L^2$, and to check that these contribute $O(t^{-1/2})$ to $r_+$ in
$C_b\cap L^2$. Further, $\dot\theta(t)-\om_+(t)=O(\tau^{-1})$.
Thus, without loss of generality, we may replace $\dot\om(t)$,
$\dot\gamma(t)$, and $\dot z(t)-i\mu z(t)$  by the main quadratic parts
and treat these terms with the phase $\theta(t)$ in \eqref{F} replaced by
$\varphi_+(t)\equiv\om_{+}t$.
\begin{lemma}\la{L2-2}
Let $\Pi(t)$ be one of the functions $z^2_{+}(t)e^{i\om_{+}t}$,
$|z_{+}(t)|^2e^{i\om_{+}t}$ or $\ov z_{+}^2(t)e^{i\om_{+}t}$, where
$z_{+}(t)=e^{i\mu_{+}t}/(1+t)^{1/2}$,
and let  $\psi(x)\in L^2(\R)\cap L^1(\R)$. Then
\be\la{WPi}
  \int_0^{\infty}\Pi(\tau) W(-\tau)\psi d\tau\in C_b(\R)\cap L^2(\R)
\ee
and
\be\la{WPii}
  \bigl\|\int^{\infty}_t\Pi(\tau) W(t-\tau)\psi d\tau\bigr\|_{C_b(\R)\cap L^2(\R)}
  \leq C t^{-1/3},\quad t>1.
\ee
\end{lemma}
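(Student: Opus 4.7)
The plan is to go to the Fourier side, exploiting the fact that each admissible $\Pi(t)$ has the form $e^{i\nu\tau}/(1+\tau)$ for one of three constants $\nu_1=2\mu_++\om_+$, $\nu_2=\om_+$, $\nu_3=\om_+-2\mu_+$. Under the spectral condition (\ref{Co}) we have $2\mu_+>\om_+$, so $\nu_1,\nu_2>0$ while $\nu_3<0$. Since $(W(-\tau)\psi)^{\wedge}(\xi)=e^{i\xi^2\tau}\hat\psi(\xi)$, after the rescaling $v=u/(1+t)$ in the integral defining $r_+$ one obtains
\[
\widehat{\phi_+}(\xi)=\hat\psi(\xi)\,G(\nu+\xi^2),\qquad \widehat{r_+}(\xi,t)=e^{i\nu t}\,\hat\psi(\xi)\,G\bigl((\nu+\xi^2)(1+t)\bigr),
\]
where $G(\al):=\int_0^\infty e^{i\al\tau}/(1+\tau)\,d\tau$. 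A single integration by parts, together with a trivial estimate on the interval $[0,1/|\al|]$, gives $|G(\al)|\le C/|\al|$ for $|\al|\ge 1$ and $|G(\al)|\le C(1+|\log|\al||)$ for $|\al|\le 1$; in particular $G$ has only a logarithmic singularity at $\al=0$, is otherwise bounded, and lies in $L^p(\R)$ for every $p\in(1,\infty)$.

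For the $C_b$ bounds, the dispersive estimate $\|W(-\tau)\psi\|_{L^\infty}\le C\|\psi\|_{L^1}\tau^{-1/2}$ and the integrability of $(1+\tau)^{-1}\tau^{-1/2}$ on $(0,\infty)$ make the integral defining $\phi_+$ absolutely convergent, and continuity in $x$ follows by dominated convergence. The analogous estimate for $r_+$, after the substitution $\tau=t+s$ and splitting the $s$-integral at $s=t$, yields $\|r_+(t)\|_{L^\infty}\le C(1+t)^{-1/2}$ for $t\ge 1$, already beating the claimed $t^{-1/3}$.

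For the $L^2$ bounds I use Plancherel and distinguish two sign cases. If $\nu>0$ (cases $\nu_1,\nu_2$) then $\nu+\xi^2\ge\nu$, so $G(\nu+\xi^2)$ is bounded and the sharper decay $|G((\nu+\xi^2)(1+t))|\le C/((\nu+\xi^2)(1+t))$ immediately yields $\phi_+\in L^2$ and $\|r_+(t)\|_{L^2}\le C/(1+t)$. The delicate case, and the main obstacle, is $\nu=\nu_3<0$, where $G(\nu+\xi^2)$ develops a logarithmic singularity at the two real points $\xi=\pm\sqrt{-\nu}$; this is precisely the trace, at the level of the linear dispersive evolution $W(t)$, of the resonance behind the Fermi Golden Rule \eqref{FGR}. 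The hypotheses $\psi\in L^1$ (so $\hat\psi\in L^\infty$) and $G\in L^2_{\mathrm{loc}}$ together already give $\hat\phi_+\in L^2$. For the decay of $r_+$, I would split the $\xi$-integral into a fixed neighbourhood of $\pm\sqrt{-\nu}$ and its complement; on the neighbourhood, the change of variable $s=(\nu+\xi^2)(1+t)$, whose Jacobian is $\asymp 1+t$, absorbs a factor of $1+t$:
\[
\int_{\mathrm{near}}|\hat\psi(\xi)|^2\bigl|G((\nu+\xi^2)(1+t))\bigr|^2\,d\xi\ \le\ \frac{C\|\hat\psi\|_{L^\infty}^2\,\|G\|_{L^2(\R)}^2}{1+t},
\]
while on the complement $|\nu+\xi^2|$ is bounded below and the decay $|G(\al)|\le C/|\al|$ contributes a term of order $(1+t)^{-2}$. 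Summing gives $\|r_+(t)\|_{L^2}\le C(1+t)^{-1/2}$, which combined with the $C_b$ bound is strictly stronger than the stated $t^{-1/3}$ estimate for $t>1$.
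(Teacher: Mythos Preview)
Your argument is correct, and in fact sharper than what the paper does. Both proofs pass to the Fourier side, where the task reduces to controlling
\[
\int_t^\infty \frac{e^{i(\xi^2+\nu)\tau}}{1+\tau}\,d\tau
\]
in $L^2_\xi$. The paper treats the resonant case $\nu=\om_+-2\mu_+<0$ by introducing a $\tau$-dependent cutoff of width $\tau^{-\beta}$ about $\xi=\pm\sqrt{-\nu}$, estimating the inner and outer parts separately via Minkowski and a single integration by parts, and then optimizing over $\beta$; this yields only $t^{-1/3}$. You instead recognize the clean scaling identity
\[
\int_t^\infty \frac{e^{i\alpha\tau}}{1+\tau}\,d\tau
= e^{i\alpha t}\,G\big(\alpha(1+t)\big),
\qquad G(\alpha)=\int_0^\infty\frac{e^{i\alpha\tau}}{1+\tau}\,d\tau,
\]
obtained via the substitution $1+\tau=(1+t)(1+\sigma)$, and then transfer the $L^2$ burden to the single function $G$, whose log singularity at $0$ and $1/|\alpha|$ decay at infinity put it in $L^2(\R)$. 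The change of variable $s=(\nu+\xi^2)(1+t)$ near the resonance then extracts a clean factor $(1+t)^{-1}$ in the squared norm, giving $\|r_+(t)\|_{L^2}\le C(1+t)^{-1/2}$. Your approach is more conceptual, avoids the optimization step, and recovers the natural $t^{-1/2}$ rate; the paper's method is slightly more elementary (no need to analyze $G$ globally) but pays for this with a suboptimal exponent. Either suffices for the application, since only some positive power of decay is required.
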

\begin{proof}
Since $\Vert W(t)\psi\Vert_{C_b}= O(t^{-1/2})$ then  $C_b$-- properties are evident.
It remains to prove the $L^2$-properties.
First, we consider the case when $\Pi(t)=|z_{+}(t)|^2e^{i\om_{+}t}$.
We have
\be\la{It}
\Big\Vert\int_t^{\infty}
\frac{e^{i(\xi^2+\om_+)\tau}\hat\psi(\xi)d\tau}{1+\tau}\Big\Vert_{L^2}\le \frac C{1+t}
\Vert \hat\psi(\xi)/(\xi^2+\om_+)\Vert_{L^2}
\ee
since the partial integration implies that
$$
|\int_t^{\infty}\fr{e^{i(\xi^2+\om_+)\tau}}{1+\tau}~d\tau|
\le\Big|\fr{e^{i(\xi^2+\om_+)\tau}}{(\xi^2+\om_+)(1+t)}\Big|
+\Big|\int_t^{\infty}\fr{e^{i(\xi^2+\om_+)\tau}}
{(\xi^2+\om_+)(1+\tau)^2}~d\tau\Big|
\le\frac{C}{(\xi^2+\om_+)(1+t)}
$$
Hence (\ref{WPi}) and (\ref{WPii}) for $\Pi(t)=|z_{+}(t)|^2e^{i\om_{+}t}$ follow.
For $\Pi(t)=z_{+}^2(t)e^{i\om_{+}t}$ the proof is similar.
Finally, we consider $\Pi(t)=\ov z_{+}^2(t)e^{i\om_{+}t}$.
It suffices to prove that
\be\la{Pt}
I(t)=\Big\Vert\int_t^{\infty}\fr{e^{i(\xi^2+\om_+-2\mu_+)\tau}
\hat\psi(\xi)~d\tau}{1+\tau}\Big\Vert_{L^2}={\cal O}(t^{-1/3})
\ee
For the fixed $0<\beta<1$ let us define
$$
\chi_{\tau}(\xi)=\left\{\ba{ll}1,~{\rm if}~|\xi-\sqrt{2\mu_+-\om_+}|\le 1/\tau^{\beta}~
{\rm or}~|\xi+\sqrt{2\mu_+-\om_+}|\le 1/\tau^{\beta}
\\ 0,~~|\xi\pm\sqrt{2\mu_+-\om_+}|> 1/\tau^{\beta} \ea\right.
$$
 Then
$$
I(t)\!\le\!\Big\Vert\!\int_t^{\infty}\!\!
\fr{e^{i(\xi^2+\om_+-2\mu_+)\tau}\chi_{\tau}(\xi)\hat\psi(\xi)d\tau}
{1+\tau}\Big\Vert_{L^2}\!
+\Big\Vert\!\int_t^{\infty}\!\!
\fr{e^{i(\xi^2+\om_+-2\mu_+)\tau}(1\!-\!\chi_{\tau}(\xi))\hat\psi(\xi)d\tau}
{1+\tau}\Big\Vert_{L^2}\!=I_1(t)+I_2(t)
$$
Since $\hat\psi(\xi)$ is bounded function,
and $\Vert\chi_\tau\Vert^2_{L^2}\le 4/\tau^\beta$, we have
$I_1(t)\le C\Vert\hat\psi\Vert_{L^\infty}/(1+t)^{\beta/2}$.\\
On the other hand,
the partial integration implies that
$$
I_2(t)=\Big\Vert\int_t^{\infty}\!
\fr{(1-\chi_{\tau}(\xi))\hat\psi(\xi)~de^{i(\xi^2+\om_+-2\mu_+)\tau}}
{(\xi^2+\om_+-2\mu_+)(1+\tau)}\Big\Vert_{L^2}\!\le
\fr{Ct^{\beta}}{1+t}\Vert\hat\psi\Vert_{L^2}
+C\int_t^{\infty}\!\!\frac{\tau^{\beta}d\tau}
{(1+\tau)^2}\Vert\hat\psi\Vert_{L^2}\!
\le \fr{C\Vert\hat\psi\Vert_{L^2}}{(1+t)^{1-\beta}}
$$
Equating $\beta/2=1-\beta$, we get $\beta=2/3$.
Hence (\ref{Pt})  follows.
\end{proof}
\br\la{brt}
The $t\to -\infty$ case is handled in an identical way.
\er
\setcounter{section}{0}
\setcounter{equation}{0}
\protect\renewcommand{\thesection}{\Alph{section}}
\protect\renewcommand{\theequation}{\thesection. \arabic{equation}}
\protect\renewcommand{\thesubsection}{\thesection. \arabic{subsection}}
\protect\renewcommand{\thetheorem}{\Alph{section}.\arabic{theorem}}

\section{Eigenfunctions of  discrete spectrum}
\setcounter{equation}{0}
\label{eig-fd}
Here we  find the function $u=u(\om)$ satisfying
${\bf C}u=\lam u$, where $\lam=i\mu$.
Using the definition  of the operator ${\bf C}$, we obtain
\be\la{u-eqn}
  \left(
   \ba{rcr}-\lam              &&-\Delta+\om\\
   \Delta-\om    &&-\lam \ea \right)u =
   \de(x)
   \left(
   \!\ba{cc}
   0      &a\\
   -a-b &0
   \ea\!\right)u,\quad \Delta=\ds\fr{d^2}{dx^2}.
\ee
If $x\ne 0$, the equation (\ref{u-eqn}) takes the form
\be\la{hom}
   \left(
   \ba{rcr}-\lam              &&-\Delta+\om\\
   \Delta-\om    &&-\lam \ea \right)u =
   0,~~~~~x\ne 0.
\ee
General solution is a linear combination of exponential solutions of type $e^{ikx}v$.
Substituting to (\re{hom}), we get
\be\la{homh}
   \left(
   \ba{rcr}-\lam   &&  k^2+\om\\
   -k^2-\om        &&    -\lam
   \ea
   \right)v=0.
\ee
For nonzero vectors $v$, the determinant of the matrix vanishes:
$\lam^2+(k^2+\om)^2=0$. Then $k_\pm^2+\om=\mp i\lam$.
Finally, we obtain four roots $\pm k_\pm(\lam)$,
$k_\pm(\lam)=\sqrt{-\om\mp i\lam}$,
where the square root is defined as an analytic continuation
from a neighborhood of the zero point $\lam=0$
taking the positive value of $\rIm\sqrt{-\om}$ at $\lam=0$.
We choose the cuts ${\cal C}_+$ in the complex plane $\lam$ from the branching points to
infinity. Then
$\rIm k_\pm(\lam)>0$ for $\lam\in\C\setminus {\cal C}_\pm$
and we have two corresponding vectors $v_\pm=(1,\pm i)$
and  four linearly independent exponential solutions
$$
   v_+e^{\pm ik_+ x}=(1,\, i)e^{\pm ik_+ x},~~~~~~~~~~~~
   v_-e^{\pm ik_- x}=(1, -i)e^{\pm ik_- x}.
$$
Now we find the solution to (\ref{u-eqn}) in the form
\be\la{u-sol}
  u=Ae^{ik_+ |x|}v_{+}+Be^{ik_- |x|}v_{-}.
\ee
At the point $x=0$ we have a jump:
\be\la{jump}
 u'(+0)-u'(-0)=-\left(
 \ba{cc}
 a+b&0\\
 0&a
\ea\right)u(-0)
\ee
Substituting (\ref{u-sol}) into (\ref{jump}), we obtain
\be\la{j0}
2ik_{+}Av_{+}+2ik_{-}Bv_{-}=-M(Av_{+}+Bv_{-}),\quad
M=\left(\ba{cc}a+b&0\\0&a\ea\right).
\ee
Note that
$$
Mv_+=\al v_++\beta v_-,\quad
Mv_-=\al v_-+\beta v_+,~~~~~~{\rm where}~~~~
\al=a+\fr{b}2,\quad\beta=\fr{b}2.
$$
Then (\re{j0}) becomes
$$
\left\{
\ba{l}
(2ik_{+}+\al)A+\beta B=0
\\
\\
\beta A+(2ik_{-}+\al)B=0
\ea\right.
$$
The determinant $D=(2ik_++\al)(2ik_-+\al)-\beta^2$ vanishes for $\lam=i\mu$
since $i\mu$ lies in the spectrum.
Therefore, we set $A=1$, and obtain  the corresponding eigenfunction:
\be\la{u-def}
  u=e^{ik_{+}|x|}v_{+}-\frac{\beta}{2ik_{-}+\al}e^{ik_{-}|x|}v_{-}.
\ee
We have $2ik_{-}+\al\not=0$. Indeed, if  $2ik_{-}+\al=0$, then $\beta=0$, $\al=a$,
and $2ik_{-}+\al=-2\sqrt{\om+\mu}+a=-\sqrt{a^2+4\mu}+a<0$. \\
Since both $k_{+}=\sqrt{-\om+\mu}$ and $k_{-}=\sqrt{-\om-\mu}$ are purely imaginary,
the first component
$u_1$ is real, while the second one $u_2$ is imaginary.
It is easy to prove that $u^{*}=(u_1,-u_2)$ is the eigenfunction associated to $\lam=-i\mu$.

\section{Eigenfunctions of continuous spectrum}
\setcounter{equation}{0}
\label{eig-fc}
Let $\lam=i\nu$ with some $\nu>\om$.
First, we find an even solution $u=\tau_+$ to
(\ref{u-eqn}) in the form
$$
  \tau_+=(Ae^{ik_+ |x|}+Be^{-ik_+ |x|})v_{+}+Ce^{ik_- |x|}v_{-}
$$
Similarly (\ref{jump}) and (\ref{j0}), we obtain
$$
\left\{
\ba{l}
2ik_{+}(A-B)=-\al(A+B)-\beta C
\\
\\
2ik_{-}C=-\beta (A+B)-\al C
\ea\right.
$$
If $2ik_{-}+\al=0$, then $D=-\beta^2\not=0$. We put $A=D$, then
$B=-D$, $C=4\beta ik_{+}$.\\
If $2ik_{-}+\al\not=0$, we put $A=\ov D$ and obtain 
$B=-D$, and  $C=4\beta ik_{+}$.
Finally,
\be\la{tau-sol1}
 \tau_+=(\ov De^{ik_+ |x|}-De^{-ik_+ |x|})v_{+}+4\beta ik_{+}e^{ik_- |x|}v_{-}.
\ee
It is easily to check that  an odd solution $u=s_+$ to (\ref{u-eqn}) is
\be\la{s-sol}
  s_+=\frac 1{2i}(e^{ik_{+}x}-e^{-ik_{+}x})v_{+}=\sin (k_{+}x)v_{+}.
\ee
In the case $\nu<-\om$ we obtain similarly that
$$
\tau_-=(\ov De^{ik_{-}|x|}-De^{-ik_{-}|x|})v_{-}+4\beta ik_{-}e^{ik_{+}|x|}v_{+},
\quad s_-= \sin (k_{-}x)v_{-}.
$$
\section{Proof of Proposition \ref{l1}}
\setcounter{equation}{0}
\label{prop-pr}
Denote by  $B$  a Banach space with the norm $\Vert\cdot\Vert$.
\begin{lemma}\label{ek}(cf.\cite[lemma B.1]{mk})
  Let the operator $K(t)$, $t>0$, satisfies
  \be\la{K0}
    K(t)=\int\zeta(\nu)e^{i\nu t}Q(\nu)d\nu,\quad Q(\nu):=\frac{L(\nu)-L(\nu_0)}
    {\nu-\nu_0},
  \ee
where $\zeta\in C_0^\infty(\R)$, $\zeta(\nu)=1$ in the some vicinity of $\nu_0$,
and for $k=0,1,2$
\be\la{f-dif}
 M_k:= \sup_{\nu\in\supp\!\zeta}
\Vert\partial ^k_\nu L(\nu)\Vert<\infty.
  \ee
Then
  \be\la{nK}
    \Vert K(t)\Vert={\cal O}(t^{-3/2}),\quad t\to\infty,
  \ee
\end{lemma}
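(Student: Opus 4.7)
The plan is to reduce the operator-valued oscillatory integral to a Fourier estimate for a smooth, compactly supported integrand. The first key step is to desingularize $Q$ by the fundamental theorem of calculus:
$$Q(\nu) = \int_0^1 L'\bigl(\nu_0 + s(\nu-\nu_0)\bigr)\,ds,$$
which shows that $Q$ is $C^1$ with $\|Q\|\le M_1$ and $\|Q'\|\le M_2/2$. Substituting into $K(t)$ and exchanging orders of integration,
$$K(t)=\int_0^1 J(s,t)\,ds,\qquad J(s,t):=\int_{\R}\zeta(\nu)\,L'\bigl(\nu_0+s(\nu-\nu_0)\bigr)\,e^{i\nu t}\,d\nu.$$

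For each fixed $s\in(0,1]$ I would establish two complementary bounds on $J(s,t)$. The trivial bound gives $\|J(s,t)\|\le CM_1$ uniformly. For the refined bound, write
$L'(\nu_0+s(\nu-\nu_0))=L'(\nu_0)+s(\nu-\nu_0)R(\nu,s)$, with $\|R\|\le M_2$; the first piece contributes $L'(\nu_0)\hat\zeta(-t)$, which is rapidly decreasing because $\zeta\in C_0^\infty$, while the second piece, after one integration by parts in $\nu$ (the boundary term vanishes by compact support of $\zeta$), yields
$$\bigl\|J(s,t)-L'(\nu_0)\hat\zeta(-t)\bigr\|\le \frac{CsM_2}{t}.$$
Crucially only $L''$ appears, matching the $C^2$ hypothesis exactly.

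Finally I would integrate these bounds in $s$ and combine them with the oscillatory structure of the inner integral. Via the change of variables $\eta=\nu_0+s(\nu-\nu_0)$, $J(s,t)$ becomes, up to a unimodular phase, the Fourier transform at frequency $t/s$ of a function supported on an interval of length $O(s)$ around $\nu_0$. Splitting the $s$-integration at the threshold that balances the trivial bound against the IBP bound, and applying a van~der~Corput-type estimate to the narrow, highly oscillatory rescaled integral, produces the missing factor $t^{-1/2}$ and yields $\|K(t)\|=O(t^{-3/2})$.

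The main obstacle is precisely this extra half-power of decay. Naive integration by parts gives only $O(t^{-1})$, and a second IBP would require $L'''$, which is not available. The gain must therefore come from the interplay between the shrinking support (of size $s$) of the rescaled cutoff and the rapid oscillation $e^{it\eta/s}$ on that support, in the manner of \cite[Lemma B.1]{mk}. The careful bookkeeping of the small-$s$ region, where the trivial bound must be used to avoid a spurious $1/s$ singularity arising from derivatives of the rescaled cutoff, is the delicate point of the argument.
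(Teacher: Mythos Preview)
The paper does not actually prove this lemma; it only states it and refers to \cite[Lemma B.1]{mk}. So there is no proof in the paper to compare against, and your attempt must be judged on its own.

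Your desingularization $Q(\nu)=\int_0^1 L'(\nu_0+s(\nu-\nu_0))\,ds$ and the single integration by parts are correct and yield $\|K(t)\|=O(t^{-1})$; you also correctly identify that a second integration by parts would call for $L'''$. The gap is in the final step. The phase $e^{i\nu t}$ (or $e^{it\eta/s}$ after your substitution) is \emph{linear}, and for a linear phase a van der Corput-type lemma gives nothing beyond one integration by parts: the first-derivative test \emph{is} integration by parts, and the second-derivative test requires a nondegenerate stationary point, which a linear phase does not have. Your rescaling $\eta=\nu_0+s(\nu-\nu_0)$ shrinks the support to $O(s)$ while raising the frequency to $t/s$, but these two effects cancel exactly in any IBP bound; one integration by parts on the rescaled integral gives $O((t/s)^{-1})\cdot O(s^{-1})\cdot O(s)=O(t^{-1})$, independently of $s$, so splitting the $s$-integral and ``balancing'' produces no extra $t^{-1/2}$.

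More to the point, under the stated hypotheses alone, $\zeta Q$ is only $C^1$ with bounded derivative (since $Q'(\nu)=\int_0^1 sL''(\nu_0+s(\nu-\nu_0))\,ds$ is merely bounded), and for a generic compactly supported $C^1$ integrand the decay $O(t^{-1})$ of the Fourier integral is sharp. To obtain $O(t^{-3/2})$ one needs an extra half-derivative of regularity on $Q'$, e.g.\ $L''$ H\"older of order $\ge 1/2$, or bounds on $\partial_\nu^k L$ up to $k=3$. In the application here $L(\nu)=R(i\nu+0)$ is smooth on $\mathrm{supp}\,\zeta$, so this additional regularity is certainly available; the cited lemma in \cite{mk} presumably uses it. You should either strengthen the hypothesis accordingly (then two integrations by parts already give $O(t^{-2})$), or consult \cite{mk} for the exact assumptions used there.
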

\noindent
{\it Proof of Proposition \ref{l1}}. We will use the Laplace representation:
  $$
    e^{{\bf C}t}({\bf C}-2i\mu-0)^{-1}=-\frac 1{2\pi i}\int_{-i\infty}^{i\infty}
    e^{\lam t}R(\lam+0)~d\lam ~ R(2i\mu+0).
  $$
  Let us apply the Hilbert identity for the resolvent:
  $$
    R(\lam_1)R(\lam_2)=\frac 1{\lam_1-\lam_2}[R(\lam_1)-R(\lam_2)],\;\rRe\lam_k>0,
    \; k=1,2
  $$
  to $\lam_1=\lam+0$ and $\lam_2=2i\mu+0$. Then we obtain
  $$
    e^{{\bf C}t}({\bf C}-2i\mu-0)^{-1}=-\frac 1{2\pi i}\int_{-i\infty}^{i\infty}
    e^{\lam t}\frac{R(\lam+0)- R(2i\mu+0)}{\lam-2i\mu}~d\lam
  $$
  $$
    =-\frac 1{2\pi i}\!\int_{-i\infty}^{i\infty}\!
    e^{\lam t}\zeta(\lam)\frac{R(\lam+0)- R(2i\mu+0)}{\lam-2i\mu}~d\lam
    -\frac 1{2\pi i}\!\!\int\limits_{{\cal C}_+\cup{\cal C}_-}\!\!
    e^{\lam t}(1-\zeta(\lam))\frac{R(\lam+0)- R(2i\mu+0)}{\lam-2i\mu}~d\lam
  $$
  $$
    -\frac 1{2\pi i}\int\limits_{(-i\infty,i\infty)\setminus({\cal C}_+\cup{\cal C}_-)}
    e^{\lam t}(1-\zeta(\lam))\frac{R(\lam+0)- R(2i\mu+0)}{\lam-2i\mu}~d\lam
    ={\bf K}_1(t)+{\bf K}_2(t)+{\bf K}_3(t)
  $$
  where $\zeta(\lam)\in C_0^{\infty}(i\R)$,
  $\zeta(\lam)=1$ for $|\lam-2i\mu|<\delta/2$ and $\zeta(\lam)=0$ for $|\lam-2i\mu|>\delta$,
  with $0<\delta<2\mu-\om$.
  By Lemma \ref{ek} with $L(\nu)=R(i\nu+0)$ and $B={\cal B}_{\beta}$
  with $\beta\ge 2$, we obtain that
  $$
    \Vert {\bf K}_1(t)\Vert_{{\cal L}({\cal M}_{\beta},L^{\infty}_{-\beta})}
    ={\cal O}({t^{-3/2}}),\quad t\to\infty.
  $$
  since in this case the bounds (\ref{f-dif}) follow from formulas (\ref{nR}).
  Further, we can apply the arguments from the proof
  of proposition \ref{TD} in \cite{BKKS}  and obtain
  $$
    \Vert {\bf K}_2(t)\Vert_{{\cal L}({\cal M}_{\beta},L^{\infty}_{-\beta})}
    ={\cal O}({t^{-3/2}}),\quad t\to\infty.
  $$
  Here the choice of the sigh in ${\bf C}-2i\mu-0$ plays the crucial role.\\
  Finally, the integrand in ${\bf K}_3(t)$ is an analytic function of
  $\lam\not =0, \pm i\mu$ with the values in 
  ${\cal L}({\cal M}_{\beta},L^{\infty}_{-\beta})$
  for $\beta\ge 0$. At the points $\lam=0$ and $\lam=\pm i\mu$ the
  integrand has the poles of finite order. Hoverever, all the Laurent
  coefficients vanish when applied to ${\bf P}^{c}h$. Hence for ${\bf K}_3(t)$
  we obtain, twice integrating by parts,
  $$
    \Vert {\bf K}_3(t){\bf P}^ch\Vert_{L^{\infty}_{-\beta}}
    \le c(1+t)^{-3/2}\Vert h\Vert_{{\cal M}_{\beta}},
  $$
completing the proof. \qed
\section{Proof of Lemma \ref{rem2}}
\setcounter{equation}{0}
\label{lemma-pr}
  We will use the following representation (see \cite{BKKS}):
  \be\la{Riess}
   {\bf P}^c=\frac 1{2\pi i}\int_{{\cal C}_{+}}\!({\bf R}(\lam+0)
   -{\bf R}(\lam-0))d\lam
   +\frac 1{2\pi i}\int_{{\cal C}_{-}}\!({\bf R}(\lam+0)\!-\!{\bf R}(\lam-0))d\lam=
   {\bf\Pi}^++{\bf\Pi}^-.
  \ee
   Let us decompose the resolvent, as given in (\ref{nR}) and the subsequent formula, as
  \be\la{dec}
   {\bf R}(\lam,x,y)={\Gamma}(\lam,x,y)+P(\lam,x,y)=
   \sum\limits_{k=1}^6A_k(\lam,x,y)\tau_k
  \ee
 where
\beqn\nonumber
A_1\!\!&=&\!\!\frac {e^{ik_+|x-y|}-e^{ik_+(|x|+|y|)}}{4k_+},\quad
  A_2=\frac {i\al-2k_-}{2D}e^{ik_+(|x|+|y|)},\quad
  A_3=\frac {i\beta}{2D} e^{ik_+|x|}e^{ik_-|y|}\\\\
\la{AA}\nonumber
A_4\!\!&=&\!\!-\frac {i\beta}{2D} e^{ik_-|x|}e^{ik_+|y|},\quad
   A_5=\frac {-i\al+2k_+}{2D}e^{ik_-(|x|+|y|)},\quad
   A_6=-\frac {e^{ik_-|x-y|}-e^{ik_-(|x|+|y|)}}{4k_-}
\eeqn
and
\be\la{tau1}
\tau_1=\tau_2= \left( \ba{cc}
1   &   -i\\
i   &    1 \ea\right),\; \tau_3= \left( \ba{cc}
1   &   i\\
i   &    -1 \ea\right),\; \tau_4= \left( \ba{cc}
1   &   -i\\
-i   &   -1 \ea\right),\; \tau_5=\tau_6= \left( \ba{cc}
1   &   i\\
-i   &    1 \ea\right) \ee
We have
\beqn\nonumber
P^c_Tj^{-1}-i(\Pi^+_T-\Pi^-_T)\!\!&=&\!\!\frac 1{2\pi i}\int_{{\cal C}_+}
[2(A_3(\lam+0)-A_3(\lam-0))\tau_3+
2(A_5(\lam+0)-A_5(\lam-0))\tau_5]~d\lam\\
\nonumber
\!\!&+&\!\!\frac 1{2\pi i}\int_{{\cal C}_-}
[2(A_2(\lam+0)-A_2(\lam-0))\tau_2+
2(A_4(\lam+0)-A_4(\lam-0))\tau_4]~d\lam.
\eeqn
Let us consider only the integral over ${\cal C}_+$;
the integral over ${\cal C}_-$ can be
dealt with by an identical argument. For $\lam\in{\cal C}_+$ we have:
$k_+$ is real, and $k_+(\lam+0)=-k_+(\lam-0)$
while $k_-$ is pure imaginary with
$\rIm k_->0$ and $k_-(\lam+0)=k_-(\lam-0)$.\\
Since $A_5(\lam,x,y)$ for $\lam\in{\cal C}_+$ exponentially
decay if $|x|,\,|y|\to\infty$ and smallest exponential rate of the
decaying is equal to $(2\omega)^{1/2}$, then the bound (\ref{Proj})
for the integral over ${\cal C}_+$ with integrand $A_5(\lam+0)-A_5(\lam-0)$
is evident.
It remains to consider the integral with $A_3(\lam+0)-A_3(\lam-0)$.
We change variable:
$\zeta=\sqrt{-\omega-i\lam}$ for the first summand and
$\zeta=-\sqrt{-\omega-i\lam}$ for the second. Then we get
$$
\int_{{\cal C}_+}(A_3(\lam+0)-A_3(\lam-0))~d\lam
=-\beta\int_{-\infty}^{+\infty}\frac{e^{-\sqrt{2\omega+\zeta^2}|y|+i\zeta |x|}}
{D(\zeta)}~\zeta d\zeta
$$
where
$$
D(\zeta)=\al^2-\beta^2 +2i\al\zeta-2\al\sqrt{2\omega+\zeta^2}
-4i\zeta\sqrt{2\omega+\zeta^2}.
$$ 
Writing
$e^{i\zeta|x|}~d\zeta=de^{i\zeta|x|}/i|x|$, and
integrating by parts, we obtain that
$$
|\int_{-\infty}^{+\infty}\frac{e^{-\sqrt{2\omega+\zeta^2}|y|+i\zeta |x|}}
{D(\zeta)}~\zeta d\zeta|\le Ce^{-\sqrt{2\omega}|y|}(1+|x|)^{-m},\;\forall m\in {\bf N}
$$
completing the proof of lemma \ref{rem2}.
\section{Fermi Golden Rule}
\setcounter{equation}{0}
\label{Fermi}
In this section we show that  condition (\ref{FGR}) holds generically in a certain
sense: in particular, if $a(\cdot)$ is a polynomial function then, generically,
the set of values of $C$ for which (\ref{FGR}) fails is isolated. 
By (\ref{tau-sol1})
\be\la{t}
\tau_{+}(2i\mu)\mid_{x=0}=(\ov D-D)v_{+}+4\beta ik_{+}v_{-}
 =-4ik_{+}(\al+2ik_{-})v_{+}+4\beta ik_{+}v_{-}=\sigma(\kappa v_{+}+v_{-})
\ee
where $\sigma=4\beta ik_+$, $\kappa=-(\al+2ik_{-})/\beta$,
$k_{\pm}=\sqrt{-\om\pm 2\mu}$.
Further, $E_2[u,u]=\delta(x)\tilde E_2[u(0),u(0)]$, where
$$
\tilde
E_2[u(0),u(0)]=a'(C^2)(u(0),u(0))\Psi(0)+2a''(C^2)(\Psi(0),u(0))^2\Psi(0)
+2a'(C^2)(\Psi(0),u(0))u(0)
$$
By (\ref{u-def}),
$u(0)=\rho v_{+}+v_{-}$, where $\rho=-(2ik_{-}+\al)/\beta$, $k_-=\sqrt{-\om-\mu}$.
Therefore,
$(u(0),u(0))=(\rho+1)^2-(\rho-1)^2=4\rho$ and then
$$
\tilde E_2[u(0),u(0)]=a'(C^2)4\rho\left(\ba{c} C\\0\ea\right)
+2a''(C^2)C^2(\rho+1)^2\left(\ba{c} C\\0
\ea\right)
+2a'(C^2)C(\rho+1)\left(\ba{c} \rho+1\\i(\rho-1)
\ea\right)
$$
The last equality and  (\ref{t}) imply
$$
\langle\tau_{+}(2i\mu), E_2[u,u]\rangle=\sigma (\kappa+1)
\Big[a'4\rho C+2a''C^3(\rho+1)^2+2a'C(\rho+1)^2\Big] +\sigma
(\kappa-1)2a'C(\rho^2-1)
$$
Hence, (\ref{FGR}) is equivalent to the condition
$a''\not=-\ds\frac{2a'(2\kappa\rho+2\rho+\kappa\rho^2+1)}{C^2(\kappa+1)(1+\rho)^2}$
with
$$
\kappa=-\frac{a+\frac b2-2\sqrt{\om+2\mu}}{b/2},\quad
\rho=-\frac{a+\frac b2-2\sqrt{\om+\mu}}{b/2},\quad
\om=\frac{a^2}4,\quad
 \mu=\frac b4\sqrt{a^2-\frac{b^2}4}
$$
Set $\nu =\ds\frac{b}{2a}$. Then 
$\kappa=-\ds\frac{1+\nu-\sqrt{1+4\nu\sqrt{1-\nu^2}}}{\nu}$ and
$\rho=-\ds\frac{1+\nu-\sqrt{1+2\nu\sqrt{1-\nu^2}}}{\nu}$
are  functions of $\nu$ only. Thus we conclude that
there is a function $c=c(\nu)$ such that (\ref{FGR}) holds if and only if
$$
a''(C^2)\not=c(\nu)a'(C^2)/C^2,\quad\nu=a'(C^2)C^2/a(C^2).
$$
The function $c(\nu)$ is algebraic.
Hence, if the function $a(\cdot)$ is polynomial, or even
real analytic, then generically (\ref{FGR}) holds except possibly
at a discrete set of values of $C$. 

\end{document}